\documentclass[12pt,onecolumn,letter]{IEEEtran}

\usepackage{latexsym}
\usepackage{amssymb}
\usepackage{amsmath}
\usepackage{multirow}
\usepackage{footnote}
\usepackage{color}
\usepackage{bm}
\usepackage[dvips]{graphicx}
\usepackage{multirow}

\usepackage{algorithm}
\usepackage{algorithmic}

%This version was revised based on Han's comment.
\newtheorem{thm}    {Theorem}%[section]
\newtheorem{lem}[thm]{Lemma}%[section]
\newtheorem{cor}[thm]{Corollary}%[section]
\newtheorem{proposition}[thm]{Proposition}%[section]
%[section]
%[section]56
%\newtheorem{define}     {definxition}%[section]
%\newtheorem{define}     {definition}%[section]
%[section]
%[section]

 \newenvironment{proofof}[1]{\vspace*{5mm} \par \noindent
         \quad{\it Proof of #1:\hspace{2mm}}}{\qed
%\hfill$\Box$ \vspace*{3mm}
}

\newcommand{\qed}{\hfill \IEEEQED}
\newenvironment{proof}{%
\noindent{\em Proof.\ }}{%
\hspace*{\fill}\qed \\
\vspace{2ex}}

\def\Pr{{\rm Pr}}

%\red{example is required}
%\newcommand{\red}{\textcolor{red}}

\newcommand{\mc}{-\!\!\!\!\circ\!\!\!\!-}

\def\CD{\mathsf{C}}

\def\argmax{\mathop{\rm argmax}}
\def\argmin{\mathop{\rm argmin}}

\newcommand{\bF}{\mathbb{F}}

\makeatletter
\newcommand{\lleq}{\mathrel{\mathpalette\gl@align<}}
\newcommand{\ggeq}{\mathrel{\mathpalette\gl@align>}}
\newcommand{\gl@align}[2]{%\lower.1ex
\vbox{\baselineskip\z@skip\lineskip\z@
\ialign{$\m@th#1\hfil##\hfil$\crcr#2\crcr{}_{{}_{(=)}}\crcr}}}
\makeatother

\def\sgn{\mathop{\rm sgn}\nolimits}

% \newcommand{\En}{E^{(n)}}

%double space
%\renewcommand{\baselinestretch}{1.66}

\def\Label#1{\label{#1}\ [\ \text{#1}\ ]\ }
\def\Label{\label}

\begin{document}
%\date{\today}
\title{Quantum-inspired secure wireless communication protocol 
under spatial and local Gaussian noise assumptions}
\author{Masahito Hayashi~\IEEEmembership{Fellow,~IEEE}
\thanks{The material in this paper was presented in part at the 2017 IEEE International Symposium on Information Theory (ISIT 2017),   Aachen (Germany), 25-30 June 2017.}
\thanks{Masahito Hayashi   is with the Graduate School of Mathematics, Nagoya University,
Furocho, Chikusaku, Nagoya, 464-860, Japan,
Shenzhen Institute for Quantum Science and Engineering, Southern University of Science and Technology,
Shenzhen, 518055, China
and
Centre for Quantum Technologies, National University of Singapore, 3 Science Drive 2, Singapore 117542.
(e-mail: masahito@math.nagoya-u.ac.jp)}
}

\markboth{M. Hayashi: Secure wireless communication under spatial and local Gaussian noise assumptions}{}

\maketitle

\begin{abstract}
Inspired from quantum key distribution, 
we consider wireless communication between Alice and Bob 
when the intermediate space between Alice and Bob is controlled by Eve. 
That is, our model divides the channel noise into two parts,
the noise generated during the transmission and the noise generated in the detector.
Eve is allowed to control the former, but is not allowed to do the latter.
While the latter is assumed to be a Gaussian random variable,
the former is not assumed to be a Gaussian random variable.
In this situation, using backward reconciliation and the random sampling, 
we propose a protocol to generate secure keys between Alice and Bob under the assumption that Eve's detector has a Gaussian noise and Eve is out of Alice's neighborhood.
In our protocol, the security criteria are quantitatively guaranteed even with finite block-length code
based on the evaluation of error of the estimation of channel.

%when Alice and Bob do not detect the existence of eavesdropping.
\end{abstract}

\begin{IEEEkeywords}
secret key generation,
reverse reconciliation,
post selection, 
noise injection,
wireless communication
\end{IEEEkeywords}

\section{Introduction}
%\subsection{Purpose of our protocol}
Recently, secure wireless communication attracts much attention as a practical method to realize physical layer security \cite{LH,LPS,LP,YPS,BB11,BBRM,special,BHT,OH,SC,Trappe,Zeng,WX}.
In particular, wire-tap channel model \cite{Wyner,CK79,Csiszar,hay-wire} is considered as a typical model for physical layer security.
In the wire-tap channel model,
the authorized sender, Alice is willing to transmit her message to the authorized receiver, Bob without any information leakage to the adversary, Eve.
In this case, we usually assume that 
the noise in the channel to Eve is larger than that in the channel to Bob.
However, it is not easy to guarantee this assumption under the real wireless communication.
In cryptography, it is usual to consider that
the adversary, Eve is more powerful than the authorized users, Alice and Bob in some sense like RSA cryptography \cite{RSA}.
However, the above wire-tap channel requires the opposite assumption.
So, it does not necessarily have sufficient powers of conviction to assume the above wire-tap channel in real wireless communication.

Instead of wire-tap channel model, we often employ secure key agreement, in which, Alice and Bob generate
the agreed secure key from their own correlated random variables \cite{Mau93,AC93}.
This problem has a similar problem when they generate secure keys via one-way communication from Alice to Bob, 
because they need to assume that the mutual information between 
Alice and Bob is larger than that between Alice and Eve.
Further, although there exist proposals to generate secure key from wireless communication \cite{LLP,YMRSTM,WS,PJCG,CJ},
they do not give a quantitative security evaluation for the final keys under a reasonable assumption advantageous to Eve in a finite-length setting.
%In particular, several papers \cite{GN,GN2,ZM,LHV} considered the case when Bob controls the noise in Eve's observation.
%However, they did not discuss the possibility that the noise in Bob's observation is controlled by Eve. 

On the other hand, many people are studying quantum key distribution (QKD)\cite{BB84}, which enables us to 
generate secure key without any assumption for Eve's performance
when Alice and Bob are allowed to use public channel and 
they do not detect the existence of Eve.
At least, even when Eve has much powerful performance than Alice and Bob, e.g., 
the intermediate space between Alice and Bob might be controlled by Eve,
Alice and Bob can generate secure keys.
Its security evaluation has been done even with finite-length including second order analysis
\cite{SP,Mayers,PRA06,TLGR,NJP12,NJP14}. 
In QKD, after the quantum communication from Alice to Bob, 
Alice and Bob check whether the secure keys can be distilled from random variables generated by 
the initial quantum communication via public channel.
Also, Fung \cite{Fung} et al proposed to use error verification to guarantee the reliability of the final keys
for QKD.
In addition, as pointed in several papers \cite{Lo,BBL,PRA07},
when Eve cannot access the noise in Bob's detector,
information reconciliation with backward can improve the asymptotic key generation rate.
In this way, in the context of QKD, they proposed several advanced methods to generate secure keys 
after the initial transmission. 
However, QKD requires more expensive devices even for Alice and Bob.
Hence, it is not so easy to implement QKD.
Therefore, it is required to propose an alternative secure-key generation protocol of quantum key distribution under a reasonable assumption by using cheaper devices.

In this paper, inspired from these advanced techniques in QKD, 
we propose a protocol to generate quantitatively secure keys between Alice and Bob under a reasonable assumption advantageous to Eve
when Alice and Bob do not detect the existence of Eve.
In this analysis, Eve is allowed to control the intermediate space between Alice and Bob,
however, she is not allowed to access the noise in Bob's detector in a similar way to the analysis in  \cite{Lo,BBL,PRA07}.
The biggest difference from the above analysis in QKD is the assumption that
Eve's detector has non-negligible noise, which cannot be accessed by Eve.
This assumption is stronger than that in QKD, 
but enables us to generate secure key without use of quantum communication.
%The purpose of this paper is to propose a protocol to generate quantitatively secure keys between Alice and Bob under a reasonable assumption advantageous to Eve.
%Since quantum key distribution \cite{BB84} aims such a purpose without any assumption except for physical laws,
%the protocol of this paper can be regarded as an alternative of quantum key distribution under a reasonable assumption by using cheaper devices.
Since quantum key distribution assumes public channel, Alice and Bob are allowed to use public channel in 
the first step of this paper.
However, Eve might override the signals to Bob or the public channel for spoofing \cite{SC}.
We explain a method to avoid such attack, which requires shared secret randomness with small size.
Further, for efficient realization of the protocol, we additionally impose the following requirements.

\begin{description}
\item[(R1)]
The security of final keys is guaranteed quantitatively
based on acceptable criterion even for cryptography community 
(e.g. the variational distance criterion \cite{R-K} or the modified mutual information criterion)
even though Eve takes the optimal strategy under the above assumption.
Additionally, the formula to derive the security evaluation has sufficiently small calculation complexity.

\item[(R2)]
The calculation complexity of the whole protocol (Protocol 1 given in Section \ref{s31}) is sufficiently small.
%In particular, no continuous variable is transmitted via public communication.
\end{description}

This paper is organized as follows.
Firstly, 
we rigorously explain our purpose and our assumption in Section \ref{s01}.
Then, we compare our formulation with existing jamming attacks in Section \ref{s02}.
As the next step, before proceeding to our protocol, we discuss the mathematical structure of our model in Section \ref{s0}.
In Section \ref{s3}, we give our concrete secure protocol
by assuming the public channel.
In particular, the end of Subsection \ref{NHS}, we briefly explain the solution for spoofing.
Section \ref{s33} analyzes the security of the given protocol,
and numerically evaluates the security in a typical case.
Section \ref{s7} is devoted to two kinds of extensions, multiple antenna attack and complex number case.
Section \ref{s21} consider the relation of a model with interference to the additional noise of the eavesdropper channel so that we clarify how our model contain such a interference channel.
Section \ref{s5} gives proofs of statements given above.

\section{Purpose and assumptions}\Label{s01}
Recall that the aim of this paper 
is to propose a protocol to generate quantitatively secure keys between Alice and Bob under a reasonable assumption advantageous to Eve.
Here, our aim is not to always generate secure keys, but is to detect the existence of eavesdropping with high probability when it exists.
That is, when they consider there is no eavesdropper,
their keys are required to be matched and secret.
In other word, it is required to discard their keys when an eavesdropper exists.
Here, the case without eavesdropper means the case when 
the operation of the eavesdropper cannot be distinguished from the natural phenomena. 
So, the natural case, i.e., the case with the natural phenomena,
is very important in our analysis.

In the real setting, it is difficult to identify where Eve attacks the communication between Alice and Bob except for Alice's neighborhood and Bob's detector.
To guarantee the security of the final keys in such a setting,
it is natural to assume the following conditions when Alice sends the $i$-th signal $A_i$.

\begin{description}
\item[(A1)]
The intermediate space between Alice and Bob might be controlled by Eve
while Eve's operation is restricted to satisfy the following conditions.
That is, the information $Y_i$ can be injected by Eve as Fig. \ref{F3} so that
Eve knows the noise $Y_i+ e_B$ added to Bob's detection during transmission in the intermediate space.
Here, we choose the variable $Y_i$ such that its average is $0$.
Hence, the average of the noise is $e_B$.
$e_B$ is independent of $i$ to due to Assumption (A4).

\item[(A2)]
When Bob and Eve detect the $i$-th receiving signal $B_i$ and $E_i$,
independent Gaussian noises $b_B X_{1,i}$ and $b_E X_{2,i}$ are added, respectively,
where $X_{1,i}$ and $X_{2,i}$ are subject to the stranded Gaussian distribution
and independent of $X_{1,i'}$ and $X_{2,i'}$ for $i\neq i'$.
Alice and Bob know the lower bounds of the powers $b_B$ and $b_E$ of their noise.
This assumption is called the {\it local Gaussian noise assumption}.
Since no detector has no detection noise, this assumption is reasonable.
Nobody can control these noises.

\item[(A3)]
Alice and Bob know the lower bound of 
the attenuation $a_E$ for Alice's signal in Eve's detection.
When Eve is out of Alice's neighborhood, this condition holds.
This assumption is called the {\it spatial assumption} for Eve.

\item[(A4)]
The wireless communication between Alice and Bob is {\it quasi static}.
That is, the channel between Alice and Bob is almost constant 
during a specific time interval so called the coherent time \cite[Section 5.4.1]{TV}.
In other words, during the coherent time, the noise can be considered to be independently and identically distributed and to be independent of other variables. 
Also, the attenuation $a_B$ for Alice's signal in Bob's detection.
and the attenuation $a_E$ for Alice's signal in Eve's detection
can be considered to be constants.
It means that Eve does not has ability to change the added nose dependently of the signal 
transmitted by Alice\footnote{This assumption means that the noise added by Eve cannot be adaptively controlled. 
This assumption is natural because such an adaptive noise operation requires much advanced technology.
In fact, in the early stage of studies of QKD, 
they assume that the errors is subject to an identical and independent distribution.
The attack under this condition is called the collective attack in the QKD \cite{BBBGM}.
Hence, it is natural to assume this kind of assumption 
at the first paper of our setting.}.
Therefore, 
the variable $Y_i$ is independent of $A_i$ and  the distribution of $Y_i$ does not depend on $i$.
Also, the average $e_B$ is independent of $i$ and $A_i$.
We also assume that we can send one block of our protocol during the coherent time\footnote{In various protocols, 
a set of pulses or bits treated as one block is called a coding block.
The number of such pulses or bits is called a block length. 
For example, in RSA cryptography, since the arithmetic is based on the public composite $m$, $\log m$ is a  block length.
Our protocol is composed of an error correcting code like an LDPC code.
Since the block length of an LDPC code is from 10000 to 100000,
the block length of our protocol is from 10000 to 100000 when we employ a LDPC code.}.
In the natural case, the noise in the channel from Alice to Bob is a Gaussian noise.

\item[(A5)]
Noiseless public channel between Bob and Alice is assumed.
In practice, it can be realized by a combination of error correcting code and 
noisy wireless channel between Bob and  Alice.
This assumption can be confirmed by authentication as explained in Subsection \ref{NHS}. 
\end{description}

In summary, when we have $m$ transmissions from Alice to Bob, 
there is the following relation among 
Alice's $i$-th sending real variable $A_i$,
Bob's $i$-th receiving real variable $B_i$, and
Eve's $i$-th receiving real variable $E_i$.
\begin{align}
B_i&:= {a_B} A_i+ Y_i+ {b_B} X_{1,i}+ e_B, \Label{4-24-1X} \\
E_i&:= {a_E} A_i+ {b_E} X_{2,i}+ e_E. \Label{4-24-2X}
\end{align}
Here, the coefficients $a_B,b_B$, $e_B$, $a_E$, $b_E$, and $e_E$ are constants 
with physical meaning as Table \ref{table1}. 
To discuss the situation advantageous to Eve,
we assume that Eve's detection has no noise except for the noise inside of her detector as Eq. \eqref{4-24-2X}.
Even though we put $e_E$ to be $0$, there is no information loss.
So, we consider only the case when $e_E$ is $0$ for simplicity.
Also, due to Assumption (A1), Eve knows the value of $Y$ as well as $E$.
Then, additionally we assume the following assumption.
\begin{description}
\item[(A6)]
Eve knows 
all the channel parameters, which are given in \eqref{4-24-1X} and \eqref{4-24-2X}
as well as the value of $Y_i$
%Also, she knows the form of the distribution of $Y$.
\end{description}

%Here, the meaning of Assumption (A5) is the following.
%Our purpose is guaranteeing the security in the natural case.
%Hence, it is enough to guarantee the security when 
%$X_1$ is sufficiently close to the standard Gaussian random variable.

\begin{table*}[htb]
\begin{center}
\caption{Summary of parameters.
$c_{AB}$ is covariance between $A$ and $B$.
$v_B$ is variance of $B$.
$v_Y$ is variance of $Y$.
These parameters are known to be Eve.}
%\begin{center}
\begin{tabular}{|l||l|l|l|l|} \hline
\multirow{2}{*}{Coefficient}   & \multirow{2}{*}{Meaning}  & 
Long time & Treatment in &\multirow{2}{*}{Estimation method}\\
& & period behavior & this paper & \\ \hline \hline
\multirow{2}{*}{$a_B$} & \multirow{2}{*}{Attenuation} & \multirow{2}{*}{Stochastic} & To be estimated & \multirow{2}{*}{$c_{AB}$}\\
&&& by sampling &\\ \hline
\multirow{3}{*}{$a_E$} & \multirow{3}{*}{Attenuation} & \multirow{3}{*}{Stochastic} & Constant (upper & Distance between\\
&&& bound among & Alice and Eve \\
&&& possible values) & (Ass.(A3)) \\ \hline
\multirow{2}{*}{$v_Y$} & Noise amplitude & \multirow{2}{*}{Stochastic} &
To be estimated & \multirow{2}{*}{$v_B-c_{AB}^2-b_B^2$} \\
& during transmission & & by sampling &\\ \hline
\multirow{2}{*}{$b_B$} & Bob's detector  &
\multirow{2}{*}{Constant} & \multirow{2}{*}{Constant} & Performance of Bob's \\
& noise amplitude & & & detector (Ass.(A2)) \\ \hline
\multirow{2}{*}{$b_E$} & Eve's detector   & \multirow{2}{*}{Constant} & \multirow{2}{*}{Constant} & Performance of Eve's\\
& noise amplitude & && detector (Ass.(A2)) \\
\hline
\end{tabular}
\Label{table1}
\end{center}
\end{table*}

\begin{figure}[htbp]
\begin{center}
\scalebox{0.7}{\includegraphics[scale=0.6]{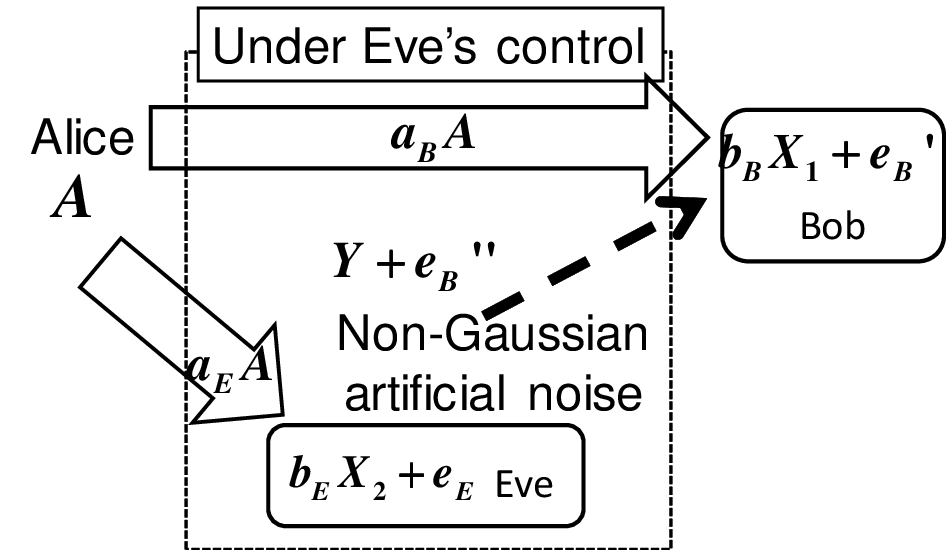}}
%\scalebox{0.7}{\includegraphics[scale=0.6]{}}
\end{center}
\caption{Eve injects artificial noise to Bob's observation.
$e_B=e_B'+e_B''$.}
\Label{F3}
\end{figure}%

Since the information $Y_i$ can be injected by Eve as Fig. \ref{F3},
the attack under Assumptions (A1)-(A6) is called {\it noise injecting attack}.
When Eve is closer to Alice than Bob and 
the performance of Eve's detector is the same as that of Bob's,
the signal-noise ratio of Eve is not smaller than that of Bob so that
secure communication by one-way wire-tap channel is impossible.
We discuss this problem in Section \ref{MDO}.

%Since there is a possibility of secure communication with backward reconciliation,
To overcome this problem, this paper considers two-way protocol like QKD.
In the two-way protocol, the performance depends on the direction of 
information reconciliation (error correction).
When we employ the forward reconciliation,
the performance is the same as the case with one-way wire-tap channel,
as explained in Section \ref{MDO},
then, it cannot realize secure communication in the above case.
Our protocol employs the reverse reconciliation 
after the above information transmission, as will be given in Section \ref{s3}.

Here, we discuss the meaning of coefficients $a_B$ and $a_E$ more deeply.
The coefficients $a_B$ and $a_E$ express the attenuation.
The intensities $a_E^2$ and $a_B^2$ behave as $C d^{-\alpha}$ with positive constants $C$ and $\alpha$
when the distance from Alice's transmitting antenna is $d$,
and have stochastic behavior as fading in long time span\cite[Section 5.4.1]{TV}. 
For example, the free space with no obstacle has the constant $\alpha=2$\cite{TV}.
Due to {\it spatial assumption} (Assumption (A3)), 
Eve's detector is sufficiently far from Alice's transmitting antenna.
%Alice find no suspicious receiving antenna within visual confirmation. 
So, the relation $d\ \ge d_0$ holds with a certain constant $d_0$
Under this assumption, we can guarantee that $a_E^2 \le C d_0^{-\alpha}$.

On the other hand, the coefficients $b_B$ and $b_E$ can be lower bounded by 
the performance of their detectors due to Assumption (A2).
As explain in Section \ref{s31}, our protocol contains random sampling.
Hence, the coefficient $a_B $ can be estimated as covariance $c_{AB}$ between $A$ and $B$
in the random sampling,
which provides a better estimate than the method based spatial relation between Alice and Bob.
Thus, the meaning of these parameters can be summarized in Table \ref{table1},
while the parameter $v_Y$ will be introduced in Section \ref{s0}.
Some of these parameters are estimated from 
covariance $c_{AB}$ between $A$ and $B$,
variance $v_B$ of $B$, and variance $v_Y$ of $Y$.

%The purpose of this paper is proposing a protocol to generate secure keys from wireless communication satisfying the above conditions when our channel is quasi static \cite[Section 5.4.1]{TV}.

\section{Comparison with previous papers}
\subsection{Comparison with other attacks}\Label{s0BA}
Most of existing studies for secure wireless communication were done in the context of 
wire-tap channel.
First, Wyner proposed the model of wire-tap channel \cite{Wyner}.
Then, Csisz\'{a}r and K\''{o}rner extended the model to 
the broadcast channel with confidential messages (BCC) \cite{CK79}, in which
the source node also has a common message
for both receivers in addition to the confidential message for only one receiver. 
Also, Leung-Yan-Cheong and Hellman applied the wire-tap channel to the Gaussian channels \cite{LH}.
The recent paper \cite[Appendix D-C]{HM} showed the strong security in this model.
Liang et al extended these analyses to fading channel \cite{LPS}. 
Then, many preceding papers \cite{LPS,LP,YPS,BB11,BBRM,special,BHT,OH,SC,Trappe,Zeng,WX} studied physical layer security by using wire-tap model.
That is, most of studies for physical layer security in the community of information theory fall in this framework.
However, if we adopt wire-tap model, in order to realize secure communication, we need to assume that 
the mutual information between Alice and Bob is greater than 
the mutual information between Alice and Eve.
However, it is usual that Eve is closer to Alice than Bob.
Hence, it is unnatural to assume such a assumption.

In order to remove this strong assumption, we employ a two-way protocol,
in which, after receiving the signal from Alice, Bob sends a modified information to Alice.
We make detailed comparison between our method and 
the one-way case like the wire-tap channel in Section \ref{MDO}.

In the one-way case, 
it is sufficient to discuss the relation between the above two types of mutual information.
That is, even when Eve makes various types of attacks, 
the security analysis are reduced to 
the analysis to the two channels, the channel from Alice to Bob and the channel from Alice to Eve.
This characterization holds even when both channels have memory
because asymptotic capacity formula was shown only with the form of these two channels
for a general  sequence of channels \cite{H2015}.
However, when we employ a two-way protocol,
we need to be careful to the correlation between the added noise in the channels from Alice to Bob's
and Eve's signals.
Therefore, we compare our model with other attacks in the next subsection.

\subsection{Comparison with other attacks}\Label{s02}
Here, we compare our model with the elementary jamming attack \cite{GLY}.
In the elementary jamming attack, Eve inserts her artificial noise to Bob's detection. However, she does not know the value of the added noise. 
The purpose of jamming attack is to interrupt the communication between Alice and Bob, and is not to eavesdrop the secret information between Alice and Bob.
Hence, in the jamming attack, Eve makes the artificial noise so large that 
the error correction by Alice and Bob does not work.
That is, the jamming attack might make their final keys mismatched.
Since our protocol contains the the process of estimation of channel parameters, 
when Eve makes the elementary jamming attack,
Alice and Bob can detect such a large noise, i.e., the existence of the elementary jamming attack. 

In our noise injecting attack model, Eve is allowed to know the value of the added noise. 
Hence, even when the artificial noise is as small as the natural case,
she might obtain a part of information of the final keys.
Hence, it is difficult for Alice and Bob to find the existence of the noise injecting attack.
That is, Eve in our noise injecting attack model is more powerful than Eve in jamming attack.  
In such a scenario, Alice and Bob need to prepare their protocol so that their information transmission is secure against the most powerful Eve within the scope of their assumption.

Now, we compare our model with channel-hopping jamming attack.
In channel-hopping jamming attack, Eve overrides the signal from Alice to Bob for spoofing \cite{Zeng,SC,AS}.
However, such an attack can be prevented by the authentication between Alice and Bob.
That is, our protocol is secure even against channel-hopping jamming attack
by equipping authentication.

Next, we explain how our assumption covers the case when Eve can change her strategy dynamically.
Alice and Bob can choose the detailed parameter of the secure key distillation protocol (Protocol 2 given in Section \ref{s31}) 
depending on the coding block because our secure key distillation protocol will be done as post processing.
Assumption (A4) means that Eve cannot change her strategy during the coherent time interval.
That is, she can change her strategy only in the next coherent time interval.
Since we estimate channel parameters for each coherent time interval, 
our protocol properly reflects such a dynamical change.
In this way, our assumption is more general than existing attacks and covers various types of attacks.

\section{Mathematical structure}\Label{s0}
\subsection{General case}
Before proceeding to our protocol, 
we discuss the mathematical structure of our model
when Alice independently generates her random variables $A_i$ subject to the standard Gaussian distribution. 
That is, we discuss how to simplify Eve's knowledge for $B$ in this setting.
Then, we discuss how to estimate the distribution of the variable describing Eve's knowledge for $B$.
For this aim, since the variables $A_i$, $Y_i$, $X_{1,i}$, and $X_{2,i}$ are independent,
we consider the single-system description as follows.
\begin{align}
B&:= {a_B} A+ Y+ {b_B} X_{1}+ e_B, \Label{4-24-1G} \\
E&:= {a_E} A+ {b_E} X_{2}. \Label{4-24-2G}
\end{align}
In this section, we assume the models \eqref{4-24-1G} and \eqref{4-24-2G}, in which
the random variables $A$, $X_1$, and $X_2$ are independent standard Gaussian random variables
and $Y$ is an independent variable with average $0$.
In this subsection, we introduce variables $E',E'',A^c,U$.
When we employ the multiple-system description like \eqref{4-24-1X} and \eqref{4-24-2X},
these variables with $i$-th transmission are written
to be $E_i',E_i'',A^c_i,U_i$.
The variables are summarized in Table \ref{TF3}.

When we discuss the cumulative distribution function of a real valued variable $X$, 
we denote it by $F_X$ and treat it. % because it has no probability density function in general.
In fact, under the condition $Y=y$, we denote the cumulant distribution function of $X$ by
$F_{X|Y=y}$.
In the following discussion, $\Phi_v$ expresses the cumulant distribution function of Gaussian variable with variance $v$.
%$\Phi_1$ is simplified to $\Phi$. 
Firstly, we prepare the following theorem.

\begin{thm}\Label{T2X}
Using the variable
\begin{align}
E':= \frac{a_B a_E}{a_E^2+b_E^2}E+ Y, \Label{Edash}
\end{align}
we have the following relation
\begin{align}
P_{B|(E,Y)=(e,y)}=P_{B|E'=e'},
\end{align}
where 
$e'=\frac{a_B a_E}{a_E^2+b_E^2}e+ y$,
and $P_{X|E'=e'}$ is the conditional distribution for $X$ when $E'$ is $e'$.
Also, $P_{B|E'=e'}$ is the Gaussian distribution
with average $e'+e_B$
and variance $
v_{B|E'}:=\frac{a_B^2 b_E^2}{a_E^2+b_E^2}+b_B^2 $.
\end{thm}

This theorem implies that the noise injecting attack can be reduced to 
the attack only with the random variable $E'$.

\begin{proof}
We introduce the random variable 
\begin{align}
U:= b_E A- a_E X_2, \Label{variU}
\end{align}
which is a Gaussian random variable independent of $Y$ and $X_1$.
Also, since the covariance between $U$ and $E$ is zero,
$U$ is independent of $E$.
The variance of $U$ is $b_E^2+ a_E^2$. 
Since $A= \frac{a_E}{a_E^2+b_E^2}E+ \frac{b_E}{a_E^2+b_E^2}U$,
we have
\begin{align}
B= {a_B} A+ Y + {b_B} X_1+e_B
=\frac{a_B a_E}{a_E^2+b_E^2}E+ Y + \frac{a_B b_E}{a_E^2+b_E^2}U
+ {b_B} X_1+e_B.
\Label{CPG}
\end{align}
Since 
$\frac{a_B b_E}{a_E^2+b_E^2}U+ {b_B} X_1$ is a Gaussian random variable with
variance $\frac{a_B^2 b_E^2}{a_E^2+b_E^2}+b_B^2$
and is independent of $E$ and $Y$, we have
\begin{align}
F_{B|(E,Y)=(e,y)}(\frac{a_B a_E}{a_E^2+b_E^2}e+ y +t+ e_B)
=
\Phi_{\frac{a_B^2 b_E^2}{a_E^2+b_E^2}+b_B^2}(t)
=F_{B|E'=e'}(e' +t+ e_B),
\end{align}
%our knowledge for $B$ is summarized to $\frac{a_B a_E}{a_E^2+b_E^2}E + Y$ when we know $E$ and $Y$,
which implies the desired statement.
\end{proof}

To evaluate the amount of the information leaked to Eve, 
we need to estimate the distribution $P_{E'}$ of the random variable $E'$
from the variables $A$ and $B$ accessible to Alice and Bob.
For this aim, we introduce the random variable 
\begin{align}
A^{c}:= Y+{b_B} X_1= B-a_B A -e_B,\Label{Ac}
\end{align}
which is the total noise in $B$ and can be directly estimated by Alice and Bob in the protocol given in 
Section \ref{s3}.
That is, we discuss how to derive the distribution $P_{E'}$ from 
the distribution $P_{A^c}$ of $A^c$.
For this discussion, we employ the 
convolution $F_1 * F_2$ for given two distribution functions $F_1$ and $F_2$, 
which is defined $F_1 * F_2$ as $F_1 * F_2(x):= \frac{d F_1}{dx}(x-y) F_2(y) dy $
when $F_1$ is differentiable.
When $F_2$ is differentiable, the convolution $F_1 * F_2$ is defined as $F_2 * F_1$.
In particular, we utilize Gaussian convolution ${\cal G}_{v}$ defined as
\begin{align}
{\cal G}_{v}[F]:=  \Phi_{v}*F. \Label{IG} 
\end{align}
Since $Y$ is independent of other variables,
the distributions of $P_{A^c}$ and $P_{E'}$ of $A^c$ and $E'$ are given as 
\begin{align}
F_{A^c}={\cal G}_{b_B^2}[F_Y],\quad
F_{E'}={\cal G}_{\frac{a_B^2 a_E^2}{a_E^2+b_E^2}}[F_Y].
\end{align}
%where the Gaussian convolution ${\cal G}_v$ with variance $v$ is given as
%\begin{align}{\cal G}_v[f]:=\int_{-\infty}^\infty g_v(t-x) f(t) dt.\Label{IG} \end{align}
When $ \frac{a_B^2 a_E^2}{a_E^2+b_E^2} \ge b_B^2 $,
we can estimate the distribution $P_{E'}$ by applying 
the Gaussian convolution to the distribution $P_{A^c}$ as 
\begin{align}
F_{E'}={\cal G}_{\frac{a_B^2 a_E^2}{a_E^2+b_E^2 }- b_B^2}[F_{A^c}].
\end{align}

However, when 
\begin{align} 
\frac{a_B^2 a_E^2}{a_E^2+b_E^2} < b_B^2 ,
\Label{OT1}
\end{align} 
we cannot apply this method.
Indeed, we can estimate the distribution $P_{E'}$ by applying the Gaussian deconvolution,
which is the inverse operation of the Gaussian convolution \eqref{IG}.
But, it is quite difficult to estimate the amount of the error of our estimate of 
the distribution $P_{E'}$ when we employ the Gaussian deconvolution.
Since our evaluation of the amount of leaked information requires the evaluation of the amount of error in the estimation of the distribution,
we employ the distribution $P_{A^c}$ instead of the distribution $P_{E'}$ as follows.
In this case, we introduce two independent standard Gaussian random variables $Z_1$ and $Z_2$ 
instead of $X_1$ such that $ \sqrt{b_B^2- \frac{a_B^2 a_E^2}{a_E^2+b_E^2}}Z_1
+\frac{a_B a_E}{\sqrt{a_E^2+b_E^2}}Z_2= b_B X_1$.
Hence, instead of \eqref{CPG},
we have 
\begin{align}
B =\frac{a_B a_E}{a_E^2+b_E^2}E+ Y 
+ \sqrt{b_B^2- \frac{a_B^2 a_E^2}{a_E^2+b_E^2}}Z_1
+ \frac{a_B b_E}{a_E^2+b_E^2}U
+\frac{a_B a_E}{\sqrt{a_E^2+b_E^2}}Z_2+e_B.\Label{HUI}
\end{align} 
%whose security is equivalent to the original model.
Since $\frac{a_B a_E}{a_E^2+b_E^2}E+\sqrt{b_B^2- \frac{a_B^2 a_E^2}{a_E^2+b_E^2}}Z_1$
is a Gaussian variable with variance $b_B^2$ and is independent of $Y$, 
the variable
\begin{align}
E'':=\frac{a_B a_E}{a_E^2+b_E^2}E+ Y + \sqrt{b_B^2- \frac{a_B^2 a_E^2}{a_E^2+b_E^2}}Z_1
\Label{Eddash}
\end{align}
satisfies $F_{E''}={\cal G}_{b_B^2}[P_Y]=F_{A^c}$.
Fortunately, 
it is sufficient to know the cumulative distribution function $F_{E''}$ 
in this case instead of $F_{E'}$
because 
$E''$ is more informative with respect to $B$ than $E'$
due to the Markovian chain 
\begin{align}E' \mc E'' \mc B,\Label{E16D}
\end{align}
which follows from \eqref{HUI}.
In fact, when Eve knows the random variable $Z_1$ as well as the random variables $E$ and $Y$,
$E''$ can be considered as Eve's knowledge with respect to $B$ due to the following corollary of
Theorem \ref{T2X}.

\begin{cor}\Label{C-T2X}
Assume \eqref{OT1}.
%$ \frac{a_B^2 a_E^2}{a_E^2+b_E^2} < b_B^2 $.
(Otherwise, we cannot define the variable $E''$.)
Then, we have the Markovian chain $E' \mc E'' \mc B$ and
\begin{align}
P_{B|(E,Y,Z_1)=(e,y,z_1)}=P_{B|E''=e''},\Label{11-3-1C}
\end{align} 
where 
$e'':=\frac{a_B a_E}{a_E^2+b_E^2}e+ y + \sqrt{b_B^2- \frac{a_B^2 a_E^2}{a_E^2+b_E^2}}z_1$.
Also, $P_{B|E''=e''}$ is the Gaussian distribution
with average $e''+e_B$ and variance $a_B^2 $.
\end{cor}

\begin{proof}
We use the same notations as Theorem \ref{T2X}.
Since 
$\frac{a_B b_E}{a_E^2+b_E^2}U
+\frac{a_B a_E}{\sqrt{a_E^2+b_E^2}}Z_2$ is a Gaussian random variable with
the variance $a_B^2$
and is independent of $E$, $Y$, and $Z_1$, 
\eqref{HUI} implies that
\begin{align}
F_{B|(E,Y,Z_1)=(e,y,z_1)}(\frac{a_B a_E}{a_E^2+b_E^2}e+ y+ \sqrt{b_B^2- \frac{a_B^2 a_E^2}{a_E^2+b_E^2}}z_1+ t+ e_B)
=\Phi_{a_B^2 }(t)
=F_{B|E''=e''}(e'' +t+ e_B),
\end{align}
%our knowledge for $B$ is summarized to $\frac{a_B a_E}{a_E^2+b_E^2}E + Y$ when we know $E$ and $Y$,
which implies all the desired statements.
\end{proof}

%Due to \eqref{11-3-1C}, the security in this model can be discussed by using the distribution $P_{E''}$.

%Since the distribution of ${b_B} X_1$ is the same as that of $\frac{a_B a_E}{a_E^2+b_E^2}E+\sqrt{b_B^2- \frac{a_B^2 a_E^2}{a_E^2+b_E^2}}Z_1$, the distribution of $P_{E''}$ equals to the distribution $P_{A^c}$. 
%In this way, we can evaluate the amount of leaked information based on the distribution $P_{A^c}$. 

\begin{table*}[htb]
\begin{center}
\caption{Summary of random variables.}
%\begin{center}
\begin{tabular}{|l||l|l|l|l|} \hline
\multirow{2}{*}{Variable}   & \multirow{2}{*}{Meaning}  & 
Gaussian/ & \multirow{2}{*}{Equation No.} & \multirow{2}{*}{Variance}\\
& & Non-Gaussian && \\ \hline \hline
\multirow{2}{*}{$A$} & Alice's sending & \multirow{2}{*}{Gaussian} 
& \multirow{2}{*}{\eqref{4-24-1G}, \eqref{4-24-2G}} 
& \multirow{2}{*}{$1$} \\
& variable && &\\ \hline
\multirow{2}{*}{$B$} & Bob's receiving & \multirow{2}{*}{Non-Gaussian} 
& \multirow{2}{*}{\eqref{4-24-1G}} & \multirow{2}{*}{$a_B^2+v_Y+b_B^2 $} 
\\
& variable && &\\ \hline
\multirow{2}{*}{$E$} & Eve's receiving & \multirow{2}{*}{Gaussian} 
& \multirow{2}{*}{\eqref{4-24-2G}} &\multirow{2}{*}{$a_E^2+b_E^2 $} \\
& variable && &\\ \hline
\multirow{2}{*}{$Y$} & Eve's receiving & \multirow{2}{*}{Non-Gaussian} 
& \multirow{2}{*}{\eqref{4-24-1G}} &\multirow{2}{*}{$v_Y $} \\
& variable && &\\ \hline
\multirow{2}{*}{$b_B X_1$} & Bob's detector & \multirow{2}{*}{Gaussian} 
& \multirow{2}{*}{\eqref{4-24-1G}} &\multirow{2}{*}{$b_B^2$} \\
& noise && &\\ \hline
\multirow{2}{*}{$b_E X_2$} & Eve's detector & \multirow{2}{*}{Gaussian} 
& \multirow{2}{*}{\eqref{4-24-2G}} &\multirow{2}{*}{$b_E^2$} \\
& noise && &\\ \hline
\multirow{2}{*}{$E'$} & Eve's information & \multirow{2}{*}{Non-Gaussian} 
& \multirow{2}{*}{\eqref{Edash}} &\multirow{2}{*}{$\frac{a_B^2 a_E^2}{a_E^2+ b_E^2}+v_Y$} \\
& in $B$ && &\\ \hline
\multirow{2}{*}{$E''$} & Eve's information & \multirow{2}{*}{Non-Gaussian} 
& \multirow{2}{*}{\eqref{Eddash}} &\multirow{2}{*}{$b_B^2+v_Y$} \\
& in $B $ in case \eqref{OT1} && &\\ \hline
\multirow{2}{*}{$A^c$} & Total noise & \multirow{2}{*}{Non-Gaussian} 
& \multirow{2}{*}{\eqref{Ac}} &\multirow{2}{*}{$v_Y+b_B^2$} \\
& in $B$ && &\\ \hline
\multirow{2}{*}{$U$} & Gaussian variable & \multirow{2}{*}{Gaussian} 
& \multirow{2}{*}{\eqref{variU}} &\multirow{2}{*}{$b_E^2+ a_E^2$} \\
& independent of $E$ && &\\ \hline
\end{tabular}
\Label{TF3}
\end{center}
\end{table*}

\subsection{Gaussian case}
Now, as a typical case, we assume that $Y$ is also a Gaussian random variable with variance $v_Y$
while we do not assume this assumption except for this subsection.
In this case, the possibility of secure key generation can be discussed by comparison of 
the correlation coefficient $\rho_A$ between $B$ and $A$ and the correlation coefficient $\rho_E$ between $B$ and $E$. 
That is, when $\rho_E^2< \rho_A^2$, we can distill secure keys from $A$ and $B$ with backward reconciliation.
By using variance $v_Y$ of $Y$,
the former correlation coefficient $\rho_A$ is calculated as
\begin{align}
\rho_A^2= 
\frac{a_B^2}{a_B^2+v_Y+b_B^2} .\Label{11-3-1}
\end{align}
Since the variance of $E'$ is $\frac{a_B^2 a_E^2}{a_E^2+b_E^2}+b_B^2 $
and the covariance between $E'$ and $B$ is $\frac{a_B^2 a_E^2}{a_E^2+b_E^2}+b_B^2  $,
by using covariance $c_{AB}$ between $A$ and $B$
and variance $v_B$ of $B$, 
the correlation coefficient $\rho_{E'}$ between $E'$ and $B$
can be calculated as
\begin{align}
\rho_{E'}^2
= & \frac{1}{a_B^2+v_Y+b_B^2} 
\Big(\frac{a_B^2 a_E^2}{a_E^2+b_E^2}+ v_Y \Big) \nonumber \\
= & \frac{1}{v_B}
\Big(\frac{c_{AB}^2 a_E^2}{a_E^2+b_E^2}+ v_B -c_{AB}^2-b_B^2 \Big) \nonumber \\
= & 
1-  \frac{c_{AB}^2}{v_B}
+\frac{1}{v_B}\Big(\frac{c_{AB}^2 a_E^2}{a_E^2+b_E^2}-b_B^2 \Big) .\Label{11-3-2}
\end{align}

When we find that the distribution $P_{E'}$ is sufficiently close to the Gaussian distribution,
the security can be approximately evaluated by the above formula.
That is, the comparison between \eqref{11-3-1} and \eqref{11-3-2} clarifies whether secure keys can be distilled.
Now, we have the following lemma.
\begin{lem}\Label{L1}
The inequality
$\rho_{E'}^2 < \rho_A^2$
holds if and only if
\begin{align}
\frac{a_B^2}{v_Y}> \frac{a_E^2}{b_E^2}+1 
\quad \Big(
\hbox{ i.e., }
v_Y< \frac{a_B^2 b_E^2}{a_E^2+b_E^2}
\Big)
\Label{4-24-3}.
%\Label{11-3-2}
\end{align}
\end{lem}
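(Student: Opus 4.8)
The plan is to derive \eqref{4-24-3} from the already-established closed forms for the two correlation coefficients by a short chain of equivalences. Recall from \eqref{11-3-1} that $\rho_A^2 = \frac{a_B^2}{a_B^2+v_Y+b_B^2}$, and from the first line of \eqref{11-3-2} that $\rho_{E'}^2 = \frac{1}{a_B^2+v_Y+b_B^2}\bigl(\frac{a_B^2 a_E^2}{a_E^2+b_E^2}+v_Y\bigr)$. Both carry the common strictly positive denominator $v_B = a_B^2+v_Y+b_B^2$, so the first step is simply to multiply the inequality $\rho_{E'}^2<\rho_A^2$ through by $v_B$, reducing it to the numerator comparison
\begin{align}
\frac{a_B^2 a_E^2}{a_E^2+b_E^2}+v_Y < a_B^2. \nonumber
\end{align}

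Next I would isolate $v_Y$ on the left: subtracting $\frac{a_B^2 a_E^2}{a_E^2+b_E^2}$ from both sides gives
\begin{align}
v_Y < a_B^2\Bigl(1-\frac{a_E^2}{a_E^2+b_E^2}\Bigr) = \frac{a_B^2 b_E^2}{a_E^2+b_E^2}. \nonumber
\end{align}
Finally, since $b_E^2>0$ (Assumption (A2)) and, in the nondegenerate case, $v_Y>0$, dividing both sides by the positive quantity $\frac{v_Y b_E^2}{a_E^2+b_E^2}$ yields $\frac{a_B^2}{v_Y} > \frac{a_E^2+b_E^2}{b_E^2} = \frac{a_E^2}{b_E^2}+1$, which is exactly \eqref{4-24-3}. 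The degenerate case $v_Y=0$ (a noiseless transmission) can be dispatched in one line: then $\rho_{E'}^2 = \rho_A^2 \cdot \frac{a_E^2}{a_E^2+b_E^2} < \rho_A^2$, and indeed the right-hand side of \eqref{4-24-3} is finite while the left-hand side is $+\infty$, so the stated equivalence still holds.

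Every transformation above is multiplication or division by a manifestly strictly positive constant ($v_B$, $v_Y$, $b_E^2$, $a_E^2+b_E^2$), so each ``$<$'' is preserved in both directions; the chain therefore gives the claimed ``if and only if'' directly, with no converse argument needed. There is no genuine obstacle here — the lemma is pure algebra on the two formulas \eqref{11-3-1} and \eqref{11-3-2}; the only point deserving a word of care is checking positivity of every factor one divides by, which is immediate from the model assumptions, plus the harmless handling of the $v_Y=0$ boundary case.
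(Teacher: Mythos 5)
Your proof is correct and follows essentially the same route as the paper's: clear the common positive denominator $v_B$, reduce to $v_Y<\frac{a_B^2 b_E^2}{a_E^2+b_E^2}$, and divide by positive factors to reach \eqref{4-24-3}. The only addition is your explicit treatment of the boundary case $v_Y=0$, which the paper leaves implicit.
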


\begin{proof}
The condition
$\frac{1}{a_B^2+v_Y+b_B^2} (\frac{a_B^2 a_E^2}{a_E^2+b_E^2}+ v_Y)
< \frac{a_B^2}{a_B^2+v_Y+b_B^2} $
is equivalent to 
$0<
a_B^2 -(\frac{a_B^2 a_E^2}{a_E^2+b_E^2}+ v_Y)
=
\frac{a_B^2 b_E^2}{a_E^2+b_E^2}- v_Y
=a_B^2(\frac{b_E^2}{a_E^2+b_E^2}- \frac{v_Y}{a_B^2})$.
This condition is equivalent to
$1+\frac{a_E^2}{b_E^2}
=\frac{a_E^2+b_E^2}{b_E^2}< \frac{a_B^2}{v_Y}$.
So, we obtain Lemma \ref{L1}.
\end{proof}

%However, when $ \frac{a_B^2 a_E^2}{a_E^2+b_E^2} < b_B^2 $, it is possible to check whether the distribution $P_{A^c}$, i.e., $P_{E''}$ is sufficiently close to the Gaussian distribution 
%but it is not so easy to check whether the distribution $P_{E'}$ is sufficiently close to the Gaussian distribution
%so that we cannot employ the formula \eqref{11-3-2}.
Instead of \eqref{11-3-2}, we calculate 
the correlation coefficient $\rho_{E''}$ between $E''$ and $B$
can be calculated as follows.
Since the variance of $E''$ is $v_Y+b_B^2 $
and the covariance between $E''$ and $B$ is $v_Y+b_B^2 $,
the correlation coefficient $\rho_{E''}$ between $E''$ and $B$
can be calculated as
\begin{align}
\rho_{E''}^2
= & \frac{1}{a_B^2+v_Y+b_B^2} 
(v_Y +b_B^2) \nonumber \\
= & \frac{1}{v_B}(v_B-c_{AB}^2)\nonumber  \\
= & 
1-  \frac{c_{AB}^2}{v_B}.\Label{11-3-3}
\end{align}

\section{Protocol}\Label{s3}
\subsection{Description of protocol}\Label{s31}
\begin{Protocol}                  
\caption{Whole protocol}         
\label{protocol1}      
\begin{algorithmic}
\LECTURE  
Before the protocol, 
Alice and Bob need to know the values of $b_B$, $b_E$, and $a_E$
based on Assumptions (A2) and (A3).
In the following steps, Alice and Bob are allowed to use the noiseless public channel.
(The common choices in the following steps can be done as follows.
For example, Alice randomly decides the choices of the sample data, and 
she sends the information with respect to these choices to Bob via the noiseless public channel.
Also, they exchange common chosen samples data via the noiseless public channel.)

%$b_B$, $b_E$, and $a_E$.
\STEPONE[Initial key transmission]  
Alice generates her $n+2l$ variables $A_1, \ldots, A_{n+2l}$
according to standard Gaussian distribution independently, 
and she sends them to Bob by using the given channel $n+2l$ times.

\STEPTWO[Estimation 1]
After initial communication, Alice and Bob randomly choose common $l$ samples data
$(\bar{A}_1,\bar{B}_1), \ldots, (\bar{A}_l,\bar{B}_l)$
from $({A}_1,{B}_1), \ldots, ({A}_{n+2l},{B}_{n+2l})$.
They obtain the estimates $\hat{e}_B$, $\hat{v}_B$, and $\hat{c}_{AB}$ 
of the average of $B$, the variance of $B$ and the covariance of $A$ and $B$
by using the average, the unbiased variance, and the unbiased covariance
of the common $l$ samples data, respectively. 

\STEPTHREE[Estimation 2]
Alice and Bob randomly choose another $l$ common samples data
$(\tilde{A}_1,\tilde{B}_1), \ldots, (\tilde{A}_l,\tilde{B}_l)$ from the remaining $n+l$ data.
Based on $\hat{e}_B$, $\hat{v}_B$, and $\hat{c}_{AB}$, they obtain the estimates $\hat{P}_{A^c}$ and $\hat{P}_{E'}$ of 
the distributions $P_{A^c}$ and $P_{E'}$ 
when $ \frac{\hat{c}_{AB}^2 a_E^2}{a_E^2+b_E^2} \ge b_B^2 $.
Otherwise, they obtain only the estimate $\hat{P}_{A^c}$ of the distribution $P_{A^c}$, which works as the estimate $\hat{P}_{E''}$ of $P_{E''}$ as well.
Here, they redefine the random variables
$E':=E'+e_B-\hat{e}_B$ and $E'':=E''+e_B-\hat{e}_B$.

\STEPFOUR[Secure key distillation] 
Based on the above estimates,
Alice and Bob apply the backward secure key distillation protocol for $n$ data, which will be explained as Protocol \ref{protocol2}. 
For Protocol \ref{protocol2}, they redefine 
$({A}_1,{B}_1), \ldots, ({A}_n,{B}_n)$
as the remaining $n$ pairs of Alice's and Bob's variables.
%by combining privacy amplification and information reconciliation

\end{algorithmic}
\end{Protocol}

The whole protocol for noise injecting attack is given as Protocol \ref{protocol1}, which runs Protocol \ref{protocol2} (backward secure key distillation protocol) as a subprotocol. 
Before Protocol \ref{protocol2}, we discuss secure key distillation protocols.
Although there exist several methods to asymptotically attain the optimal one way key distillation rate from Gaussian random variables by using suitable discretization \cite{WO,NN,CB,LCV},
there is no protocol to distill secure keys from Gaussian random variables satisfying the following conditions\footnote{To satisfy these two requirements, our analysis is needed to be simple.
For this aim, the random variable $A$ is needed to be subject to the Gaussian distribution.
This property of $A$ allows us to simply describe Eve's information to be $E'$
as shown in Theorem \ref{T2X}.}.
\begin{description}
\item[(B1)] The whole calculation complexity is not so large.
\item[(B2)] A security evaluation of the final key is available with finite block-length.
\end{description}

Since the difficulty of its efficient construction is caused by the continuity,
we employ very simple discretization in our protocol.
Before describing the secure key distillation protocol, we prepare notations for hash functions.
We consider a randomized function $f_H$ from $\bF_2^{n_1}$ to $\bF_2^{n_2}$, where $H$ is the random variable identifying the function $f_H$,
and $m_1:=n_1-n_2$ and $n_2$ are called the {\it sacrifice bit length} and the output length, respectively.
Alice and Bob need to prepare random seeds $H$ to identity the function $f_H$.
The seeds $H$ is allowed to be leaked to Eve. 
A randomized function $f_H$ is called a universal2 hash function
when the collision probability satisfies the inequality 
\begin{align}
\Pr \{ f_H(c)=f_H(c')\} \le 2^{n_2-n_1}
\end{align}
for any distinct elements $c\neq c' \in \bF_2^{n_1}$ \cite{Carter,WC81}.
In the above equation, $\Pr$ expresses the probability with respect to the choice of $H$.
%A typical example of a universal2 hash function is given by using Toeplitz matrix.
%Its detail construction and the evaluation of the complexity of its construction are summarized in the recent paper \cite{H-T}. 
Under these preparations, we give our protocol satisfying the above conditions 
(B1) and (B2) as Protocol \ref{protocol2}.
Notice that the choice of the sacrifice bit length $m_1$ given in \eqref{28-1Y} (\eqref{28-2Y}) does not assume that $Y$ is Gaussian.

\begin{Protocol}                  
\caption{Backward secure key distillation protocol for $n$ data}         
\label{protocol2}      
\begin{algorithmic}
\LECTURE  
%Before the protocol, 
Before stating the following steps, Alice and Bob prepare the estimates 
$\hat{P}_{E'}$ (or $\hat{P}_{E''}$), $\hat{P}_{A^c}$, $\hat{e}_B$, and $\hat{c}_{AB}$.
In the following steps, all communications between Alice and Bob are done via the noiseless public channel.
%$\hat{\rho}_{A}$ and $\hat{\rho}_{E}$ of the correlation coefficients.
%the rate $R_1$ of error correction
\STEPONE[Discretization]  
Bob converts his random variable ${B}_i-\hat{e}_B$ to $1$ or $0$ by taking its sign, 
i.e.,  he obtains 
the new bit random variable ${B}_i'$ in $\bF_2$ as $(-1)^{{B}_i'}=\sgn ({B}_i-\hat{e}_B)$.
${B'}^n$ is defined as $(B_1', \ldots,B_n')$.
\STEPTWO[Information reconciliation]
Based on the capacity $I[\hat{P}_{A^c},\hat{c}_{AB}]$
of the channel $W_{A|B'}$:
\begin{align*}
W_{A|0}(a):=&
\frac{\frac{1}{\sqrt{2\pi}}e^{-\frac{a^2}{2}} \int_{0}^{\infty}\hat{P}_{A^c}(b- \hat{c}_{AB} a) d b}
{\int_{-\infty}^{\infty}\frac{1}{\sqrt{2\pi}}e^{-\frac{{a'}^2}{2}} 
\int_{0}^{\infty}\hat{P}_{A^c}(b'-\hat{c}_{AB} a') 
d b' d a'}
\\
W_{A|1}(a):=&
\frac{\frac{1}{\sqrt{2\pi}} e^{-\frac{a^2}{2}} \int_{-\infty}^{0}\hat{P}_{A^c}(b-\hat{c}_{AB} a) d b}
{\int_{-\infty}^{\infty}\frac{1}{\sqrt{2\pi}} e^{-\frac{{a'}^2}{2}} \int_{-\infty}^{0}\hat{P}_{A^c}(b'-\hat{c}_{AB} a') d b' d a'},
\end{align*}
Alice and Bob prepare an error correcting code $ \CD \subset \bF_2^n$, 
{\bf whose choice will be explained in Subsection \ref{L89}.}
Bob computes the syndrome as an element $[{B'}^n]$ of the coset space $\bF_2^n/\CD $
from his bit sequence ${B'}^n$, 
calculate its representative element $\alpha([{B'}^n])$ in $\bF_2^n$,
and sends $\alpha([{B'}^n])$ to Alice.
Bob calculates ${B'}^n-\alpha([{B'}^n]) \in \CD$.
Alice applies the error correction to the data $((-1)^{\alpha([{B'}^n])_i}A_{i})_{i=1}^{n}$
so that she obtains the estimate of ${B'}^n-\alpha([{B'}^n]) \in \CD$.
Here, the error correction is based on the channel $\{W_{Z|0},W_{Z|1}\}$.

\STEPTHREE[Privacy amplification] 
%Based on $\hat{\rho}_{E,\max}$ of the correlation coefficient between $B$ and $E$,
Based on $\hat{P}_{E'} $ or $\hat{P}_{E''} $ in addition to 
$\hat{e}_B$ and $\hat{c}_{AB}$,
%Alice and Bob choose the sacrifice bit length $m_1$, whose choice will be explained later.
Alice and Bob decide the sacrifice bit length $m_1$, {\bf whose choice will be given  in \eqref{28-1Y} or \eqref{28-2Y}}.
Then, they apply universal2 hash function to their bits in $\CD$
with sacrifice bit length $m_1$.
They obtain the keys $K$ with length $\dim \CD-m_1 $.
Here, Alice (or Bob) generates the random seeds locally and can send it to Bob (or Alice) via public channel. 

\STEPFOUR[Error verification]
Alice and Bob choose the bit length $m_2$ for error verification, {\bf whose choice will be discussed in Subsection \ref{NHS}}. 
They apply another universal2 hash function to the keys with output length $m_2$.
They exchange their output of the universal2 hash function.
If they are the same,
discarding their final $m_2$ bits from their keys, they obtain their final keys.
If they are different, they discard their keys.
\end{algorithmic}
\end{Protocol}

Finally, we discuss the effects of the stochastic behaviors of the coefficients $a_B$ and $a_E$ due to fading.
Even though this condition does not necessarily hold even in the average case with respect to this stochastic behavior,
Alice and Bob might be able to efficiently generate secure keys.
In this case, Alice and Bob need to assign $a_E$ to the maximum value among possible values.
On the other hand, 
by random sampling, they can observe whether each coding block can generate secure keys.
Hence, there might be a possibility that a part of coding blocks can generate secure keys.
That is, they can apply the backward secure key distillation protocol
only to the coding blocks that can generate secure keys.
Such a selection of advantageous events to Alice and Bob
is called {\it post selection}.

\subsection{Choice of code and its calculation complexity}\Label{L89}
Now, we discuss the calculation complexity of our protocol.
In Protocol \ref{protocol1} except for Protocol \ref{protocol2},
we calculate only the averages of the obtained data and its square.
So, their calculation complexity is not so large.
Protocol \ref{protocol2} contains the calculation of syndrome,
the decoding of the given error correction code,
and universal2 hash function.
For Information reconciliation, 
we need to choose a suitable code, e.g. LDPC codes to satisfy the following condition.
\begin{description}
\item[(C1)]
The calculation of syndrome and
the decoding of the given error correction code
are have been already implemented with reasonable calculation complexity.
In fact, so many codes satisfy this condition \cite[p. 228]{RU}.
\item[(C2)]
The code can decode the message under the channel $W_{A|B}$, which is given in the Protocol 2.
\end{description}
If we use such a code, 
we can exploit existing algorithms for Step of Information reconciliation.

%\subsection{Choice of error correcting code}\Label{s32}
To achieve a larger key generation rate, 
we need to choose an error correction code $\CD \subset \bF_2^n$
whose coding rate is close to the capacity.
For this purpose, we employ an LPDC code with the brief propagation method, whose block length is around $2^{16} \cong 65,000$\cite[Chap. 4]{RU}.
However, we do not necessarily choose the block length of the error correcting code to be the block length $n$ of our protocol.
That is, we can consider the concatenation of our error correcting code.
When the block length of our error correcting code is $\frac{n}{k}$,
$k$ blocks of our error correcting code is treated as one block of our of our protocol, i.e.,
we apply one hash function to $k$ blocks of corrected keys of error correction.
That is, our LDPC code $\CD \subset \bF_2^{n/k}$ is chosen so that the dimension is less than 
$\frac{n}{k} I[\hat{P}_{A^c},\hat{c}_{AB}] $.
Since the agreement between Alice and Bob can be checked by error correction,
we do not need to evaluate the error of estimation $\hat{P}_{A^c}$.
That is, to decide the block length $n$ of our protocol,
we need to care about only the calculation complexity of hash function.
%Here, we need to notice that 
%the above type hash function practically implemented with $m=1 000 000$
%by a conventional personal computer.

\subsection{Privacy amplification, verification, and their calculation complexity}
\subsubsection{Privacy amplification}
For Privacy amplification, we can use 
a modified Toeplitz matrix as a typical example of a universal2 hash function,
whose detail construction and evaluation of the complexity of its construction
are summarized in the recent paper \cite[Appendix]{H-T}.
Its calculation complexity is $O(m \log m)$ when $m$ is the input length.
Indeed, it was reported in paper \cite{H-T} that
the above type hash function practically implemented with $m=1 000 000$
by a conventional personal computer.
So, the part of privacy amplification
has only calculation complexity $O(n \log n)$.

Here, we need to calculate the size of the sacrifice bit length in privacy amplification.
This length should be chosen so that the security criterion \eqref{X1} or/and \eqref{X2}
is less than a given threshold, which shows the security level.
This calculation can be done by using the formulas \eqref{28-1} or/and \eqref{28-2} in Lemma \ref{LLO},
whose calculation complexity does not depend on the numbers of input and output lengths, as explained in Subsection \ref{s33}.
So, this process also can be done efficiently.

\subsubsection{Verification of correctness and public channel}\Label{NHS}
The error verification is also done by a universal2 hash function.
When we employ the above example,
it has only calculation complexity $O(m_2 \log m_2)$.
Due to this step, we can guarantee the correctness with probability $1-2^{-m_2}$, which is called the significance level\cite[Section VIII]{Fung}.
So, it is enough to choose $m_2$ depending on the required significance level.
Hence, 
we do not need to evaluate the decoding error probability for the step of information reconciliation.
That is, we do not need to care about the estimation error in the step of information reconciliation.
In contrast, we need to be careful for the estimation error in the step of privacy amplification because no method can evaluate the amount of information leaked to Eve
in the final keys without use of the estimation error.

Rigorously, in this protocol, 
Eve might override the signals to Bob or the public channel for spoofing \cite{Zeng,SC}.
%Eve might send jamming signals \cite{Zeng,SC} to Bob as spoofing.
To avoid Eve's spoofing, 
Alice and Bob needs verification of their public channel,
i.e., they need to authenticate each other \cite{Carter,WC81,KR94,KR95}.
Alice and Bob can authenticate each other by using universal2 hash function.
This authentication consumes a small number of secret keys between Alice and Bob.
Since the length of the keys for the authentication is smaller than the length of generated keys,
Alice and Bob can increase the length of the secret keys efficiently.
When we consume $k$ bits for the authentication for $n$-bit transmission, 
the authentication scheme is secure with a failure probability of
$ n 2^{-k+1}$ \cite[Theorem 9]{KR94}.
So, If Alice and/or Bob find disagreement, they consider that there exists spoofing and discard the obtained random variable.
Then, this protocol well works totally.

\section{Security analysis and sacrifice bit length}\Label{s33}
%Since each transmission has the same behavior, we use the model \eqref{4-24-1G} and \eqref{4-24-2G}
%instead of \eqref{4-24-1X} and \eqref{4-24-2X} in the following.

\subsection{Finite-length case with known parameters and distribution}\Label{SBF}
We analyze the security when the distribution of $P_Y$ and the parameters 
$a_B$, $a_E$, $v_Y$, $b_B$, $b_E$ and $e_B$ are known to Alice and Bob.
%For this aim, we introduce the random variable $E':= E'- \hat{e}_B $. 
Since Eve knows the exchanged information via public communication,
she knows $\hat{e}_B $, i.e., she knows $e_B- \hat{e}_B $.
\subsubsection{Single-system description}\Label{SBG}
First, we discuss this problem with the single-system description,
in which $B'$ is defined as $B':=\sgn (B-\hat{e}_B)$.
So, due to a similar analysis to Theorem \ref{T2X},
Eve's knowledge for Bob's random variable
$B- \hat{e}_B 
%=\frac{a_B a_E}{a_E^2+b_E^2}E+ Y + \frac{a_B b_E}{a_E^2+b_E^2}U+ {b_B} X_1+e_B- \hat{e}_B 
=E' + \frac{a_B b_E}{a_E^2+b_E^2}U+ {b_B} X_1+e_B- \hat{e}_B $
%=E' + \frac{a_B b_E}{a_E^2+b_E^2}U+ {b_B} X_1+e_B$
can be reduced to $E'$
because $E'$ is independent of $\frac{a_B b_E}{a_E^2+b_E^2}U+ {b_B} X_1$.

In this section, we derive general security formulas
by using the true probability density function $P_{E'}$ of the random variable $E'$.
%and the true correlation coefficient $\rho_E$ between $E''$ and $B$.
For this purpose, we introduce the functions 
$H[P_{E'},v]$ and $\phi[P_{E'},v](t)$ as
\begin{align}
& H[P_{E'},v]
\nonumber\\
:=& - \int_{-\infty}^\infty 
%\sqrt{\frac{2}{\pi}}e^{-\frac{1}{2}x^2}
\Big[\Phi_1(\frac{x}{\sqrt{v}}) \log \Phi_1(\frac{x}{\sqrt{v}}) d x \nonumber \\
&\quad + (1-\Phi_1(\frac{x}{\sqrt{v}})) \log (1-\Phi_1(\frac{x}{\sqrt{v}}) )\Big]
P_{E'}(x)
d x
\Label{4-29-1},
\end{align}
and
\begin{align}
& \phi[P_{E'},v](t) \nonumber\\
:= & 
\log 
\int_{-\infty}^\infty 
(\Phi (\frac{x}{\sqrt{v}})^{\frac{1}{1-t}}
+
(1-\Phi (\frac{x}{\sqrt{v}}))^{\frac{1}{1-t}}
)^{1-t} 
P_{E'}(x)
%\frac{1}{\sqrt{2\pi}} e^{-\frac{x^2}{2}}
dx, \Label{4-29-2B}
\end{align}
%where $P_{E'}(x)$ is the probability density function of the random variable $E'$.
where the base of the logarithm is chosen to be $2$ in this paper.
%and $\Phi$ is the cumulant generating function of the standard Gaussian distribution.

Using the parameter $v$, we introduce the joint probability density function $P_{B',E'}$ 
of the random variables $B'$ and $E'$ as 
\begin{align}
P_{B',E'}(0,x)= \Phi_1(\frac{x}{\sqrt{v}}) P_{E'}(x), \quad
P_{B',E'}(1,x)= (1-\Phi_1(\frac{x}{\sqrt{v}})) P_{E'}(x).
\end{align}
So, the functions $H[P_{E'},v]$ and $\phi[P_{E'},v](t)$ are rewritten as
\begin{align}
 H[P_{E'},v] &=H(B'|E')[P_{B',E'}] \Label{4-29-3} \\
\phi[P_{E'},v] &=\phi(B'|E')[P_{B',E'}],\Label{4-29-2} 
\end{align}
where
\begin{align}
& H( B'|E')[P_{B',E'}]  
\nonumber\\
:=&
\int_{-\infty}^\infty P_{B',E'}(0,x) \log 
\frac{P_{B',E'}(0,x)+P_{B',E'}(1,x)}{P_{B',E'}(0,x)}
dx \nonumber\\
&+
\! \int_{-\infty}^\infty \! P_{B',E'}(1,x) \log 
\frac{P_{B',E'}(0,x)+P_{B',E'}(1,x)}{P_{B',E'}(1,x)}
dx ,\\
& 
\phi(B'|E')[P_{B',E'}](t) \nonumber\\
:= &
\log \int_{-\infty}^\infty (P_{B',E'}(0,x)^{\frac{1}{1-t}}
+P_{B',E'}(1,x)^{\frac{1}{1-t}})^{1-t} dx .
\end{align}
Here, this definition can be applied to a general pair of a binary variable $B'$ and a real variable $E'$.
Notice that we have $-\frac{d}{ds}\phi[P_{E'},v](s)|_{s=0}=H[P_{E'},v]$.
Using the property of this type of information quantity
given in \cite{Gal},
we have the following properties for the function $\phi[P_{E'},v](t)$.
\begin{lem}\Label{L30}
The function $\phi[P_{E'},v](t)$ is convex for $t \in (0,1)$.
\end{lem}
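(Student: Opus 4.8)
The plan is to show convexity of $\phi[P_{\bar{E}'},v](t)$ on $(0,1)$ by recognizing it as (essentially) a R\'enyi-type quantity and invoking the fact that such quantities are convex in the relevant parameter. Concretely, write $q:=\frac{1}{1-t}$, so that $q$ ranges over $(1,\infty)$ as $t$ ranges over $(0,1)$, and for each $x$ set $p_0(x):=P_{B',\bar{E}'}(0,x)$ and $p_1(x):=P_{B',\bar{E}'}(1,x)$, with $p_0+p_1=P_{\bar{E}'}$. Then from \eqref{4-29-2} we have $\phi[P_{\bar{E}'},v](t)=\log\int (p_0(x)^q+p_1(x)^q)^{1/q}\,dx$. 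The integrand is the $\ell^q$-norm of the two-component vector $(p_0(x),p_1(x))$, so $\phi$ is the logarithm of the mixed $L^1(dx)$-$\ell^q$ norm of the function $x\mapsto (p_0(x),p_1(x))$. The key step is therefore to establish that $t\mapsto \log\bigl\|\,(p_0,p_1)\,\bigr\|_{L^1_x(\ell^q)}$ is convex in $t$, equivalently that $q\mapsto$ this log-norm, precomposed with the map $t\mapsto 1/(1-t)$, is convex.

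\textbf{Main steps.} First I would reduce to a one-variable statement: fix $t_0,t_1\in(0,1)$ and $\lambda\in[0,1]$, put $t_\lambda=(1-\lambda)t_0+\lambda t_1$, and aim to show $\phi(t_\lambda)\le(1-\lambda)\phi(t_0)+\lambda\phi(t_1)$. Second, I would express things via $g(t):=\phi[P_{\bar E'},v](t)$ and differentiate twice under the integral sign (justified by dominated convergence, since $\Phi(x/\sqrt v)\in(0,1)$ makes all integrands bounded by $P_{\bar E'}$ up to constants on compact $t$-intervals), and try to show $g''(t)\ge 0$ directly. Writing $h(x,t):=(\Phi(x/\sqrt v)^{1/(1-t)}+(1-\Phi(x/\sqrt v))^{1/(1-t)})^{1-t}$ and $Z(t):=\int h(x,t)P_{\bar E'}(x)\,dx=2^{g(t)}$ (base-$2$ logs), one has $g'=\frac{1}{\ln 2}\,\frac{Z'}{Z}$ and $g''=\frac{1}{\ln 2}\,\frac{Z''Z-(Z')^2}{Z^2}$, so it suffices to prove $Z''Z\ge (Z')^2$. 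By Cauchy--Schwarz with respect to the probability-like measure $\frac{h(x,t)}{Z(t)}P_{\bar E'}(x)\,dx$, this reduces to showing $h(x,t)\,\partial_t^2 h(x,t)\ge (\partial_t h(x,t))^2$ pointwise in $x$, i.e. that $t\mapsto \log h(x,t)$ is convex for each fixed $x$. Third, I would verify this pointwise convexity: with $a:=\Phi(x/\sqrt v)\in(0,1)$, we must show $t\mapsto (1-t)\log\!\bigl(a^{1/(1-t)}+(1-a)^{1/(1-t)}\bigr)$ is convex on $(0,1)$; substituting $s=1/(1-t)\in(1,\infty)$ (a smooth increasing reparametrization, under which convexity in $t$ follows from convexity in $s$ plus monotonicity considerations — or more cleanly, one checks the second derivative in $t$ directly), this is $\tfrac{1}{s}\log(a^s+(1-a)^s)$, which is the classical statement that the R\'enyi entropy $\tfrac{1}{1-\alpha}\log\sum p_i^\alpha$ of a two-point distribution, written in the $s$-parametrization, has the convexity we need; this is a standard Hölder/power-mean fact.

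\textbf{The main obstacle.} The genuinely delicate point will be the very last one: verifying the pointwise convexity of $t\mapsto(1-t)\log(a^{1/(1-t)}+(1-a)^{1/(1-t)})$ cleanly, because the change of variables $s=1/(1-t)$ is nonlinear, so convexity in $t$ is not literally the same as convexity in $s$ — one must either compute $\frac{d^2}{dt^2}$ head-on (a short but slightly involved calculus exercise, with the payoff that the $a\leftrightarrow 1-a$ symmetry and the structure $\log(\sum)$ conspire to give a nonnegative Hessian), or argue that the function is a composition/supremum that preserves convexity. An alternative route that sidesteps the pointwise computation is to recognize $Z(t)^{1/(1-t)}$ — or rather the inner quantity — as a norm and to invoke a Riesz--Thorin / Stein interpolation style argument for the log-convexity of $t\mapsto\log\|(p_0,p_1)\|_{L^1_x(\ell^{1/(1-t)})}$ directly; however, since the exponent appears only in the $\ell^q$ slot and not the $L^1$ slot, plain Hölder in the discrete variable already gives log-convexity of the $\ell^q$-norm in $1/q$, and then one integrates, so I expect the Cauchy--Schwarz-plus-pointwise-convexity argument above to be the cleanest and I would write that one up, accepting a few lines of calculus for the pointwise claim.
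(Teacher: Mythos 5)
Your proposal is correct, but it proves the lemma by a genuinely different route than the paper. The paper's proof is a two-line reduction: it observes that $2^{\phi[P_{\bar{E}'},v](t)}=2^{E_0(t|P_{B'|X},P_{B'})}\bigl((\tfrac12)^{1-\frac{1}{1-t}}\bigr)^{1-t}$, hence $\phi[P_{\bar{E}'},v](t)=E_0(t|P_{B'|X},P_{B'})+t$, and then cites Gallager for the known convexity of $E_0$ in $t$; adding the linear term $t$ preserves convexity. You instead give a self-contained analytic proof: reduce log-convexity of $Z(t)=\int h(x,t)P_{\bar{E}'}(x)\,dx$ to pointwise log-convexity of $h(x,\cdot)$ via Cauchy--Schwarz, and then verify the pointwise claim. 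Both are valid; the paper's buys brevity and a connection to a standard information-theoretic object, yours buys independence from the external reference (and, at bottom, is the same computation Gallager's appendix performs). Two remarks on your write-up. First, the ``main obstacle'' you flag is not actually an obstacle: since $h(x,t)=\|(\Phi(x/\sqrt v),\,1-\Phi(x/\sqrt v))\|_{1/(1-t)}$ and the reciprocal exponent $1/q=1-t$ is \emph{affine} in $t$, the classical Lyapunov/H\"older fact that $\log\|p\|_q$ is convex in $1/q$ gives the pointwise convexity in $t$ immediately, with no second-derivative computation and no worry about the nonlinear map $t\mapsto 1/(1-t)$ (which you should simply not use as your interpolation variable). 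Second, you can avoid differentiating under the integral sign altogether: once $h(x,\cdot)$ is log-convex, write $h(x,t_\theta)\le h(x,t_0)^{1-\theta}h(x,t_1)^{\theta}$ and apply H\"older in $x$ to get $Z(t_\theta)\le Z(t_0)^{1-\theta}Z(t_1)^{\theta}$ directly; this removes the regularity discussion entirely and shortens the argument to a few lines.
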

This lemma is shown in Subsection \ref{s53}.
Since the limit $\lim_{t\to 0}\frac{-\phi[P_{E'},v](t)}{t}$ equals the conditional entropy, 
$ H[P_{E'},v]$ is monotone decreasing for $\rho$.
%Now, we have the joint distribution $P_{B',A,E}$.

\subsubsection{Multiple-system description}
To evaluate the security, we discuss the multiple-system description, in which
the variables $E_i,E_i', E_i'',A^c_i, U_i$  are defined in the same way as $A_i$ and $B_i$.
All of Eve's knowledge is written as ${\cal E}$.
As shown below, in the privacy amplification, we need to choose the sacrifice but rate $\frac{m_1}{n} $ is larger than
$ 1- H[P_{E'},v_{B|E'}]$, where $n$ is the block length before information reconciliation and $v_{B|E'}$ was defined to be the variance $\frac{a_B^2 b_E^2}{a_E^2+b_E^2}+b_B^2$.
To show this fact, we make more precise analysis on the leaked information as follows.
Using the relative entropy $D(P\|Q):=\sum_{x}P(x)(\log P(x)-\log Q(x))$
and the variational distance $d(P,Q):=\sum_x |P(x)-Q(x)|$,
we adopt the conditional modified mutual information $I'(K:{\cal E}|H)$ \cite{CN04,H-ep} between Bob and Eve
and the variational distance measure $d(K:{\cal E}|H)$ \cite{Renner} conditioned with $H$ as
\begin{align}
I'(K:{\cal E} |H) %\nonumber\\
:=& \sum_{h} P_H(h) D(P_{{\cal E} K|H=h}\|P_{{\cal E}|H=h}\times P_{{\rm Uni},K}) \Label{X1} \\
d(K:{\cal E}|H) %\nonumber\\
:=& \sum_{h} P_H(h) d(P_{{\cal E} K|H=h},P_{{\cal E}|H=h}\times P_{{\rm Uni},K}),\Label{X2}
\end{align}
where 
$P_{{\rm Uni},K} $ is the uniform distribution for the final key.
%and $P_{X|Y=y}$ is the conditional distribution of $X$ with condition $Y=y$.
It is known that the latter satisfies the universal composable property \cite{R-K}.
Remember hat $H$ is the random variable to describe the choice of hash function.

Here, we give security formulas with true parameters.
%when the channel is quasi static and the channel parameters in \eqref{4-24-1G} and \eqref{4-24-2G} equal the estimated parameters while the evaluation of estimation error is required in the real situation.
So, the discussions in \cite{H-tight,H-leaked,H-ep,H-cq} yield the following lemma,
whose detail derivations are available in Subsection \ref{s53}.
\begin{lem}\Label{LLO}
We have
\begin{align}
I'(K: {\cal E}|H) \le & 
\inf_{s \in (0,1) }
\frac{1}{s}
2^{ s(n -m_1)+ n\phi[P_{E'},v_{B|E'}](s)} ,
\Label{28-1}\\
d(K:{\cal E}|H) \le &
3 \min_{t \in [0,\frac{1}{2}]} 2^{t(n-m_1)+ n\phi[P_{E'},v_{B|E'}](t)} .
\Label{28-2}
\end{align}
\end{lem}

%Here, \eqref{28-2} follows from \cite[(92) of Section VIII]{H-tight}, and \eqref{28-1} is given in Appendix \ref{}.
Since the function $t\mapsto (n-m_1)+ n\phi[P_{E'},v_{B|E'}](t)$ is convex (Lemma \ref{L30}),
the minimum $\min_{t \in [0,\frac{1}{2}]} t(n-m_1)+ n\phi[P_{E'},v_{B|E'}](t) $
is computable by the bisection method \cite[Algorithm 4.1]{BV}, which gives the RHS of \eqref{28-2}.
Since $s \mapsto -\log  s$ is convex,
the function $s \mapsto   (s(n-m_1)+ n\phi[P_{E'},v_{B|E'}](s))-\log s$ is convex.
So, the infimum $\inf_{t \in (0,1)} t(n-m_1)+n \phi[P_{E'},v_{B|E'}](t) $
is computable in the same way. That is, we can calculate the RHS of \eqref{28-1} in Lemma \ref{LLO}.
%Since the relation holds,
When the sacrifice bit length $ \frac{m_1}{n}$
is greater than $ 1- H[P_{E'},v_{B|E'}] $,
there exists $s \in (0,\frac{1}{2}]$ such that
$(s(1-\frac{m_1}{n})+ \phi[P_{E'},v_{B|E'}](s))<0$.
So, both upper bounds go to zero exponentially for $n$.

%When we cannot perfectly identify the parameters $a_E$ and $b_E$ to decide the coefficient $\rho_{E}$, it is enough to replace the constant $\rho_{E}$ by the maximum of $\rho_{E}$ with respect to $a_E$ and $b_E$ among the possible range of $a_E$ and $b_E$.

Our condition for the random hash function $f_H$ can be relaxed to $\epsilon$-almost universal dual hash function \cite{Tsuru}.
(\cite{H-ep} contains its survey with non-quantum terminology.)
The latter class allows more efficient random hash functions with less random seeds \cite{H-T}.
Even when the random seeds $H$ is not uniform random number,
we have similar evaluations by attaching the discussion in \cite{H-T}.
While it is possible to apply left over hashing lemma \cite{BBCM,HILL} and smoothing to the min entropy \cite{Renner},
our evaluation is better than such a combination even in the asymptotic limit,
as is discussed in \cite{H-tight,H-ep}.

\subsection{Asymptotic case}
\subsubsection{Asymptotic case with known parameters and distribution}
Next, from the theoretical viewpoint,
we discuss the asymptotically achievable rate when the distribution of $P_Y$ and the parameters 
$a_B$, $a_E$, $v_Y$, $b_B$, $b_E$, and $e_B$ are known to Alice and Bob.
For simplicity, we consider the case when 
the information reconciliation 
asymptotically generates agreed keys between Alice and Bob
with the mutual information rate
$I(B';A)= 1-H( B'|A)[P_{B',A}]  $.
Since our focus in this section is limited to the asymptotic analysis
with independent and identical distributed setting,
it is sufficient to discuss the single-system description as Subsection \ref{SBG}.
The analysis in Section \ref{SBF} guarantees that 
the rate $ 1- H[P_{E'},v_{B|E'}]$ is asymptotically sufficient for the rate of of sacrificed keys.
Hence, the above method yields the asymptotic key generation rate 
$(1-H( B'|A)[P_{B',A}]  -(1- H[P_{E'},v_{B|E'}]))_+
=( H[P_{E'},v_{B|E'}]-H( B'|A)[P_{B',A}])_+$, where
$(x)_+:=\max(x,0)$.
Using the existing results of secure key generation \cite{AC,Ma,CN04}, we have the following theorem.

\begin{thm}\Label{TY}
(i) When there exists a cumulative distribution function $F$ such that
the distribution corresponding to $F$ is not the delta measure and 
\begin{align}
\Phi_{\frac{a_B^2 b_E^2}{a_E^2+b_E^2}}
=F_{Y}* F,
\Label{MML}
\end{align}
we have $H[P_{E'},v_{B|E'}]-H( B'|A)[P_{B',A}]> 0$, and
there is no protocol 
to generate secure keys between Alice and Bob from
the sequence of $B'$ and $A$
whose asymptotic key generation is greater than
$H[P_{E'},v_{B|E'}]-H( B'|A)[P_{B',A}]$.

(ii) Conversely, when there exists a cumulative distribution function $F$ such that
\begin{align}
F_{Y}= \Phi_{\frac{a_B^2 b_E^2}{a_E^2+b_E^2}}* F,\Label{MML2}
\end{align}
Alice and Bob cannot distill secure key from $A$ and $B'$.
\end{thm}

\begin{proof}
First, we show (i).
As mentioned in the previous section, 
Eve's information for $B- \hat{e}_B $
is summarized to $E'$.
Let $V_1$ be a random variable such that the cumulative distribution function is $F$
and it is independent of $E'$.
Condition \eqref{MML} guarantees 
\begin{align*}
F*F_{E'}
=F* (\Phi_{\frac{a_B^2 a_E^2}{a_E^2+b_E^2}} * F_{Y})
=\Phi_{\frac{a_B^2 a_E^2}{a_E^2+b_E^2}} * (F* F_{Y})
=\Phi_{\frac{a_B^2 a_E^2}{a_E^2+b_E^2}} * \Phi_{\frac{a_B^2 b_E^2}{a_E^2+b_E^2}}
= \Phi_{a_B^2 },
\end{align*}
which implies that 
the variable $E'+V_1$ is subject to the same distribution as $a_B A$.
Let $V_2$ be a random variable independent of $E'$ and $V_1$ that is subject to the same distribution as $Y+b_BX_1$.
Then, the joint distribution $P_{E'+V_1,E'+V_1+V_2}$
between $E'+V_1$ and $E'+V_1+V_2$ is the same as
the joint distribution $P_{a_B A, B}$ between $a_B A$ and $ B$.
Similarly, the joint distribution $P_{E',E'+V_1+V_2}$
between $E'$ and $E'+V_1+V_2$ is the same as
the joint distribution $P_{E', B}$  between $E'$ and $ B$.
Thus, we stochastically have the Markovian chain $E'\mc  A\mc B \mc B'$.
Hence, we have
$I(B';A|E')=
\sum_{e'}P_{E'}(e') 
(H(  \sgn(e'+V_1+V_2-\hat{e}_B) )
-\int_{-\infty}^{\infty} H(  \sgn(e'+v_1+V_2-\hat{e}_B) ) P_{V_1}(d v_1))$.
Since the variable $V_2$ takes values from $-\infty$ to $\infty$ with non-zero probability,
and $ P_{V_1}$ is not a delta measure,
we have
$H(  \sgn(e'+V_1+V_2-\hat{e}_B) )
-\int_{-\infty}^{\infty} H(  \sgn(e'+v_1+V_2-\hat{e}_B) ) P_{V_1}(d v_1)>0$
for any real number $e'$, which implies that $I(B';A|E')>0 $.
Therefore, 
the optimal key generation rate is calculated as \cite{AC,Ma,CN04}
\begin{align}
0< I(B';A|E')=
I(B';A)-I(B';E')=H[P_{E'},v_{B|E'}]-H( B'|A)[P_{B',A}].\Label{FOU}
\end{align}

Next, we show (ii).
When \eqref{MML2} holds, we have
\begin{align*}
F*\Phi_{a_B^2 }
=(F*\Phi_{\frac{a_B^2 b_E^2}{a_E^2+b_E^2}} )* \Phi_{\frac{a_B^2 a_E^2}{a_E^2+b_E^2}}
=F_{Y}*\Phi_{\frac{a_B^2 a_E^2}{a_E^2+b_E^2}} 
=F_{E'}.
\end{align*}
Let $V_3$ be a random variable such that the cumulative distribution function is $F$
and it is independent of $a_B A$.
Let $V_4$ be a random variable independent of $a_B A$ and $V_3$ that is subject to 
the Gaussian distribution with average 0 and variance 
$ \frac{a_B^2 b_E^2}{a_E^2+b_E^2}+ b_B^2$. 
Then, the joint distribution $P_{a_B A +V_3,a_B A +V_3+V_4}$
between $a_B A +V_3$ and $a_B A +V_3+V_4$ is the same as
the joint distribution $P_{E', B}$ between $E'$ and $ B$.
Similarly, the joint distribution $P_{a_B A ,a_B A +V_3+V_4}$
between $a_B A $ and $a_B A +V_3+V_4$ is the same as
the joint distribution $P_{a_B A, B}$  between $a_B A$ and $ B$.
Thus, we stochastically have the Markovian chain $ A\mc E'\mc B \mc B'$.
Hence, it is impossible to distill secure keys from $A$ and $B'$ \cite{AC,Ma,CN04}.
\end{proof}

To discuss more detail, we assume that $Y$ is subject to a Gaussian distribution with variance $v_Y$.
For the analysis under this assumption, we define
\begin{align}
& H(v)
\nonumber\\
:=& - \frac{1}{\sqrt{2\pi}}
\int_{-\infty}^\infty 
%\sqrt{\frac{2}{\pi}}e^{-\frac{1}{2}x^2}
\Big[\Phi_1(\frac{x}{\sqrt{v}}) \log \Phi_1(\frac{x}{\sqrt{v}}) d x \nonumber \\
&\quad + (1-\Phi_1(\frac{x}{\sqrt{v}})) \log (1-\Phi_1(\frac{x}{\sqrt{v}}) )\Big]
e^{-x^2/2}
d x
\Label{4-29-T}.
\end{align}
$H(v)$ is strictly increasing for $v$.
Since
\begin{align}
& H(v_1/v_2)
\nonumber\\
=& - \frac{1}{\sqrt{2\pi v_2}}
\int_{-\infty}^\infty 
%\sqrt{\frac{2}{\pi}}e^{-\frac{1}{2}x^2}
\Big[\Phi_1(\frac{x}{\sqrt{v_1}}) \log \Phi_1(\frac{x}{\sqrt{v_1}}) d x \nonumber \\
&\quad + (1-\Phi_1(\frac{x}{\sqrt{v_1}})) \log (1-\Phi_1(\frac{x}{\sqrt{v_1}}) )\Big]
e^{-x^2/2v_2}
d x
\Label{40-T},
\end{align}
we have $H[P_{E'},v_{B|E'}]
=H \Big(\frac{\frac{a_B^2 b_E^2}{a_E^2+b_E^2}+b_B^2}
{\frac{a_B^2 a_E^2}{a_E^2+b_E^2}+v_Y}\Big)
=H \Big(\frac{\frac{ \alpha_B }{ \alpha_E+1}+1}
{\frac{ \alpha_B \alpha_E}{\alpha_E+1}+v_Y'}\Big)
$ and 
$H( B'|A)[P_{B',A}]=H(\frac{v_Y+b_B^2}{a_B^2})
=H(\frac{ v_Y'+1}{\alpha_B})$,
where 
$\alpha_B:= \frac{a_B^2}{b_B^2}$,
$\alpha_E:= \frac{a_E^2}{b_E^2}$,
and $v_Y':= \frac{v_Y}{b_B^2}$.
Hence, we have the key generation rate
$(H[P_{E'},v_{B|E'}]-H( B'|A)[P_{B',A}])_+=
R_1:=\Big(
H \Big(\frac{\frac{ \alpha_B }{ \alpha_E+1}+1}
{\frac{ \alpha_B \alpha_E}{\alpha_E+1}+v_Y'}\Big)
-
H(\frac{ v_Y'+1}{\alpha_B}) \Big)_+$.
This value is strictly positive if and only if 
$ v_Y'< \frac{\alpha_B }{\alpha_E+1}$, i.e., 
$v_Y< \frac{a_B^2 b_E^2}{a_E^2+b_E^2}$
because $H(v)$ is strictly increasing for $v$.
Notice that the condition 
$ v_Y'< (\ge) \frac{\alpha_B }{\alpha_E+1}$, i.e., 
$v_Y <  (\ge) \frac{a_B^2 b_E^2}{a_E^2+b_E^2}$ 
equivalent to the condition of Part (i) ((ii)) 
of Theorem \ref{TY} in this case.

\subsubsection{Asymptotic case with estimation}
We consider the case when 
the distribution of $P_Y$ and the parameters 
$a_B$, $a_E$, $v_Y$, $b_B$, $b_E$, and $e_B$ are known to Alice and Bob.
We assume the asymptotic case, in which $l$ goes to infinity but $\frac{l}{n}$ goes to zero.
In this case,
they estimate these parameters by Steps 1 and 2 pf Protocol 1.
The estimation errors for these parameters go to zero.
Also, as discussed in Section \ref{ESDIS},
the estimation error for the distribution of $P_Y$ goes to zero.
Therefore, they can achieve the asymptotic key generation rate 
$H[P_{E'},v_{B|E'}]-H( B'|A)[P_{B',A}]$.

\subsubsection{Comparison with one-way case}\Label{MDO}
To compare our protocol with the one-way wire-tap channel,
we assume that 
the distribution of $P_Y$ and the parameters 
$a_B$, $a_E$, $v_Y$, $b_B$, $b_E$, and $e_B$ are known to Alice, Bob, and Eve.
In fact, 
the following modified protocol can be reduced to the one-way case.
In Step 1 of Protocol 2, Alice makes the random variable $A_i':=\sgn A_i$.
Then, we make information reconciliation (Step 2 of Protocol 2) with the opposite direction
(Alice sends the syndrome to Bob via public channel).
This modification is called forward reconciliation.
Sending Alice's syndrome is equivalent to restricting Alice's variables to a special coset with respect to $C$.
Hence, the analysis with forward reconciliation can be reduced to the one-way wire-tap channel.
%To discuss the one-way case, the degraded condition for two channels 
Then, using existing results of wire-tap channel \cite{CK79}, we have the following lemma.

\begin{lem}\Label{TY2}
(i) When $\frac{a_B}{b_B} >\frac{a_E}{b_E}$
and there exists a cumulative distribution function $F$ such that
the distribution corresponding to $F$ is not the delta measure and 
\begin{align}
\Phi_{\frac{b_E^2 a_B^2}{a_E^2}-b_B^2}
=F_{Y}* F,\Label{MML3}
\end{align}
there exists  a secure wire-tap code for the wire-tap channel \eqref{4-24-1G} and \eqref{4-24-2G}. 

(ii) Conversely, assume that 
$\frac{a_B}{b_B} <\frac{a_E}{b_E}$
or there exists a cumulative distribution function $F$ such that
\begin{align}
F_{Y}= \Phi_{
\frac{b_E^2 a_B^2}{a_E^2}-b_B^2}* F,\Label{MML4}
\end{align}
the wire-tap channel \eqref{4-24-1G} and \eqref{4-24-2G} cannot transmit secure information from Alice to Bob.
\end{lem}

\begin{proof}
First, we show (i).
Let $V_1$ be a random variable such that the cumulative distribution function is $F$
and it is independent of $E'$.
Condition \eqref{MML} guarantees 
\begin{align*}
F*F_{Y+b_B X_1}
=F* (\Phi_{b_B^2} * F_{Y})
=\Phi_{b_B^2} * (F* F_{Y})
=\Phi_{b_B^2} * \Phi_{\frac{b_E^2 a_B^2}{a_E^2}-b_B^2}
= \Phi_{\frac{b_E^2 a_B^2}{a_E^2}},
\end{align*}
which implies that 
the variable $Y+b_B X_1+V_1$ is subject to the same distribution as $\frac{b_E a_B}{a_E} X_2$.
Therefore, 
the channel from $A$ to $\frac{a_E}{a_B}(B+V_1)$ has the same conditional distribution as that of 
the channel from $A$ to $E$.
Hence, the channel from Alice to Eve can be regarded as a degraded channel of 
the channel from Alice to Bob.
Hence, 
the capacity is given as 
the maximum of 
\begin{align}
&I(A;B)-I(A;E)= H(A|E)-H(A|B)= H(A|E)-H(A|BE) \nonumber \\
=&
\int_{-\infty}^{\infty} 
\Big(
H(A|E=e)- \int_{-\infty}^{\infty}H(A|B=b) 
P_{B|E=e}(db) \Big)P_E(de)\Label{BTR}
\end{align}
with respect to the choice of the distribution of $A$ \cite{CK79}, 
In this case, 
the conditional distribution 
$P_{A|B=b}$ is different from 
$P_{A|B=b'}$ when $b\neq b'$.
The assumption for $F$ guarantees that $V_1$ is not a deterministic value.
Hence, 
the conditional distribution 
$P_{B|E=e}$ has probability at least two points.
Hence, we find that 
$H(A|E=e)- \int_{-\infty}^{\infty}H(A|B=b) 
P_{B|E=e}(db) >0$.
Therefore, the value \eqref{BTR} is strictly positive.
Hence, we obtain the statement (i).

Next, we show (ii).
Let $V_2$ be a random variable such that the cumulative distribution function is $F$
and it is independent of $X_2$.
Then, we have
\begin{align*}
F_{V_2}*F_{\frac{b_E a_B}{a_E} X_2}
=F*\Phi_{\frac{b_E^2 a_B^2}{a_E^2}}
=( F* \Phi_{\frac{b_E^2 a_B^2}{a_E^2}-b_B^2})* \Phi_{b_B^2}
=F_{Y}* \Phi_{b_B^2}
\end{align*}
which implies that 
the variable $\frac{b_E a_B}{a_E} X_2+V_1$ is subject to the same distribution as $Y+b_B X_1 $.
Therefore, 
the channel from $A$ to $\frac{b_E a_B}{a_E} E +V_2$ has the same conditional distribution as that of 
the channel from $A$ to $B$.
Hence, the channel from Alice to Bob can be regarded as a degraded channel of 
the channel from Alice to Eve.
Thus, the wire-tap channel \eqref{4-24-1G} and \eqref{4-24-2G} cannot transmit secure information from Alice to Bob \cite{CK79},
which is the statement (ii).
\end{proof}

Now, we compare the conditions of Part (i) of Theorem \ref{TY} and Lemma \ref{TY2}.
When $\frac{a_B^2 b_E^2}{a_E^2+b_E^2}>\frac{b_E^2 a_B^2}{a_E^2}-b_B^2 $,
the condition of Part (i) of Theorem \ref{TY} 
is weaker than the condition of Part (i) of Theorem \ref{TY2}. 
Since 
\begin{align}
\frac{a_B^2 b_E^2}{a_E^2+b_E^2}-\Big(\frac{b_E^2 a_B^2}{a_E^2}-b_B^2 \Big)
=
\frac{b_E^4 b_B^2}{a_E^2(a_E^2+b_E^2)} 
\Big( (\frac{a_E}{b_E})^4 +(\frac{a_E}{b_E})^2-(\frac{a_B}{b_B})^2\Big)
=
\frac{b_E^4 b_B^2}{a_E^2(a_E^2+b_E^2)} 
\Big( \alpha_E^2 +\alpha_E-\alpha_B \Big),
\end{align}
this condition is equivalent to 
\begin{align}
\alpha_E^2 +\alpha_E>\alpha_B.
\Label{MMD}
\end{align}
Since $\alpha_E$ and $\alpha_B$
are the ratios between the signal power and the power of detector noise 
of Eve and Bob, respectively, it is natural to assume that 
this ratio of Eve is equal to or larger that that of Bob, which implies
the inequality \eqref{MMD}.
Hence, when this inequality holds,
the condition of Part (i) of Theorem \ref{TY} 
is weaker than the condition of Part (i) of Lemma \ref{TY2}. 
That is, our method has a higher possibility to generate secure keys.

To discuss more details, we assume that $Y$ is subject to a Gaussian distribution with variance $v_Y$.
Then, Theorem \ref{TY} shows that 
Alice and Bob can distill secure keys from $A$ and $B'$ by using our method
 if and only if 
\begin{align}
\frac{a_B^2 b_E^2}{a_E^2+b_E^2} > v_Y, \hbox{ i.e., }
\frac{\alpha_B }{\alpha_E+1}> v_Y'.\Label{LOG2}
\end{align}
Lemma \ref{TY2} shows that Alice can send secure information to Bob via the one-way protocol based on the wire-tap channel \eqref{4-24-1G} and \eqref{4-24-2G}
if and only if 
\begin{align}
\frac{b_E^2 a_B^2}{a_E^2}-b_B^2  > v_Y, \hbox{ i.e., }
\frac{\alpha_B}{\alpha_E}-1  > v_Y'. \Label{LOG}
\end{align}
Therefore, under the natural condition \eqref{MMD},
our method has weaker condition to distill secure keys than the condition for secure communication in the one-way protocol.
In fact, in the natural setting, 
Eve's ratio $\alpha_E$ is equal to or larger that Bob's ratio $\alpha_B$, which implies
that the condition \eqref{LOG} does not holds for any $v_Y'$.
However, even under this case, 
we have a possibility to satisfy the condition \eqref{LOG2} for our method.

To discuss the detail, we consider the secure capacity for the wire-tap channel \eqref{4-24-1G} and \eqref{4-24-2G}
under the energy constraint .
When Alice the wire-tap channel $n$ times, we denote the set of codewords by ${\cal M}_n \subset \mathbb{R}^n$.
For ${\cal M}_n$, 
we impose the condition $ \sum_{i=1}^n x_i^2\le n$ with any sequence $(x_1,\ldots, x_n)\in {\cal M}_n $.
Under this condition, the secure capacity is calculated as \cite{LH}
\begin{align}
\Big(\frac{1}{2}\log (1+\frac{ a_B^2}{ v_Y+b_B^2})
-\frac{1}{2}\log (1+\frac{a_E^2}{b_E^2}) \Big)_+\nonumber  \\
=\Big(\frac{1}{2}\log (1+\frac{ \alpha_B }{ v_Y'+1})
-\frac{1}{2}\log (1+\alpha_E) \Big)_+.
\end{align}
With this rate, the strong security also holds \cite[Appendix D-C]{HM}.
This quantity is strictly positive if and only if the condition \eqref{LOG} holds.

\subsection{Estimation with confidence level in finite-length case}
\subsubsection{Estimation of parameters}
To estimate the average $e_B$, the variance $v_B$,
the covariance $c_{AB}$ between $A$ and $B$, and the distribution $P_{E'}$, 
we set the confidence level $1-\epsilon$.
Due to the quasi static assumption,
the random variables 
$A$, $B$, and $E'$ are is subject to an identical and independent distribution.
Since the distribution of $B$ is unknown,
if the number $l$ of samples is not so large,
it is not easy to give the confidence interval for the estimation of $e_B$.
However, when the number $l$ is sufficiently large (e.g., more than $10^4$),
it is allowed to apply Gaussian approximation for a given confidence level $1-\epsilon$.
When the variance is unknown, we need to employ the $t$-distribution of degree $l-1$.
However, since the number $l$ is sufficiently large,
it can be well approximated by the Gaussian distribution.
Now, we use the $\epsilon$ percent point $Z_\epsilon$ (the quantile) of 
the standard Gaussian distribution.
The confidence interval of the average $e_B$ is 
$[\hat{e}_B- \sqrt{\bar{v}_B} Z_\epsilon l^{-\frac{1}{2}},
\hat{e}_B+ \sqrt{\bar{v}_B} Z_\epsilon l^{-\frac{1}{2}}]$
by using the sample mean $\hat{e}_B$ and the unbiased variance $\bar{v}_B$.
However, due to the largeness of $l$, 
the unbiased variance $\bar{v}_B$ can be replaced by the sample variance
because the difference is almost negligible.

Next, we estimate the variance $v_B$.
When $B$ is subject to the Gaussian distribution,
we need to employ the $\chi^2$ distribution of degree $l-1$
unless the number $l$ is sufficiently large.
Now, we can apply Gaussian approximation because the number $l$ is sufficiently large.
To estimate the variance of $(B-e_B)^2$, 
we define the estimate 
$\bar{w}_B
:=\frac{1}{l-1}\sum_{i=1}^l 
((\bar{B}_i-\hat{e}_B)^2 
- \frac{1}{l-1}\sum_{i=1}^l (\bar{B}_j-\hat{e}_B)^2)^2$,
which approximates the variance of $(B-e_B)^2$.
The confidence interval of the variance $v_B$ is 
$[\bar{v}_B- \sqrt{\bar{w}_B} Z_\epsilon l^{-\frac{1}{2}},
\bar{v}_B+ \sqrt{\bar{w}_B} Z_\epsilon l^{-\frac{1}{2}}]$.

Now, we estimate the covariance $c_{AB}$ between $A$ and $B$ by using 
the sample mean $\hat{c}_{AB}$ of $A(B-\hat{e}_B)$.
To get the confidence interval of the covariance $c_{AB}$,
we employ the unbiased variance $\hat{v}_{AB}$ of 
$(A-\hat{e}_A)(B-\hat{e}_B)$, which approximates the variance of 
the sample mean $\hat{c}_{AB}$.
So, the confidence interval of the covariance $c_{AB}$ is 
$[\bar{c}_{AB}- \sqrt{\bar{v}_{AB}} Z_\epsilon l^{-\frac{1}{2}},
\bar{c}_{AB} + \sqrt{\bar{v}_{AB}} Z_\epsilon l^{-\frac{1}{2}}]$,
where
$\bar{c}_{AB}:=\frac{1}{l-1}\sum_{i=1}^l (\bar{A}_i-\hat{e}_A)(\bar{B}_i-\hat{e}_B)$.

\subsubsection{Estimation of distribution}\Label{ESDIS}
To get the estimate $\hat{P}_{E'}$, 
we define the random variable $\bar{A}^c:=B- \hat{c}_{AB} A- \hat{e}_B$.
Using the second $l$ data $(\tilde{A}_1,\tilde{B}_1), \ldots, 
(\tilde{A}_l,\tilde{B}_l)$, 
we define our estimate
\begin{align}
\hat{F}_{\bar{A}^c}(x)&:={1 \over l}\sum_{i=1}^l I_{[-\infty,x]}(\bar{A}^b_i)
\end{align}
based on Kolmogorov-Smirnov test \cite{Kolmogorov,Smirnov},
where
$\bar{A}^b_i:=\tilde{B}_i- \hat{c}_{AB} \tilde{A}_i- \hat{e}_B$

When $ \frac{\hat{c}_{AB}^2 a_E^2}{a_E^2+b_E^2}\ge b_B^2$,
we define our estimate as
\begin{align}
\hat{F}_{E'}:=
{\cal G}_{\frac{\hat{c}_{AB}^2 a_E^2}{a_E^2+b_E^2}- b_B^2}[\hat{F}_{\bar{A}^c}].
\end{align}
Otherwise,
we define our estimate as $\hat{F}_{E''}:=\hat{F}_{\bar{A}^c}$.
%because the distribution ${P}_{E''}$ is the same as ${P}_{\bar{A}^c}$.
The estimate $\hat{P}_{E'}$ ($\hat{P}_{E''}$) of the distribution $P_{E'}$ ($P_{E''}$)
is given as the derivative of $\hat{F}_{E'}$ ($\hat{F}_{E''}$).

%we define our estimate as
%\begin{align}
%\hat{F}_{E'}:=
%{\cal G}_{b_B^2-\frac{\bar{c}_{AB}^2 b_E^2}{a_E^2+b_E^2}}^{-1}
%[\hat{F}_{\bar{A}^c}].
%\end{align}

Now, we evaluate the error of these estimates $\hat{F}_{E'}$ and $\hat{F}_{E''}$.
To estimate the error of this estimator,
we define the Kolmogorov distribution function $L(x)$\cite{MTW};
\begin{align}
L(x):=1-2\sum_{k=1}^\infty (-1)^{k-1} e^{-2k^2 x^2}=\frac{\sqrt{2\pi}}{x}\sum_{k=1}^\infty e^{-(2k-1)^2\pi^2/(8x^2)},
\end{align}
which can also be expressed by the Jacobi theta function 
$\displaystyle \vartheta _{01}(z=0;\tau =2ix^{2}/\pi )$.
Then, we have the following lemma with an integer $l_{\epsilon,\delta}$,
which is defined later for two real numbers $\epsilon,\delta>0$.
This lemma will be shown in Subsection \ref{s11-17-1}.
\begin{lem}\Label{L11-10}
Under the condition $\frac{\hat{c}_{AB}^2 a_E^2}{a_E^2+b_E^2}\ge b_B^2$,
the estimate $\hat{F}_{E'}$ for ${F}_{E'}$ satisfies 
the inequality
\begin{align}
& \sup_x
|{F}_{E'}(x)
-\hat{F}_{E'}(x)| \nonumber \\
\le &
\frac{\sqrt{\hat{v}_{AB}}}{\sqrt{2\pi e}\hat{c}_{AB} \sqrt{l}}Z_{\epsilon}
+\frac{1}{\sqrt{l}} L^{-1}( 1-\epsilon)\Label{e11-10}
\end{align}
with confidence level $1-2(\epsilon+\delta)$ when $l\ge l_{\epsilon,\delta}$.
Under the condition $\frac{\hat{c}_{AB}^2 a_E^2}{a_E^2+b_E^2}< b_B^2$,
the estimate $\hat{F}_{E''}$ for ${F}_{E''}$ satisfies 
the inequality
\begin{align}
& \sup_x
|{F}_{E''}(x)
-\hat{F}_{E''}(x)|  \nonumber \\
\le &
\frac{\sqrt{\hat{v}_{AB}}}{\sqrt{2\pi e}\hat{c}_{AB} \sqrt{l}}Z_{\epsilon}
+\frac{1}{\sqrt{l}} L^{-1}( 1-\epsilon)\Label{e11-10-2}
\end{align}
with confidence level $1-2(\epsilon+\delta)$ when $l\ge l_{\epsilon,\delta}$.
\end{lem}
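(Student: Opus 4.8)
The plan is to prove both displayed inequalities together, treating the second one (the case $\frac{\hat{c}_{AB}^2 a_E^2}{a_E^2+b_E^2}<b_B^2$, where $\hat{F}_{\bar{E}''}=\hat{F}_{\bar{A}^c}$) as the degenerate case of the first in which no Gaussian convolution ${\cal G}$ is applied. Set $\delta:=a_B-\hat{c}_{AB}$. First I would unpack what the estimator is built from. Conditioning on the first block of $l$ samples freezes $\hat{e}_B,\hat{c}_{AB},\hat{v}_{AB}$; the second block then gives i.i.d.\ data $\bar{A}^b_i=\tilde{B}_i-\hat{c}_{AB}\tilde{A}_i-\hat{e}_B$, and since $\tilde{B}_i=a_B\tilde{A}_i+(\tilde{Y}_i+b_B\tilde{X}_{1,i})+e_B$ with $\tilde{A}_i$ a standard Gaussian independent of $\tilde{Y}_i+b_B\tilde{X}_{1,i}$, we get $\bar{A}^b_i=(\tilde{Y}_i+b_B\tilde{X}_{1,i}+(e_B-\hat{e}_B))+\delta\tilde{A}_i$. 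Because the shift $e_B-\hat{e}_B$ is common to $\bar{A}^b_i$ and to $\bar{E}'=E'+e_B-\hat{e}_B$, it cancels in every sup-norm distance between distribution functions, so the $\hat{e}_B$-error drops out entirely; write $F_\circ$ for the ``noise'' law obtained with the true $a_B$. Then Theorem \ref{T2X} and the convolution identities of Section \ref{s0} give $F_{\bar{E}'}={\cal G}_{\frac{a_B^2 a_E^2}{a_E^2+b_E^2}-b_B^2}[F_\circ]$ and $F_{\bar{E}''}=F_\circ$, while the true conditional law of the $\bar{A}^b_i$ is ${\cal G}_{\delta^2}[F_\circ]$.

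Next I would split the error by the triangle inequality into a sampling part and a parameter part. For the sampling part, $\bar{A}^b_i$ has a continuous law (the $b_B\tilde{X}_{1,i}$ component, or the convolution, supplies a density), so the Kolmogorov--Smirnov statistic $\sqrt{l}\,\sup_x|\hat{F}_{\bar{A}^c}(x)-{\cal G}_{\delta^2}[F_\circ](x)|$ is distribution-free and, for $l$ large, has law close to the Kolmogorov law $L$; hence it is $\le L^{-1}(1-\epsilon)$ with probability tending to $1-\epsilon$. Convolution with a probability density being a sup-norm contraction, pushing both sides through ${\cal G}_{\frac{\hat{c}_{AB}^2 a_E^2}{a_E^2+b_E^2}-b_B^2}$ only shrinks the gap; this yields the $\frac1{\sqrt{l}}L^{-1}(1-\epsilon)$ summand in both cases. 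For the parameter part, everything reduces to comparing two Gaussian smoothings of the \emph{same} base law $F_\circ$: the target carries extra variance $\frac{a_B^2 a_E^2}{a_E^2+b_E^2}-b_B^2$ and the estimate, after the KS step, extra variance $\delta^2+\frac{\hat{c}_{AB}^2 a_E^2}{a_E^2+b_E^2}-b_B^2$, the two differing by $O(|\delta|)$. The analytic engine is the elementary bound that convolving a distribution function $H$ having a bounded density by $N(0,\sigma^2)$ displaces it in sup-norm by $O(\sigma)$; I would derive it from the heat-semigroup identity ${\cal G}_{\sigma^2}[H]-H=\frac12\int_0^{\sigma^2}\partial_x^2{\cal G}_t[H]\,dt$, the factor $e^{-1/2}$ in the constant $\frac1{\sqrt{2\pi e}}$ originating from the extremal value (at $\pm\sqrt{t}$) of the derivative of the $N(0,t)$ density. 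Feeding in the confidence bound $|c_{AB}-\hat{c}_{AB}|\le\sqrt{\hat{v}_{AB}}\,Z_\epsilon\,l^{-1/2}$ -- valid with probability near $1-\epsilon$ for $l$ large by the Gaussian/central-limit approximation already used for the other confidence intervals -- produces the first summand $\frac{\sqrt{\hat{v}_{AB}}}{\sqrt{2\pi e}\,\hat{c}_{AB}\sqrt{l}}Z_\epsilon$. A union bound over ``KS statistic small'' and ``$\hat{c}_{AB}$ in its interval'' then yields confidence ``almost $1-2\epsilon$'', the word ``almost'' and the hypothesis of large $l$ covering the two asymptotic approximations just invoked. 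The $\bar{E}''$ case needs only $\sup_x|{\cal G}_{\delta^2}[F_\circ](x)-F_\circ(x)|=O(|\delta|)$ from the same lemma, since no ${\cal G}$-convolution is applied.

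The step I expect to be the real obstacle is keeping the parameter part \emph{linear} in $|\delta|=|a_B-\hat{c}_{AB}|$ in the $\bar{E}'$ case: the two smoothings have variances differing by $\Delta=O(|\delta|)$, and the crude bound $\sup|{\cal G}_{\tau_2}[H]-{\cal G}_{\tau_1}[H]|\lesssim\|H'\|_\infty\sqrt{\Delta}$ would only give $O(\sqrt{|\delta|})$. One has to use instead $\sup|{\cal G}_{\tau_2}[H]-{\cal G}_{\tau_1}[H]|\le\|H'\|_\infty\sqrt{2/\pi}\,|\sqrt{\tau_2}-\sqrt{\tau_1}|$ together with $|\sqrt{\tau_2}-\sqrt{\tau_1}|=|\tau_2-\tau_1|/(\sqrt{\tau_1}+\sqrt{\tau_2})=O(|\delta|)$, the denominator being bounded away from $0$ throughout the regime $\frac{\hat{c}_{AB}^2 a_E^2}{a_E^2+b_E^2}\ge b_B^2$ and degenerating only at the boundary $\frac{a_B^2 a_E^2}{a_E^2+b_E^2}=b_B^2$, where the large-$l$/``almost'' slack takes over. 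Pinning the constant down to exactly $\frac1{\sqrt{2\pi e}\,\hat{c}_{AB}}$ rather than a generic $O(|\delta|)$ is then a matter of choosing the right density bound for $H$ and recombining the variances $\frac{\hat{c}_{AB}^2 a_E^2}{a_E^2+b_E^2}-b_B^2$ and $\delta^2$ carefully; I would leave that bookkeeping to the end.
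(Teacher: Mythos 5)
Your overall architecture coincides with the paper's: split $\sup_x|F_{\bar E'}-\hat F_{\bar E'}|$ by the triangle inequality into a sampling term and a parameter term, kill the sampling term by Kolmogorov--Smirnov plus the fact that convolution is a sup-norm contraction (the paper's Lemma \ref{L11-19-2}), bound $|a_B-\hat c_{AB}|$ by its Gaussian confidence interval, and take a union bound to get confidence $1-2\epsilon$. All of that matches.

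The gap is in the parameter term, and it sits exactly at the step you flag as ``the real obstacle.'' You compare ${\cal G}_{\tau_1}[F_\circ]$ and ${\cal G}_{\tau_2}[F_\circ]$ where $F_\circ$ is the law of $Y+b_BX_1+(e_B-\hat e_B)$ and $\tau_1,\tau_2$ are the \emph{excess} variances beyond $b_B^2$, via $\|F_\circ'\|_\infty\,\sqrt{2/\pi}\,|\sqrt{\tau_2}-\sqrt{\tau_1}|$. Near the boundary $\frac{a_B^2a_E^2}{a_E^2+b_E^2}=b_B^2$ both $\tau_1$ and $\tau_2$ vanish, $|\sqrt{\tau_2}-\sqrt{\tau_1}|$ degrades to $O(\sqrt{|\tau_2-\tau_1|})=O(\sqrt{|\delta|})=O(l^{-1/4})$, and no ``large-$l$ slack'' rescues this: the word ``almost'' in the lemma refers only to the asymptotic validity of the confidence level (CLT and KS asymptotics), not to a multiplicative slack in the bound, so an $O(l^{-1/4})$ term cannot be absorbed into an $O(l^{-1/2})$ claim. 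Moreover the constant your route produces is governed by $\|F_\circ'\|_\infty\le(b_B\sqrt{2\pi})^{-1}$, which has no reason to reduce to the stated $(\hat c_{AB}\sqrt{2\pi e})^{-1}$; the ``bookkeeping'' you defer is precisely where the argument must change. The paper avoids both problems by cutting the convolution at a different place: it writes both $F_{\bar E'}$ and $\Phi_{\sqrt{\tau_2}}*F_{\bar A^c}$ as Gaussian smoothings of the law of $Y+e_B-\hat e_B$ alone (absorbing the $b_BX_1$ part into the kernel), strips that common base law by the contraction lemma, and is left with $\sup_x|\Phi_{\sigma_1}(x)-\Phi_{\sigma_2}(x)|$ for two \emph{full} kernels whose variances are $\ge b_B^2$ under the hypothesis $\frac{\hat c_{AB}^2a_E^2}{a_E^2+b_E^2}\ge b_B^2$ and whose ratio is $\frac{\hat c_{AB}}{a_B}+O(\delta^2)$ uniformly. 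An exact computation of $\sup_x|\Phi_1(x)-\Phi_a(x)|$ (the paper's Lemma \ref{L11-19-1}) then yields $\frac{1}{\sqrt{2\pi e}}\bigl|\frac{\hat c_{AB}}{a_B}-1\bigr|+O(\delta^2)$ with the correct constant and with no degeneration at the boundary. To repair your proof you would need to replace your heat-kernel estimate on the excess-variance smoothings by this comparison of the full kernels (or otherwise show the kernels being compared have variance bounded below by $b_B^2$), rather than finish the bookkeeping as stated.
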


To show define $l_{\epsilon,\delta}$,
we employ Kolmogorov-Smirnov test \cite{Kolmogorov,Smirnov}, whose detail is the following.
We consider the independent random variables $\bar{X}_1,\ldots, \bar{X}_l$
subject to the distribution $P_X$, whose cumulative distribution function is $F_X(x)$.
Then, we define the empirical distribution function 
\begin{align}
F_{X,l}(x):={1 \over l}\sum_{i=1}^l I_{[-\infty,x]}(\bar{X}_i),
\end{align}
where $I_{[-\infty,x]}(X)$ is the indicator function, equal to 1 if $X \le x$ and equal to $0$ otherwise.
We define the random variable 
\begin{align}
D_{X,l}:= \sup_{x} |F_{X,l}(x) -F_X(x)|.
\end{align}
Then, we have the following lemma.

\begin{proposition}[Kolmogorov-Smirnov test\cite{MTW}]\Label{MLF}
The equation
\begin{align}
\lim_{l \to \infty} 
{\rm Pr} (D_{X,l} \le \frac{1}{\sqrt{l}} L^{-1}( 1-\epsilon))
= 1-\epsilon
\end{align}
holds.
%That is, we can say that $D_{X,l} \le \frac{1}{\sqrt{l}} L^{-1}( 1-\epsilon)$ with confidence level almost $1-\epsilon$ when $l$ is sufficiently large.
\end{proposition}

For two real numbers $\epsilon,\delta>0$, we define the integer $l_{\epsilon,\delta}$ as
\begin{align}
l_{\epsilon,\delta}:=\min
\Big\{l' \Big|
{\rm Pr} (D_{X,l} \le \frac{1}{\sqrt{l}} L^{-1}( 1-\epsilon))
\le 1-(\epsilon+\delta) \hbox{ for } l\ge l' \Big\}.
\end{align}
The above Proposition guarantees the finiteness of $l_{\epsilon,\delta}$.
Thus, the estimate $\hat{F}_{\bar{A}^c}$ for $F_{\bar{A}^c}$ satisfies
the relation
\begin{align}
\sup_x
|F_{\bar{A}^c}(x)-\hat{F}_{\bar{A}^c}(x)| 
\le\frac{1}{\sqrt{l}}L^{-1}( 1-\epsilon)
\Label{eq11-19-1}
\end{align}
with confidence level $1-(\epsilon+\delta)$ when $l \ge l_{\epsilon,\delta}$.

\subsection{Security analysis with estimation in finite-length case}
In our protocol given in Section \ref{s31},
in addition to the choice of hash functions,
there are other random variables that are publicly transmitted.
%To discuss the situation when we decide the sacrifice bit length via the estimation of the parameters and the distribution, we need to consider the case when 
For example, the information for error estimation is publicly transmitted between Alice and Bob.
So, the collection of them are denoted by $C$, and its distribution is denoted by $P_C$, which depends on the distribution $P_Y$ and the parameters $a_B$ and $e_B$.
In this case, the length of sacrifice bit length $m_1$ depends on $C$. So, it is denoted by $m_1(C)$.
The distribution $P_H$ of the choice of the hash function also depends $m_1(C)$. 
So, it is given by the conditional distribution $P_{H|m_1(C)} $.
Since the length of final keys also depends on $C$,
the uniform distribution is given by the conditional distribution $P_{{\rm Uni},K|C}$.
Thus, the security criteria \eqref{X1} and \eqref{X2} are modified to 
\begin{align}
I'(K: {\cal E}|H C) %\nonumber\\
:=& \sum_{c,h} P_C(c) P_{H|m_1(C)=m_1(c)}(h) D(P_{{\cal E}K|H=h,C=c}\|P_{{\cal E}|H=h,C=c}\times P_{{\rm Uni},K|C=c}) \Label{X1b} \\
d(K:{\cal E}|H C) %\nonumber\\
:=& \sum_{c,h} P_C(c) P_{H|m_1(C)=m_1(c)}(h) d(P_{{\cal E}K|H=h,C=c},P_{{\cal E}|H=h,C=c}\times P_{{\rm Uni},K|C=c}),\Label{X2b}
\end{align}
where $P_{{\rm Uni},K|C=c}$ is the uniform distribution of $K$ with the length determined by $C=c$.

For given public information $C$, we define 
\begin{align}
2^{\hat{\phi}(C,\epsilon)(t)}:=
\left\{
\begin{array}{ll}
2^{\phi[\hat{P}_{E'},\underline{v}_{B|E'}](t)}
+2(1- 2^{-t})
(\frac{\sqrt{\hat{v}_{AB}}}{\sqrt{2\pi e l}\hat{c}_{AB} }Z_{\epsilon}
+\frac{1}{\sqrt{l}} L^{-1}( 1-\epsilon) )
&
\hbox{when }
\frac{\hat{c}_{AB}^2 a_E^2}{a_E^2+b_E^2}\ge b_B^2
\\
2^{\phi[\hat{P}_{E''},\underline{c}_{AB}^2 ](t)}
+2(1- 2^{-t})
(\frac{\sqrt{\hat{v}_{AB}}}{\sqrt{2\pi e l}\hat{c}_{AB} }Z_{\epsilon}
+\frac{1}{\sqrt{l}} L^{-1}( 1-\epsilon) )
&
\hbox{otherwise,}
\end{array}
\right.
\end{align}
where
$\underline{v}_{B|E'}:=\frac{(\hat{c}_{AB}-\frac{\sqrt{\hat{v}_{AB}}}{ \sqrt{l}}Z_{\epsilon}
)^2 b_E^2}{a_E^2+b_E^2} +{b}_{B}^2$
and
$\underline{c}_{AB}:=\hat{c}_{AB}-\frac{\sqrt{\hat{v}_{AB}}}{ \sqrt{l}}Z_{\epsilon}$.

Hence, 
using Proposition \ref{MLF} (Kolmogorov-Smirnov test\cite{MTW}),
we obtain the following lemma and theorem, which will be shown in Subsection \ref{s11-17-2}. 
\begin{lem}\Label{L11-11b}
The function $\hat{\phi}(C,\epsilon)$ satisfies the inequality
\begin{align}
2^{\phi[P_{E'},v_{B|E'}](t)}
\le
2^{\hat{\phi}(C,\epsilon)(t)}\Label{11-20-3}
\end{align}
with confidence level 
$1-2(\epsilon+\delta)$ when $l\ge l_{\epsilon,\delta}$.
\end{lem}

Combining Lemmas \ref{LLO} and \ref{L11-11b} with 
Theorem \ref{T2X} and \eqref{E16D},
we obtain the following theorem.
%, which will be shown in Subsection \ref{s11-17-3}. 
\begin{thm}\Label{11-17T}
The function $\hat{\phi}(C,\epsilon)$ satisfies the inequality
\begin{align}
I'(K:{\cal E}|HC) \le & 
\inf_{s \in (0,1) }
\frac{1}{s}
2^{ s(n-m_1)+ n\hat{\phi}(C,\epsilon) (s)} ,
\Label{28-1X}\\
d(K:{\cal E}|HC) \le &
3 \min_{s \in [0,\frac{1}{2}]} 2^{s(n-m_1)
+ n\hat{\phi}(C,\epsilon)(s)} ,
\Label{28-2X}
\end{align}
with confidence level $1-2(\epsilon+\delta)$ when $l\ge l_{\epsilon,\delta}$.
\end{thm}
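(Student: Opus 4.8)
The plan is to combine the "true-parameter" security bounds \eqref{28-1}--\eqref{28-2} with the estimation bound of Lemma \ref{L11-11b}, handling the conditioning on the public data $C$ by a straightforward averaging argument. First, I would recall that \eqref{28-1} and \eqref{28-2} were established for fixed values of the relevant parameters and fixed distribution $P_{\bar E'}$ (or $P_{\bar E''}$); applied conditionally on a given value $C=c$, with $m_1=m_1(c)$ and with the conditional entropies computed from $P_{\bar E'|C=c}$, they give
\begin{align}
I'(K:E|H,C=c) &\le \inf_{s\in(0,1)}\frac{1}{s}2^{s(n-m_1(c))+n\phi[P_{\bar E'},v_{B|E'}](s)},\\
d(K:E|H,C=c) &\le 3\min_{s\in[0,1/2]}2^{s(n-m_1(c))+n\phi[P_{\bar E'},v_{B|E'}](s)}.
\end{align}
Here I am using that, conditioned on $C=c$, Eve's side information is still of the promised product/Gaussian form (this is exactly what the reductions around \eqref{11-3-1C} and the definition of $\bar E'$ in Section \ref{s33} guarantee), so the cited results of \cite{H-tight,H-leaked,H-ep,H-cq} apply verbatim.

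Next I would substitute the bound of Lemma \ref{L11-11b}. That lemma states $2^{\phi[P_{\bar E'},v_{B|E'}](t)}\le 2^{\hat\phi(C,\epsilon)(t)}$ with confidence level almost $1-2\epsilon$; note that $\hat\phi(C,\epsilon)$ is a function of the \emph{observed} data only, hence computable by Alice and Bob. On the event of probability almost $1-2\epsilon$ on which this inequality holds, monotonicity of $x\mapsto 2^{s(n-m_1)}x^{\,n}$ in $x\ge 0$ (and of $x\mapsto \tfrac1s x$, $x\mapsto 3x$) lets me replace $2^{\phi[P_{\bar E'},v_{B|E'}](s)}$ by $2^{\hat\phi(C,\epsilon)(s)}$ inside the infimum/minimum, for every $s$; since $n\phi\le n\hat\phi$ componentwise, $n\phi[P_{\bar E'},v_{B|E'}](s)=\log\big(2^{\phi[\cdots](s)}\big)^n\le \log\big(2^{\hat\phi(C,\epsilon)(s)}\big)^n=n\hat\phi(C,\epsilon)(s)$, so
\begin{align}
I'(K:E|H,C=c) &\le \inf_{s\in(0,1)}\frac{1}{s}2^{s(n-m_1)+n\hat\phi(C,\epsilon)(s)},\\
d(K:E|H,C=c) &\le 3\min_{s\in[0,1/2]}2^{s(n-m_1)+n\hat\phi(C,\epsilon)(s)},
\end{align}
on that high-probability event. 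Finally, averaging these conditional bounds over $c$ against $P_C$ (and over $h$ against $P_{H|m_1(C)=m_1(c)}$, which only affects the left-hand sides through the definitions \eqref{X1b}--\eqref{X2b}) and using that the right-hand sides are already stated in terms of $C$, I obtain \eqref{28-1X}--\eqref{28-2X}, valid with confidence level almost $1-2\epsilon$. Strictly, one should note that the "confidence-level" statement means: with $P_C$-probability at least about $1-2\epsilon$ the realized value of $C$ is such that the displayed inequality holds; this is exactly the content inherited from Lemma \ref{L11-11b}.

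The only genuinely delicate point is the interchange between the probabilistic ("confidence level $1-2\epsilon$") statement of Lemma \ref{L11-11b} and the deterministic form of the security inequalities \eqref{28-1}--\eqref{28-2}: one must be careful that the quantities $m_1$, $\hat\phi(C,\epsilon)$ and the key length all depend on the same observed data $C$, so that the bound is a genuine a posteriori guarantee rather than an expectation. Everything else is monotonicity of elementary functions plus the linearity of the averaging in \eqref{X1b}--\eqref{X2b}, which I would not spell out in detail. I also do not re-prove convexity of $t\mapsto(n-m_1)+n\hat\phi(C,\epsilon)(t)$ here, since that is only needed for the \emph{computability} remark (bisection method) and not for the inequality itself.
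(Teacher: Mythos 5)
Your proposal is correct and follows essentially the same route as the paper, which simply states that Theorem \ref{11-17T} is obtained by combining Lemma \ref{L11-11b} with the true-parameter bounds \eqref{28-1} and \eqref{28-2}. The only difference is that you spell out the conditioning on $C=c$ and the averaging in \eqref{X1b}--\eqref{X2b}, which the paper leaves implicit; this is a faithful elaboration rather than a different argument.
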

When we focus on the security criterion $I'(K:{\cal E}|HC)$, 
by using Theorem \ref{11-17T},
given a security level $\kappa$ and the observed values, 
the sacrifice bit length $m_1$ is chosen as
\begin{align}
\argmin \bigg\{m_1 \bigg| \inf_{s \in (0,1) }
\frac{1}{s} 2^{ s(n-m_1)+ n\hat{\phi}(C,\epsilon) (s)} \le \kappa\bigg\}.
\Label{28-1Y}
\end{align}
When we focus on the other security criterion $d(K:E|HC) $, 
the sacrifice bit length $m_1$ is chosen as
\begin{align}
\argmin \bigg\{m_1 \bigg| 
3 \min_{s \in [0,\frac{1}{2}]} 2^{s(n-m_1)
+ n\hat{\phi}(C,\epsilon)(s)} \le \kappa\bigg\}.
\Label{28-2Y}
\end{align}
That is, when we choose the sacrifice bit length $m_1$ based on \eqref{28-1Y} or \eqref{28-2Y},
the leaked information is less than $\kappa$ with confidence level $1-2(\epsilon+\delta)$ when $l\ge l_{\epsilon,\delta}$.

\subsection{Typical case}\Label{S5E}
To treat our model more concretely, in the following typical case,
we consider the key generation rate when the sacrifice bit length is decided by the formulas \eqref{28-1Y} and \eqref{28-2Y}. 
In this subsection, for simplicity, we discuss only the case when $Y$ is a Gaussian random variable and $e_B=\hat{e}_B=0$
while the employed formulas \eqref{28-1Y} and \eqref{28-2Y} do not assume that $Y$ is a Gaussian random variable.
First, we assume that there is no error between the true values and our estimations.
That is,
$\hat{e}_B =e_B$, $\hat{c}_{AB}=a_B$,
$\hat{v}_B =a_B^2+v_B+b_B^2$,
and $\hat{P}_{A^c}=P_{A^c}$.
By using the probability density function $\varphi$ of the standard Gaussian variable and the correlation coefficient,
the quantity $H [P_{\bar{E}},v_{B|E'}]$ can be written 
to be $H [\varphi,\frac{1-\rho_E^2}{\rho_E^2}]$
because $\frac{1-\rho_E^2}{\rho_E^2}= \frac{v_{B|E'}}{v_{E'}}$.
So, the required sacrifice bit rate is $1-H [\varphi,\frac{1-\rho_E^2}{\rho_E^2}]$,
which is mutual information between $E'$ and $B'$.
On the other hand, 
the mutual information $ I(\hat{P}_{A^c}, \hat{c}_{AB})$
between $A$ and $B'$ is calculated to be $1-H [\frac{1-\rho_A^2}{\rho_A^2}]$.
That is, the secure key generation rate is $H [\varphi,\frac{1-\rho_E^2}{\rho_E^2}]-H [\varphi,\frac{1-\rho_A^2}{\rho_A^2}]$ under the reverse information reconciliation.

Now, we consider the following special case.
Eve's detector has the same performance as Bob' detector, i.e.,
$b_E=b_B$, which will be denoted by $b$.
The coefficients $a_B$ and $a_E$ for attenuations equals the same value $\sqrt{2} b$.
By using the variance of the noise $Y$, 
the correlation coefficients $\rho_A$ and $\rho_E$ are calculated as 
$\rho_A^2= \frac{2}{3+\frac{v_Y}{b^2}}$
and $\rho_E^2= \frac{4+\frac{3v_Y}{b^2}}{9+\frac{3v_Y}{b^2}}$, i.e.,
$\frac{1-\rho_A^2}{\rho_A^2}=\frac{1+\frac{v_Y}{b^2}}{2}$
and 
$\frac{1-\rho_E^2}{\rho_E^2}=\frac{5}{4+\frac{3v_Y}{b^2}}$.
If the noise $Y$ generated in the transmission is not zero,
the mutual information between $A$ and $E$ is larger than that between $A$ and $B$.
So, the forward information reconciliation cannot generate any keys.
However, when we employ the reverse information reconciliation,
there is a possibility to generate secure keys.
%Due to Lemma \ref{L1},
When $v_Y < \frac{2b^2}{3}$, we have
$\rho_A^2 > \rho_E^2$, i.e., 
the secure key generation rate is the positive value
$H [\varphi,\frac{5}{4+\frac{3v_Y}{b^2}}]
-H [\varphi,\frac{1+\frac{v_Y}{b^2}}{2}]$
under the reverse information reconciliation,
which is numerically calculated as Fig. \ref{rate1}.
In particular, when $v_Y = \frac{b^2}{5}$,
the secure key generation rate 
$H [\varphi,\frac{5}{4+\frac{3v_Y}{b^2}}]
-H [\varphi,\frac{1+\frac{v_Y}{b^2}}{2}]$ is 0.108,
and the mutual informations $1-H [\varphi,\frac{1+\frac{v_Y}{b^2}}{2}]$
and 
$1-H [\varphi,\frac{5}{4+\frac{3v_Y}{b^2}}]$
are 0.372 and 0.264.
In this special case, 
the coding rate of error correcting code needs to be less than 0.372, 
and the sacrifice bit rate needs to be greater than and 0.264.

\begin{figure}[htbp]
\begin{center}
\scalebox{1}{\includegraphics[scale=1]{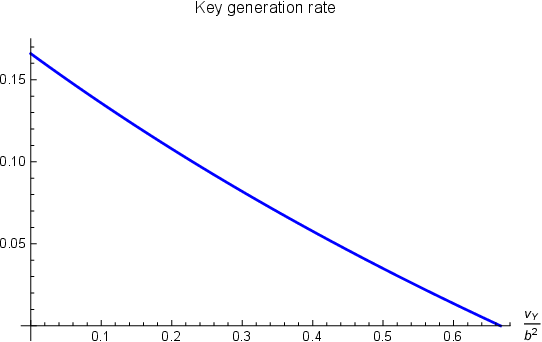}}
\end{center}
\caption{Key generation rate}
\Label{rate1}
\end{figure}%

However, when we care about the finiteness of the block length of our code,
the amount of leaked information of our final keys is not zero even though
the key generation rate is less than 
$H [\varphi,\frac{5}{4+\frac{3v_Y}{b^2}}]
-H [\varphi,\frac{1+\frac{v_Y}{b^2}}{2}]$.
To discuss this issue, we need to care about estimation error.
In the following, we assume the same assumption as the above discussion except for the relation between the true values and our estimations.
For this purpose, we briefly discuss the upper bounds \eqref{28-1X} and \eqref{28-2X} by taking account into estimation error.
To keep a high precision, we set the confidence level to be $1-10^{-4}$, i.e., $\epsilon=5 \times 10^{-5}$.
So, $Z_\epsilon=4.06$ and $L^{-1}(1-\epsilon)=2.30$ \cite{Ferguson}.
For example, we employ block length $n=1,000,000$. 
So, it is natural to choose the number of sampling to be the same value$1,000,000$, i.e., $l=500,000$.
In graph \ref{leaked1},
we numerically calculate the logarithm
$s(n-m_1) + n\hat{\phi}(C,\epsilon)(s)+\log 3$
of the upper bounds appeared in and \eqref{28-2X} as a function of $s$
when $v_Y = \frac{b^2}{5}$ and the sacrifice bit length $m_1$ is 
$0.30\times 1,000,000=300,000$.
The minimum value is $-867$ and is realized when $s=0.07$.
That is, when the required security level $\kappa$ is chosen to be $2^{-867}$,
the sacrifice bit length given in \eqref{28-2Y} is $300,000$.
Here, the logarithm $ s(n-m_1)+ n\hat{\phi}(C,\epsilon) (s) - \log s$
of the upper bounds appeared in \eqref{28-1X}
has almost the same behavior as that in \eqref{28-2X}.

In this numerical calculation, we need the value $\hat{v}_{AB}$.
When $Y$ is a Gaussian random variable,
the expectation of $\hat{v}_{AB}$ is 
$ 2c_{AB}^2+ v_B $, which equals 
$ 2a_B^2+ (a_B^2+b_B^2+v_Y)=7 b^2+v_Y$.
Notice that the variance of $A$ is $1$.
Also, $\frac{\underline{v}_{B|E'}}{v_{E'}}$
is 
$(\frac{(\sqrt{2}-\frac{\sqrt{\hat{v}_{AB}}}{ b\sqrt{l}}Z_{\epsilon}
)^2 }{3} +1)/(\frac{4}{3}+\frac{v_Y}{b^2})
=(\frac{(\sqrt{2}-\frac{\sqrt{7+\frac{v_Y}{b^2}}}{ \sqrt{l}}Z_{\epsilon}
)^2 }{3} +1)/(\frac{4}{3}+\frac{v_Y}{b^2})$.

%We employ this value in the above numerical calculation.

\begin{figure}[htbp]
\begin{center}
\scalebox{1}{\includegraphics[scale=1]{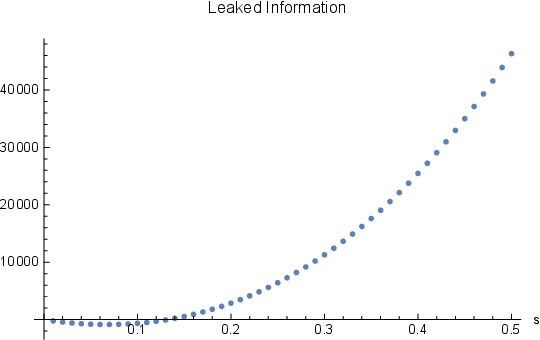}}
\end{center}
\caption{Logarithm of upper bound of leaked information}
\Label{leaked1}
\end{figure}%

\section{Extensions}\Label{s7}
\subsection{Multi-antenna attack}\Label{s23}
As a more powerful Eve, 
we assume that, 
instead of \eqref{4-24-2G},
Eve can prepare $k$ antennas that receiving
$E_j$ ($j=1, \ldots, k$) under Assumptions (A1)-(A6) as
\begin{align}
E_j=a_{E,j}A+b_{E,j}X_{2,j}.
\end{align}
Bob receives $B$ given in \eqref{4-24-1G}, and 
$X_{2,i}$ are subject to the standard Gaussian distribution independently of other random variables $X_i$ ($i=1,2$).

That is, Eve knows $E_1, \ldots, E_k$ as well as $Y$.
Now, we convert 
the random variables $E_1, \ldots, E_k$ to 
the random variable $E:=\sum_{j=1}^k \frac{E_j}{a_{E,j}}
=k A + \sum_{j=1}^k 
\frac{b_{E,j}}{a_{E,j}} X_{2,j}
%+\frac{e_{E,j}}{a_{E,j}} 
$
and its orthogonal complements $\hat{E}_j$ $(j=1, \ldots,k-1)$.
The orthogonal complements $\hat{E}_j$ are orthogonal to $B$ as well as to $E$.
Thus, all of Eve's information for $B$ are converted to 
the pair of $E$ and $Y$.
Therefore, we can apply Theorem \ref{T2X}.
Notice that
$\sum_{j=1}^k \frac{b_{E,j}}{a_{E,j}} X_{2,j}%+\frac{e_{E,j}}{a_{E,j}} 
$ is a random variable whose variance is
$\sum_{j=1}^k (\frac{b_{E,j}}{a_{E,j}})^2$.

When all of $a_{E,j}$ and all of $b_{E,j}$ are the same values $a_E$ and $b_E$,
$E$ can be written as $k A+ \frac{b_E}{a_E}\sqrt{k} X_2$
%+\frac{e_E}{a_E}\sqrt{k}$ 
by using another standard Gaussian variable $X_2$.
That is, $E$ has the same information as 
$a_E A+ \frac{b_E}{ \sqrt{k}} X_2$. %+e_E\sqrt{k}$.
We can apply the above analysis with replacement of $b_E$ by $\frac{b_E}{\sqrt{k}}$.
Hence, even when there is a possibility that Eve prepares plural antennas,
when Alice and Bob set the constant $b_E$ to be a sufficiently small number,
they can prevent the multi-antenna attack.
When Eve prepares infinitely many antennas, Alice and Bob cannot disable Eve to access their secret information.
However, considering the constraint for Eve's budget, Alice and Bob can assume a reasonable value for the constant $b_E$.

\subsection{Complex number case}\Label{s24}
In the real wireless communication, 
all of number numbers are given as complex numbers.
In this case, the random variables $A$, $X_j$, and $Y$ are given as 
$A_{R}+ i A_{I}$, $X_{j,R}+ i X_{j,I}$, and $Y_{R}+ i Y_{I}$,
where the random variables 
$A_{R}$, $A_{I}$, $X_{j,R}$ and $ X_{j,I}$ are 
independently subject to the standard Gaussian distribution.
and $Y_{R}$ and $ Y_{I}$ are independently of other random variables.
So, the noise injecting attack model with Assumptions (A1)-(A6) can be written as
\begin{align}
B& =a_{B} e^{i \theta_{B}} A+ e^{i \theta_{Y}} Y +b_{B} e^{i \theta_{1}} X_1 +e_{B} e^{i \theta_{3}}\\
E& =a_{E} e^{i \theta_{E}} A +b_{E} e^{i \theta_{2}} X_2 ,%+e_{E} e^{i \theta_{5}},
\end{align}
where $a_B,a_E,b_B$, and $b_E$ are positive real numbers.
Now, we choose 
$B_{R}$ and $B_{I}$ ($E_{R}$ and $E_{I}$) to be the real and imaginary parts of
$e^{-i \theta_{B}}  B$ ($e^{-i \theta_{E}} E$), respectively.
We introduce 
the new random numbers
$X_{1,R}'$ and $X_{1,I}' $ as the real and imaginary parts of 
$e^{i (\theta_{1}-\theta_{B}) } X_1$, 
and 
 $X_{2,R}'$ and $X_{2,I}' $ as those of $e^{i (\theta_{2}- \theta_{E})} X_2$.
In the same way, 
we introduce $Y_{R}'$ and $Y_{I}' $ as the real and imaginary parts of 
$e^{i (\theta_{Y}-\theta_{B}) } Y$.

Then, these random numbers are also independently subject to the standard Gaussian distribution.
Thus, the noise injecting attack model with real random variables can be applied as
\begin{align}
B_R& =a_{B} A_{R}+b_{B} Y_{R}' +b_{B} X_{1,R}'
+e_{B} \cos (\theta_{4}-\theta_{E})
\\
B_I& =a_{B} A_{I}+b_{B} Y_{I}' +b_{B} X_{1,I}' 
+e_{B} \sin (\theta_{4}-\theta_{E})
\\
E_R& =a_{E} A_{R} +b_{E} X_{2,R}' 
%+e_{E} \cos (\theta_{5}-\theta_{B})
\\
E_I& =a_{E} A_{I} +b_{E} X_{2,I}'.
%+e_{E} \sin (\theta_{5}-\theta_{B}).
\end{align}

\section{Interference model}\Label{s21}
In this section, 
to understand the model of this paper, 
we discuss another model, in which, Eve is weaker than 
Eve of the present model.
we consider a different situation as a preparation for our analysis of
noise injecting attack.
That is, we replace Assumption (A1) by the following assumption (A1)',
and keep the quasi static assumption (Assumption (A4)) for both channels.
%Firstly, we assume that Bob and Eve have Gaussian noise %$Y$ and $Z$ 
%in their detection, respectively. 
\begin{description}
\item[(A1)']
Bob's and Eve's detections $B$ and $\tilde{E}$ are written 
by using Alice's signal $A$ as
\begin{align}
B&:= {a_B} A+ Y+ {b_B} X_1+ e_B, \Label{4-24-1} \\
\tilde{E}&:= {a_E} A+ Y_2+ {b_E} X_2, \Label{4-24-2}
\end{align}
where the random variables $X_j$ are subject to the standard Gaussian distribution independently 
and $a_B,b_B$, $e_B$, $a_E,b_E$, $b_E$ are constants during the coherent time, i.e., they can be regarded for one block length for our code.
The remaining random variables $Y$ and $Y_2$ are independent of $X_j$.
\end{description}

Due to the quasi static assumption (Assumption (A4)) for both channels,
the variables $X_1,X_2,Y$, and $Y_2$ are independent in each transmission.
Eve is not allowed to know the value of $Y$ and $Y_2$,
but she knows their distribution.
Hence, Assumption (A6) is replaced by the following.
\begin{description}
\item[(A6)']
Eve knows 
all the channel parameters, which are given in \eqref{4-24-1} and \eqref{4-24-2}.
Also, she knows the forms of the distributions of $Y$ and $Y_2$.
\end{description}

Since there is no assumption for the relation between $Y$ and $Y_2$, this model contains the case with interference.
Under this model, the second terms express the noise during transmission over the space between Alice and Bob,
and the third terms express thermal noises in the individual detectors
due to the local Gaussian noise assumption (Assumption (A2)), whose 
sizes are reflected in the constants $b_E$, and $b_E$.

As discussed in Section \ref{s0},
when Eve is closer to Alice than Bob and 
the performance of Eve's detector is the same as that of Bob's,
secure communication with forward reconciliation is impossible.
Since there is a possibility of secure communication with backward reconciliation,
this section also discusses this type of secure key generation.
Hence, we consider the same protocol with backward reconciliation, as given in Section \ref{s3}. 

Under Assumptions (A1)', (A2)-(A5), (A6)', Alice and Bob can estimate the coefficients $a_B$, $b_B$, $e_B$, $a_E$, and $b_E$ in the same way as in the model discussed above.
Also, they can estimate the distribution of $Y$.
However, it is impossible for them to estimate the distribution of $Y_2$, i.e., type of interference during transmission.
To evaluate the security of this case,
we focus on the Markov chain 
\begin{align}
B\mc (Y,E)\mc (Y_2,E)\mc \tilde{E}.\Label{MFT}
\end{align}
Hence, Eve of this model is not stronger than Eve in Section \ref{s01}.
Therefore, it is sufficient to evaluate the information leakage to Eve in Section \ref{s01}.
That is, the security evaluation in Section \ref{s33} is sufficient for the model of this section.

In fact, when $Y_2= \frac{a_E^2+b_E^2}{a_B a_E} Y$,
we have $E'=\frac{a_B a_E}{a_E^2+b_E^2}\tilde{E} $.
Due to Theorem \ref{T2X},
Eve's performance of this model equals that of the model in Section \ref{s01}. 
Therefore, Eve of this model is sufficiently strong  under this special interference.
This kind of characterization is based on the local Gaussian noise assumption.

\begin{figure}[htbp]
\begin{center}
\scalebox{1}{\includegraphics[scale=0.38]{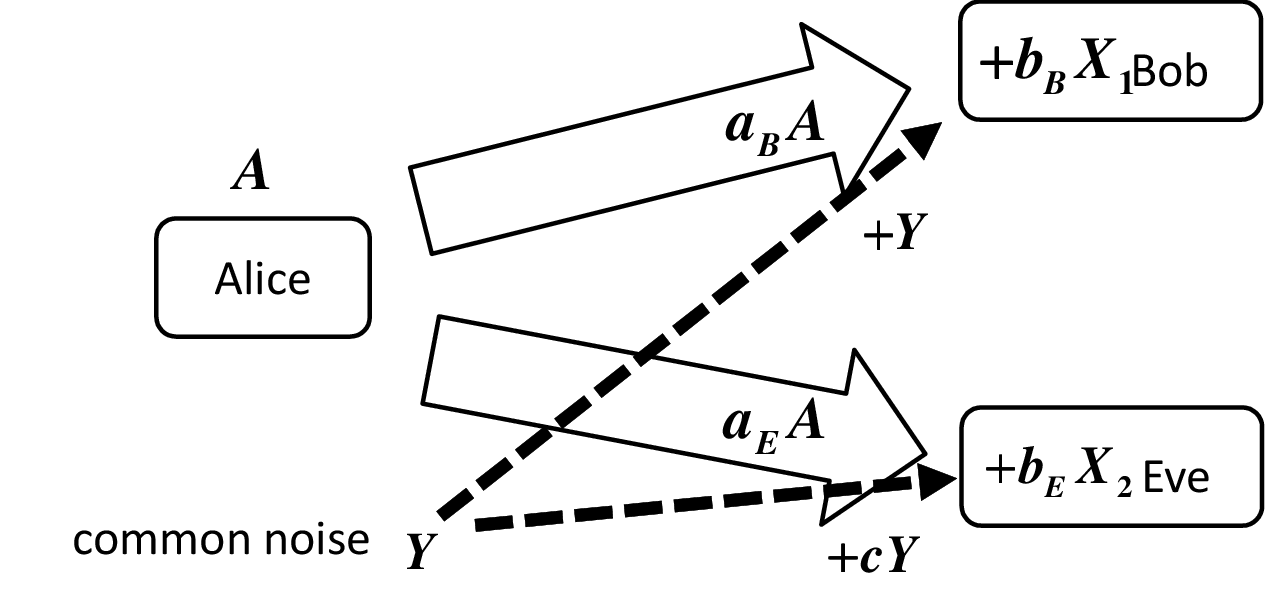}}
\end{center}
\caption{Interference model:
This figure shows the case when 
Eve has the same interference as that in Bob's detection.}
\Label{F1}
\end{figure}%

\section{Proofs}\Label{s5}
\subsection{Proofs of 
Lemmas \ref{L30} and \ref{LLO}}\Label{s53}
%\subsection{Proof of }\Label{s52}
We introduce Gallager function
\begin{align*}
E_0(t|P_{B'|X},P_{B'}) 
\!:=\!
\log \!\int_{-\infty}^\infty 
\!(\sum_{b'} \!P_{B'}(b)\! P_{X|B'}(x|b')^{\frac{1}{1\!-\!t}})^{1\!-\!t} dx,
\end{align*}
which is known to be convex for $t$ \cite{Gal}.
Since $2^{\phi[P_{E'},v](t)}
=2^{E_0(s|P_{B'|X},P_{B'})}((\frac{1}{2})^{1-\frac{1}{1-t}})^{1-t}$,
we have 
$\phi[P_{E'},v](t)=E_0(t|P_{B'|X},P_{B'})+t$, which 
shows that $\phi[P_{E'},v](t)$ is convex for $t$.
Hence, we obtain Lemma \ref{L30}.

Now, we show Lemma \ref{LLO}, i.e., \eqref{28-1} and \eqref{28-2}.
For this purpose, we introduce 
a function for a joint distribution
$P_{X,Y}$ as
$\phi(t|X|Y|P_{X,Y}):=
\int_{{\cal Y}} (\sum_{x} P_{X,Y}(x,y)^{\frac{1}{1-t}})^{1-t} d y$, which is denoted by 
$-tH_{\frac{1}{1-t}}^{\uparrow}(X|Y|P_{X,Y})$ 
in \cite{H-ep} or
$-tH_{\frac{1}{1-t}}^{\rm G}(X|Y|P_{X,Y})$ in \cite{H-cq}.
In this proof, we employ the rates $R_1:= \frac{\dim C}{n}$
and $R_2:= \frac{m_1}{n}$.
The function $\phi(t|X|Y|P_{X,Y})$ is a generalization of $\phi[P_{E'},v](t)$.
Applying \cite[(67)]{H-tight} and \cite[(21)]{H-cq}, we have
\begin{align}
&d(K:{\cal E}|H) \nonumber \\
 \stackrel{(a)}{\le}&
3 \min_{t \in [0,\frac{1}{2}]} 
2^{t n (R_1-R_2)+\phi(t|{B'}^n|[{B'}^n],E^n|P_{{B'}^n,E^n}) } \nonumber \\
 \stackrel{(b)}{\le} &
3 \min_{t \in [0,\frac{1}{2}]} 
2^{t n (R_1-R_2)+t n(1 -R_1)}
2^{\phi(t|{B'}^n,[{B'}^n]|E^n|P_{{B'}^n,E^n})} \nonumber \\
=&
3 \min_{t \in [0,\frac{1}{2}]} 
2^{t n(1 -R_2)}2^{n \phi(t|B'|E|P_{B',E})}\nonumber  \\
=&
3 \min_{t \in [0,\frac{1}{2}]} 
2^{t n(1 -R_2)}2^{n \phi[P_{E'},v](t)},
\end{align}
where $(a)$ and $(b)$ follow from \cite[(67)]{H-tight} and \cite[(21)]{H-cq}, respectively.
So, we obtain \eqref{28-2}.
When we replace the role of \cite[(67)]{H-tight} by 
\cite[(54) and Lemma 22]{H-ep}, we obtain a similar evaluation as \eqref{28-2} for 
$\epsilon$-almost universal dual hash function.

Now, we introduce another function for %a joint distribution 
$P_{X,Y}$ as
\begin{align*}
H_{1+s}(X|Y|P_{X,Y})\!:=\!
-\frac{1}{s}\log \!
\int_{{\cal Y}} \!(\sum_{x} P_{X|Y}(x|y)^{1+s}) P_Y(y)dy.
\end{align*}
We denote $s H_{1+s}(X|Y|P_{X,Y})$ by $\tilde{H}_{1+s}(X|Y|P_{X,Y})$ in \cite{H-leaked} or $s{H}^{\downarrow}_{1+s}(X|Y|P_{X,Y})$ in \cite{H-ep}.
Applying \cite[(3)]{H-leaked}, \cite[Lemma 5]{H-ep}, and \cite[(21)]{H-cq}, we have
\begin{align}
&I'(K:{\cal E}|H) \nonumber  \\
 \stackrel{(a)}{\le} &
\inf_{s \in (0,1) }
\frac{1}{s}
2^{s n (R_1-R_2)-s H_{1+s}({B'}^n|[{B'}^n],E^n|P_{{B'}^n,E^n}) }\nonumber  \\
 \stackrel{(b)}{\le} &
\inf_{s \in (0,1) }
\frac{1}{s}
2^{s n (R_1-R_2)+s n(1 -R_1)} 2^{\phi(s|{B'}^n,[{B'}^n]|E^n|P_{{B'}^n,E^n})} \nonumber \\
 \stackrel{(c)}{\le} &
\inf_{s \in (0,1) }
\frac{1}{s}
2^{s n (R_1-R_2)+s n(1 -R_1)}2^{\phi(s|{B'}^n,[{B'}^n]|E^n|P_{{B'}^n,E^n})} \nonumber \\
=&
\inf_{s \in (0,1) }
\frac{1}{s}
2^{s n(1 -R_2)} 2^{n \phi(s|B'|E|P_{B',E})} \nonumber \\
=&
\inf_{s \in (0,1) }
\frac{1}{s}
2^{s n(1 -R_2)} 2^{n \phi[P_{E'},v](s)},
\end{align}
where 
$(a)$, $(b)$, and $(c)$ follow from \cite[(3)]{H-leaked}, \cite[Lemma 5]{H-ep}, and \cite[(21)]{H-cq}, respectively.
So, we obtain \eqref{28-1}.
When we need an evaluation with $\epsilon$-almost universal dual hash function,
it is sufficient to replace the role of \cite[(3)]{H-leaked} by \cite[(56) and Theorem 23]{H-ep}.

\subsection{Proof of Lemma \ref{L11-10}}\Label{s11-17-1}
%To estimate the error of this estimator,

Next, we prepare the following two lemmas
\begin{lem}\Label{L11-19-2}
When a distribution function $F_1$ is differentiable,
three distribution functions $F_1$, $F_2$, and $F_3$ satisfy
\begin{align}
\sup_x|F_1 * F_2 (x) - F_1 * F_3 (x)|
\le
\sup_x|F_2 (x) - F_3 (x)|.
\end{align}
\end{lem}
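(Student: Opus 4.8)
The plan is to reduce the statement to a one-dimensional estimate by writing the convolutions as integrals and pulling the supremum inside. Suppose without loss of generality that $F_1$ is the differentiable factor (the other case is handled by the same argument after swapping roles, since the convolution is symmetric by definition). Then for every $x$ we have
\begin{align}
F_1 * F_2(x) - F_1 * F_3(x)
= \int_{-\infty}^{\infty} \frac{dF_1}{dx}(x-y)\,\bigl(F_2(y)-F_3(y)\bigr)\,dy .
\end{align}
Taking absolute values and using the triangle inequality for integrals gives
\begin{align}
|F_1 * F_2(x) - F_1 * F_3(x)|
\le \int_{-\infty}^{\infty} \frac{dF_1}{dx}(x-y)\,\bigl|F_2(y)-F_3(y)\bigr|\,dy
\le \sup_z |F_2(z)-F_3(z)| \int_{-\infty}^{\infty} \frac{dF_1}{dx}(x-y)\,dy .
\end{align}
The density $\frac{dF_1}{dx}$ is nonnegative (since $F_1$ is a distribution function) and integrates to $1$, so the last integral equals $1$. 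This bounds the left side by $\sup_z|F_2(z)-F_3(z)|$ uniformly in $x$, and taking the supremum over $x$ yields the claim.

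The only genuinely delicate point is justifying the interchange of differentiation and integration that produces the first displayed identity, i.e. that $\frac{d}{dx}\int F_1(x-y)\,dF_2(y) = \int \frac{dF_1}{dx}(x-y)\,dF_2(y)$ and that this in turn equals $\int \frac{dF_1}{dx}(x-y)\,(F_2(y)-F_3(y))\,dy$ after an integration by parts (using that $F_2-F_3 \to 0$ at $\pm\infty$ so the boundary terms vanish). In the present application $F_1$ is always a Gaussian distribution function $\Phi_a$, whose density $\varphi_a$ is smooth, bounded, and rapidly decaying, so all the interchanges are justified by dominated convergence and the boundary terms in the integration by parts vanish because $F_2-F_3$ is bounded while $\varphi_a(x-y) \to 0$ as $|y|\to\infty$. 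I would state this regularity remark briefly and then carry out the two-line estimate above. Since the lemma is applied only with $F_1 = \Phi_a$ (via the map $\mathcal{G}_a$), one could even restrict the statement to that case, but the argument as given covers any distribution function $F_1$ possessing an integrable density.
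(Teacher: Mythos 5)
Your proof is correct. Note that the paper itself states Lemma \ref{L11-19-2} without any proof (the proof environment that follows it in Section \ref{s11-17-1} belongs to Lemma \ref{L11-19-1}), so there is no authorial argument to compare against; your two-line estimate --- write $F_1*F_2(x)-F_1*F_3(x)=\int \frac{dF_1}{dx}(x-y)\bigl(F_2(y)-F_3(y)\bigr)\,dy$, bound the integrand by $\sup_z|F_2(z)-F_3(z)|$ times the density of $F_1$, and use that the density is nonnegative with total mass $1$ --- is exactly the standard argument the author is implicitly relying on, and it is the right one for the application, where $F_1=\Phi_a$ and $F_2,F_3$ are a true distribution function and an empirical (step) distribution function. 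Given the paper's definition of $F_1*F_2$ as that integral when $F_1$ is differentiable, your first displayed identity is immediate by linearity, so the ``delicate point'' about interchanging differentiation and integration is not actually needed. The only slightly loose statement is the parenthetical claim that the case where $F_2,F_3$ are the differentiable factors follows by ``the same argument after swapping roles'': there one has $F_1*F_j(x)=\int F_1(x-u)\,dF_j(u)$, and bounding the difference by $\sup_z|F_2(z)-F_3(z)|$ requires a Riemann--Stieltjes integration by parts against $dF_1$ (using that $F_1$ has total variation $1$), not literally the same computation. Since that case is never invoked in the paper, this does not affect the validity of the lemma as used.
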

\begin{proof}
Since $ \int_{-\infty}^\infty  \frac{d F_1}{dx}(x-y)  dy
= \int_{-\infty}^\infty  \frac{d F_1}{dx}(y'-x)  dy'=1$ with $y'=y$, we have
\begin{align*}
& \sup_x|F_1 * F_2 (x) - F_1 * F_3 (x)|
=
\sup_x|  \int_{-\infty}^\infty \frac{d F_1}{dx}(x-y) (F_2 (y) - F_3 (y) dy| \\
\le &
\sup_x   \int_{-\infty}^\infty \frac{d F_1}{dx}(x-y) (\sup_{y'}| F_2 (y') - F_3 (y')|) dy
=\sup_x \int_{-\infty}^\infty   \frac{d F_1}{dx}(x-y)  dy (\sup_{y'}| F_2 (y') - F_3 (y')|) \\
=& \sup_{y'}| F_2 (y') - F_3 (y')|.
\end{align*}
\end{proof}

\begin{lem}\Label{L11-19-1}
When $a<1 $,
\begin{align}
\sup_x|\Phi_{1} (x) - \Phi_{a}(x)|
=
\int^{\sqrt{\frac{2 \log a}{1-a^2}}}_{a \sqrt{\frac{2 \log a}{1-a^2}} }
\frac{1}{\sqrt{2\pi}}
e^{-\frac{x^2}{2}}d x.\Label{11-19-1}
\end{align}
When $a \ge 1 $,
\begin{align}
\sup_x|\Phi_{1} (x) - \Phi_{a}(x)|
=
\int^{a \sqrt{\frac{2 \log a}{1-a^2}} }_{\sqrt{\frac{2 \log a}{1-a^2}}}
\frac{1}{\sqrt{2\pi}}
e^{-\frac{x^2}{2}}d x.\Label{11-19-2}
\end{align}
When $a$ is close to $1$, 
\begin{align}
\sup_x|\Phi_{1} (x) - \Phi_{a}(x)|
=
\frac{|a-1| }{\sqrt{2\pi e}}+ O((a-1)^2).\Label{11-19-3}
\end{align}
\end{lem}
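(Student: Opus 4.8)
The plan is to exploit the fact that $\Phi_1(x)-\Phi_a(x)$ (with $\Phi_c$ denoting the Gaussian cdf with variance $c$, i.e. $\Phi_c(x)=\Phi(x/\sqrt c)$) is a smooth function of $x$ that vanishes at $x=0$ and at $x\to\pm\infty$, so the extrema are located by elementary calculus. First I would differentiate with respect to $x$: since $\frac{d}{dx}\Phi(x/\sqrt c)=\frac{1}{\sqrt{2\pi c}}e^{-x^2/(2c)}$, the stationary-point condition $\frac{d}{dx}(\Phi_1(x)-\Phi_a(x))=0$ becomes $\frac{1}{\sqrt{2\pi}}e^{-x^2/2}=\frac{1}{\sqrt{2\pi a^2}}e^{-x^2/(2a^2)}$ (writing the variance of $\Phi_a$ as $a^2$ to match the statement, which uses $\Phi_a(x)=\Phi(x/\sqrt a)$ with $a$ a standard deviation). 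Taking logarithms, this is $-\tfrac{x^2}{2}= -\log a-\tfrac{x^2}{2a^2}$, i.e. $\tfrac{x^2}{2}(1-\tfrac{1}{a^2})=\log a$, giving the two symmetric critical points $x_\pm=\pm\sqrt{\tfrac{2a^2\log a}{a^2-1}}$. Rewriting $\tfrac{a^2}{a^2-1}=\tfrac{-1}{1-a^2}$ shows these are exactly $\pm\sqrt{\tfrac{2\log a}{1-a^2}}$ and $\pm a\sqrt{\tfrac{2\log a}{1-a^2}}$ up to the bookkeeping of which factor of $a$ appears where; I would just be careful to track whether it is the variance $a$ or standard deviation that the paper intends and match the quantile $\sqrt{2\log a/(1-a^2)}$ in the statement accordingly. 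Note $2\log a/(1-a^2)>0$ for all $a>0$, $a\neq 1$, so the critical points are real.

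Next I would evaluate the difference at a critical point. By the fundamental theorem of calculus, $\Phi_1(x)-\Phi_a(x)=\int_{\text{(lower limit)}}^{\text{(upper limit)}}\frac{1}{\sqrt{2\pi}}e^{-t^2/2}\,dt$ where the integration is over the region between the two Gaussians; more concretely, $\Phi_1(x)-\Phi_a(x)=\int_0^{x}\big(\tfrac{1}{\sqrt{2\pi}}e^{-t^2/2}-\tfrac{1}{\sqrt{2\pi a^2}}e^{-t^2/(2a^2)}\big)dt$, and a change of variables $t=as$ in the second term turns it into $\int_{0}^{x/a}\tfrac{1}{\sqrt{2\pi}}e^{-s^2/2}ds$, so the difference equals $\int_{x/a}^{x}\tfrac{1}{\sqrt{2\pi}}e^{-s^2/2}ds$ (with the orientation of the interval depending on whether $a\lessgtr 1$ and the sign of $x$). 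Plugging in the positive critical point and splitting into the cases $a<1$ and $a\ge 1$ — which determines whether $x>x/a$ or $x<x/a$, hence the order of the limits — yields precisely \eqref{11-19-1} and \eqref{11-19-2}. I would then check that this critical point is the global maximum of $|\Phi_1-\Phi_a|$: the difference is an odd function vanishing at $0$ and at $\pm\infty$, so by Rolle it has exactly the two interior critical points found above (they are the only roots of the derivative equation), and they are symmetric, so the sup of the absolute value is attained there.

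For the asymptotic expansion \eqref{11-19-3}, I would set $a=1+\delta$ and expand. The quantile $\sqrt{2\log a/(1-a^2)}$ satisfies $2\log a/(1-a^2)=\tfrac{2(\delta-\delta^2/2+\cdots)}{-2\delta-\delta^2}\to 1$ as $\delta\to0$, so the critical points approach $\pm 1$ and the integration interval $[x/a,x]$ (or $[x,x/a]$) has length $|x||1-1/a|=|x|\,|\delta|/a\approx|\delta|$ centered near $\pm1$; hence the integral is approximately $|\delta|\cdot\tfrac{1}{\sqrt{2\pi}}e^{-1/2}=\tfrac{|a-1|}{\sqrt{2\pi e}}$, with the error being $O(\delta^2)$ from the second-order Taylor terms in both the interval endpoints and the integrand. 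The main obstacle — really the only place demanding care rather than routine computation — is the bookkeeping of the variance-versus-standard-deviation convention (the statement writes $\Phi_a(x)=\Phi(x/\sqrt a)$, so ``$a$'' is a variance, and the critical point $\sqrt{2\log a/(1-a^2)}$ must be re-derived with that convention rather than the standard-deviation one I sketched above) and correctly pinning the orientation of the integration interval in each of the two cases so that the stated one-signed integrals come out with the right (positive) value.
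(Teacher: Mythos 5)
Your proposal is correct and follows essentially the same route as the paper: differentiate $\Phi_1-\Phi_a$, locate the two symmetric critical points $x=\pm a\sqrt{\tfrac{-2\log a}{1-a^2}}$, use oddness of the difference to conclude the supremum is attained there, and Taylor-expand around $a=1$; your substitution $t=as$, which rewrites the difference as $\int_{x/a}^{x}\tfrac{1}{\sqrt{2\pi}}e^{-s^2/2}\,ds$, is a clean way to obtain the stated integrals where the paper merely says ``substituting this value.'' Your closing worry about conventions resolves itself: the paper's own proof uses $\Phi_a(x)=\Phi(x/a)$ with the subscript a standard deviation (exactly the convention of your main derivation), and the radicand in the lemma statement is a sign typo for $\tfrac{-2\log a}{1-a^2}$ as written in the paper's proof, so your computed critical points already match.
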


\begin{proof}
The derivative of $\Phi_{1} (x) - \Phi_{a}(x)$ with respect to $x$ is
$\frac{1}{\sqrt{2\pi}}
(e^{-\frac{x^2}{2}}
- \frac{1}{a}e^{-\frac{x^2}{2a^2}})$.
It equals zero if and only if
$x= \pm a \sqrt{\frac{-2\log a}{1-a^2}}$.
Due to the symmetry $\Phi_{1} (x) - \Phi_{a}(x)=-\Phi_{1} (-x) - \Phi_{a}(-x)$,
the maximum of the absolute value $|\Phi_{1} (x) - \Phi_{a}(x)|$
is realized when $x= \pm a \sqrt{\frac{-2\log a}{1-a^2}}$.
Substituting this value, we obtain \eqref{11-19-1} and \eqref{11-19-2}.

When $a$ is close to $1$,
$\frac{-2\log a}{1-a^2}
= \frac{-2\log (1+a-1)}{1-a^2}
= \frac{-2((a-1)+ O((a-1)^2))}{1-a^2}
=\frac{2}{a+1}+O(a-1)
=1+O(a-1)$.
So, we have
$e^{-\frac{\frac{-2\log a}{1-a^2}}{2}}
=e^{-\frac{1}{2}}+O(a-1)$,
$e^{-\frac{\frac{-2a\log a}{1-a^2}}{2}}
=e^{-\frac{1}{2}}+O(a-1)$,
and
$\sqrt{\frac{-2\log a}{1-a^2}}-a \sqrt{\frac{-2\log a}{1-a^2}}
=(1-a)+O((a-1)^2)$.
Thus,
\begin{align}
\int^{\sqrt{\frac{2 \log a}{1-a^2}}}_{a \sqrt{\frac{2 \log a}{1-a^2}} }
\frac{1}{\sqrt{2\pi}}
e^{-\frac{x^2}{2}}d x
=|(a-1)+O((a-1)^2)|\cdot
\frac{1 }{\sqrt{2\pi}}(e^{-\frac{1}{2}}+O(a-1))
=\frac{|a-1| }{\sqrt{2\pi e}}+ O((a-1)^2),
\end{align}
which implies \eqref{11-19-3}.
\end{proof}

Now, we show \eqref{e11-10} under the condition $\frac{\hat{c}_{AB}^2 a_E^2}{a_E^2+b_E^2}\le b_B^2$.
We notice that
\begin{align}
& \sup_x
\Big|{F}_{E'}(x)
-\hat{F}_{E'}(x)\Big| \nonumber \\
\le &
\sup_x
\Big|{F}_{E'}(x)
-\Phi_{\sqrt{ \frac{\hat{c}_{AB}^2 a_E^2}{a_E^2+b_E^2}- b_B^2}} * F_{\bar{A}^c}(x)\Big| \nonumber \\
&+
\sup_x
\Big|\Phi_{\sqrt{ \frac{\hat{c}_{AB}^2 a_E^2}{a_E^2+b_E^2}- b_B^2}} * F_{\bar{A}^c}(x)
-\hat{F}_{E'}(x)\Big| .
\end{align}

We discuss the first term.
Since
\begin{align}
& \Phi_{\sqrt{ \frac{\hat{c}_{AB}^2 a_E^2}{a_E^2+b_E^2}- b_B^2}} *{F}_{\bar{A}^c}(x)
=
\Phi_{\sqrt{ \frac{\hat{c}_{AB}^2 a_E^2}{a_E^2+b_E^2}- b_B^2}} *
F_{B-\hat{c}_{AB}A -\hat{e}_B }(x) \nonumber \\
=&
\Phi_{\sqrt{ \frac{\hat{c}_{AB}^2 a_E^2}{a_E^2+b_E^2}- b_B^2}} *
F_{Y+b_B X_1 +a_B A +e_B -\hat{c}_{AB}A -\hat{e}_B }(x)
=
\Phi_{\sqrt{ \frac{\hat{c}_{AB}^2 a_E^2}{a_E^2+b_E^2}- b_B^2}} *
F_{b_B X_1 +(a_B -\hat{c}_{AB})A +Y+e_B -\hat{e}_B }(x) \nonumber \\
=&
\Phi_{\sqrt{ \frac{\hat{c}_{AB}^2 a_E^2}{a_E^2+b_E^2}- b_B^2}} *
(\Phi_{\sqrt{ b_B^2+ (a_B -\hat{c}_{AB})^2}}* 
F_{Y+e_B -\hat{e}_B })(x) 
=
\Phi_{\sqrt{ \frac{\hat{c}_{AB}^2 a_E^2}{a_E^2+b_E^2} + (a_B -\hat{c}_{AB})^2}} *
F_{Y+e_B -\hat{e}_B }(x),
\end{align}
we have
\begin{align}
& \sup_x
\Big|{F}_{E'}(x)
-\Phi_{\sqrt{ \frac{\hat{c}_{AB}^2 a_E^2}{a_E^2+b_E^2}- b_B^2}} *{F}_{\bar{A}^c}(x)\Big| \nonumber \\
=&
\sup_x
\Big|\Phi_{\frac{{a}_{B}  a_E}{\sqrt{a_E^2+b_E^2}}} *{F}_{Y+e_B -\hat{e}_B}(x)
-
\Phi_{\sqrt{ \frac{\hat{c}_{AB}^2 a_E^2}{a_E^2+b_E^2}+ (a_B -\hat{c}_{AB})^2}} *{F}_{Y+e_B -\hat{e}_B}(x)\Big|
\nonumber \\
 \stackrel{(a)}{\le}  &
\sup_x
\Big|\Phi_{\frac{{a}_{B}  a_E}{\sqrt{a_E^2+b_E^2}}} (x)
-
\Phi_{\sqrt{ \frac{\hat{c}_{AB}^2 a_E^2}{a_E^2+b_E^2}+ (a_B -\hat{c}_{AB})^2}} (x)\Big|
\nonumber \\
\stackrel{(b)}{=} &
\frac{1}{\sqrt{2\pi e}}
\Big|\frac{\hat{c}_{AB} }{a_B}-1\Big|^2
+O((\frac{\hat{c}_{AB} }{a_B}-1)^2),
\end{align}
where $(a)$ follows from Lemma \ref{L11-19-2} and $(b)$ does from the combination of \eqref{11-19-3} and the following derivation;
\begin{align}
& \frac{\sqrt{ \frac{\hat{c}_{AB}^2 a_E^2}{a_E^2+b_E^2}+ (a_B -\hat{c}_{AB})^2}}
{\frac{{a}_{B}  a_E}{\sqrt{a_E^2+b_E^2}}} \nonumber \\
=&
\sqrt{ \frac{\hat{c}_{AB}^2 }{a_B^2}
+ \frac{a_E^2+b_E^2}{ a_E^2}(1-\frac{\hat{c}_{AB}}{a_B})^2}
\nonumber \\
=&
\frac{\hat{c}_{AB} }{a_B}\sqrt{ 1
+ \frac{a_E^2+b_E^2}{ a_E^2}(\frac{a_B}{\hat{c}_{AB}}-1)^2} \nonumber \\
=&
\frac{\hat{c}_{AB} }{a_B}
( 1
+ \frac{a_E^2+b_E^2}{ 2 a_E^2}(\frac{a_B}{\hat{c}_{AB}}-1)^2
+O((\frac{\hat{c}_{AB}}{a_B}-1)^4))
\nonumber  \\
=&
\frac{\hat{c}_{AB} }{a_B}+ \frac{a_E^2+b_E^2}{ 2 a_E^2} \frac{\hat{c}_{AB} }{a_B}
(\frac{a_B}{\hat{c}_{AB}}-1)^2
+O((\frac{\hat{c}_{AB}}{a_B}-1)^4)
\nonumber  \\
=&
\frac{\hat{c}_{AB} }{a_B}
+O((\frac{\hat{c}_{AB}}{a_B}-1)^2).
\end{align}

When $l\ge l_{\epsilon,\delta}$,
with confidence level $1-(\epsilon+\delta)$, 
since the relation 
$|\frac{\hat{c}_{AB} }{a_B}-1|\le \frac{\sqrt{\hat{v}_{AB}}}{\hat{c}_{AB}
\sqrt{l}}Z_\epsilon$
holds, 
the first term is upper bounded by 
$ \frac{1}{\sqrt{2\pi e}}
\cdot \frac{\sqrt{\hat{v}_{AB}}}{\hat{c}_{AB}
\sqrt{l}}Z_\epsilon$.

The second term is evaluated as
\begin{align}
& \sup_x
\Big|\Phi_{\sqrt{ \frac{\hat{c}_{AB}^2 a_E^2}{a_E^2+b_E^2}- b_B^2}} * F_{\bar{A}^c}(x)
-\hat{F}_{E'}(x) \Big| \nonumber \\
=&
\sup_x
\Big|\Phi_{\sqrt{ \frac{\hat{c}_{AB}^2 a_E^2}{a_E^2+b_E^2}- b_B^2}} * F_{\bar{A}^c}(x)
-\Phi_{\sqrt{ \frac{\hat{c}_{AB}^2 a_E^2}{a_E^2+b_E^2}- b_B^2}} * \hat{F}_{\bar{A}^c}(x)
\Big|\nonumber  \\
 \stackrel{(a)}{\le}  &
\sup_x
|F_{\bar{A}^c}(x)-\hat{F}_{\bar{A}^c}(x)| ,
\end{align}
where $(a)$ follows from Lemma \ref{L11-19-2}.
So, 
when $l\ge l_{\epsilon,\delta}$,
the second term is upper bounded by 
$ \frac{1}{\sqrt{l}}L^{-1}(1-\epsilon)$
with confidence level $1-(\epsilon+\delta)$.

Combining these two discussion, we obtain 
\eqref{e11-10} with confidence level $1-2(\epsilon+\delta)$.
Since the relation 
$|\frac{\hat{c}_{AB} }{a_B}-1|\le \frac{\sqrt{\hat{v}_{AB}}}{\hat{c}_{AB}
\sqrt{l}}Z_\epsilon$
holds with confidence level $1-(\epsilon+\delta)$ 
when $l\ge l_{\epsilon,\delta}$,
combining \eqref{eq11-19-1}, we obtain \eqref{e11-10}.

Now, we show \eqref{e11-10-2}.
We notice that
\begin{align}
& \sup_x
|{F}_{E''}(x)
-\hat{F}_{E''}(x)| \nonumber \\
\le &
\sup_x
|{F}_{E''}(x)
-{F}_{\bar{A}^c}(x)|\nonumber  \\
&+
\sup_x
|{F}_{\bar{A}^c}(x)
-\hat{F}_{\bar{A}^c}(x)| .
\end{align}
Since
\begin{align}
{F}_{E''}
={F}_{E''+e_B-\hat{e}_B}
={F}_{A^c+e_B-\hat{e}_B}
={F}_{B-a_B A-e_B +e_B-\hat{e}_B}
={F}_{B-a_B A-\hat{e}_B}
=\Phi_{a_B}*{F}_{B-\hat{e}_B}
\end{align}
and
\begin{align}
{F}_{\bar{A}^c}
={F}_{B-\hat{c}_{AB} A-\hat{e}_B}
=\Phi_{\hat{c}_{AB}}*{F}_{B-\hat{e}_B},
\end{align}
the first term is evaluated as
\begin{align}
&
\sup_x
|{F}_{E''}(x)
-{F}_{\bar{A}^c}(x)| 
=
\sup_x
|\Phi_{a_B}*{F}_{B-\hat{e}_B}(x)
- \Phi_{\hat{c}_{AB}}*{F}_{B-\hat{e}_B}(x) | \nonumber \\
\stackrel{(a)}{\le} &
\sup_x
|\Phi_{a_B}(x)- \Phi_{\hat{c}_{AB}}(x) | \nonumber \\
\stackrel{(b)}{=} &
\frac{1}{\sqrt{2\pi e}}
\Big|\frac{\hat{c}_{AB} }{a_B}-1\Big|^2
+O((\frac{\hat{c}_{AB} }{a_B}-1)^2),
\end{align}
where $(a)$ and $(b)$ follow from Lemma \ref{L11-19-2} and \eqref{11-19-3},
respectively.
Thus, in the same way, we obtain \eqref{e11-10-2}.

\subsection{Proof of Lemma \ref{L11-11b}}\Label{s11-17-2}
To show Lemma \ref{L11-11b}, we prepare the following lemmas.

\begin{lem}\Label{L11-22}
The function $v \mapsto 
(\Phi_1(\frac{x}{\sqrt{v}})^{\frac{1}{1-t}}+(1-\Phi_1(\frac{x}{\sqrt{v}}))^{\frac{1}{1-t}}
)^{1-t}$ is monotone decreasing for any $x$.
\end{lem}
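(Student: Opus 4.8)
The plan is to factor the claimed function as a composition $v \mapsto p(v) \mapsto g(p(v))$, where $p(v) := \Phi(x/\sqrt{v})$ and $g(p) := \bigl(p^{1/(1-t)} + (1-p)^{1/(1-t)}\bigr)^{1-t}$, and then to check two monotonicity statements separately: that $p(v)$ tends to $1/2$ monotonically as $v$ increases, and that $g$ is a nondecreasing function of $|p - 1/2|$. Since the function in the lemma is exactly $g(p(v))$, these two facts combine to give the result, because increasing $v$ decreases $|p(v) - 1/2|$, which in turn decreases $g(p(v))$.

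For the first fact I would fix $x$ and observe that $|x|/\sqrt{v}$ is nonincreasing in $v > 0$. Since $\Phi$ is increasing with $\Phi(0) = 1/2$ and satisfies $\Phi(-u) = 1 - \Phi(u)$, one has $|\Phi(x/\sqrt{v}) - 1/2| = \Phi(|x|/\sqrt{v}) - 1/2$, so $|p(v) - 1/2|$ is nonincreasing in $v$ (strictly decreasing when $x \neq 0$; the case $x = 0$ gives a constant function, which is admissible for a monotone decreasing function).

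For the second fact, fix $t \in (0,1)$ and put $q := 1/(1-t) > 1$, so that $g(p)^q = p^q + (1-p)^q$. Differentiating yields $q\, g(p)^{q-1} g'(p) = q\bigl(p^{q-1} - (1-p)^{q-1}\bigr)$; since $q - 1 > 0$ and $g$ stays bounded away from $0$ on $[0,1]$ (its minimum is $2^{1/q - 1}$, attained at $p = 1/2$), the sign of $g'(p)$ is that of $p^{q-1} - (1-p)^{q-1}$, namely negative on $(0,1/2)$ and positive on $(1/2,1)$. Hence $g$ is strictly decreasing on $[0,1/2]$ and strictly increasing on $[1/2,1]$, which is precisely the statement that $g$ is nondecreasing in $|p - 1/2|$. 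Equivalently, $g(p)$ is the $\ell_q$-norm of $(p,1-p)$ and the claim follows from its symmetry about $p = 1/2$.

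Putting the pieces together finishes the proof. The only mildly delicate point is the behaviour of $g'$ near the endpoints $p \in \{0,1\}$, but since $g$ never vanishes the factor $g(p)^{1-q}$ causes no trouble, so the argument reduces to the elementary sign analysis above; that sign analysis is the main, and essentially routine, obstacle.
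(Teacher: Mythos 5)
Your proof is correct and follows essentially the same route as the paper: factor the map as $g(p(v))$ with $p(v)=\Phi(x/\sqrt{v})$ and $g(p)=(p^{1/(1-t)}+(1-p)^{1/(1-t)})^{1-t}$, note that $g$ decreases on $[0,\tfrac12]$ and increases on $[\tfrac12,1]$ while $p(v)$ moves monotonically toward $\tfrac12$ as $v$ grows. Your explicit sign analysis of $g'$ is in fact more careful than the paper's one-line assertion (whose final sentence even mistakenly says ``increasing''), so nothing is missing.
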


\begin{proof}
The function $p \mapsto p^{^{\frac{1}{1-t}}}+(1-p)^{\frac{1}{1-t}}$
is monotone decreasing in $[0,\frac{1}{2}]$
and is monotone increasing in $[\frac{1}{2},1]$.
The function $v \mapsto \Phi_1(\frac{x}{\sqrt{v}})$ 
is monotone increasing in $[0,1]$ for $x>0$
and is monotone decreasing in $[0,1]$ for $x<0$.
Since $\Phi_1(\frac{1}{\sqrt{v}} 0)=\frac{1}{2}$, 
we conclude that the function $v \mapsto 
(\Phi_1(\frac{x}{\sqrt{v}})^{\frac{1}{1-t}}+(1-\Phi_1(\frac{x}{\sqrt{v}}))^{\frac{1}{1-t}}
)$ is monotone increasing for any $x$.
\end{proof}

\begin{lem}\Label{L11-11}
When a non-negative valued function $f $ is monotone decreasing in $(-\infty,0]$ and 
is monotone increasing in $[0,\infty)$,
we have
\begin{align}
& |\int_{-\infty}^{\infty} f(x) P_X(dx) - \int_{-\infty}^{\infty} f(x) P_{X'}(dx)| \nonumber \\
\le &
2 \sup_x (f(x)-f(0))\sup_{x}|F_X(x)-F_{X'}(x)|.\Label{11-22-7}
\end{align}
\end{lem}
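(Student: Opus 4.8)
The plan is to reduce the claim to a statement about ``$\mathrm U$-shaped'' functions and then apply a layer-cake (Fubini) representation. First I would set $g(x):=f(x)-f(0)$, which is non-negative, non-increasing on $(-\infty,0]$, non-decreasing on $[0,\infty)$, and satisfies $g(0)=0$. Since $P_X$ and $P_{X'}$ are probability measures, the constant $f(0)$ integrates to the same value under both, so $\int f\,dP_X-\int f\,dP_{X'}=\int g\,dP_X-\int g\,dP_{X'}$, and it suffices to bound the latter. Write $M:=\sup_x g(x)=\sup_x(f(x)-f(0))$ and $D:=\sup_x|F_X(x)-F_{X'}(x)|$. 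I may assume $M<\infty$ (otherwise the asserted bound is $+\infty$, and is trivial unless $D=0$, in which case $F_X\equiv F_{X'}$ forces $P_X=P_{X'}$ and both sides vanish).

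Next I would invoke the layer-cake identity $\int g\,dP=\int_{0}^{M}P(\{x:g(x)>t\})\,dt$, valid by Fubini since $g$ is bounded, non-negative and Borel measurable. Because $g$ is $\mathrm U$-shaped, for each $t\in(0,M)$ the sub-level set $I(t):=\{x:g(x)\le t\}$ is an interval containing $0$ (possibly open or half-open at its endpoints, by monotonicity of $g$ on each half-line), so $\{g>t\}=\mathbb{R}\setminus I(t)$ and therefore
\begin{align}
\int g\,dP_X-\int g\,dP_{X'}
=\int_{0}^{M}\big(P_{X'}(I(t))-P_{X}(I(t))\big)\,dt .
\end{align}
For any interval $I$ with endpoints $\alpha\le\beta$ one has $P(I)=F(\beta\text{ or }\beta^{-})-F(\alpha\text{ or }\alpha^{-})$, a difference of two values of the c.d.f.\ (possibly one-sided limits) at fixed points; hence $|P_{X'}(I)-P_X(I)|\le 2D$. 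Bounding the integrand of the last display by $2D$ and integrating over $t\in[0,M]$ gives $\big|\int g\,dP_X-\int g\,dP_{X'}\big|\le 2MD$, which is exactly \eqref{11-22-7}.

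An equivalent route, which I would mention as a remark, is integration by parts: with $H:=F_X-F_{X'}$ (which tends to $0$ at $\pm\infty$) one obtains $\int g\,dP_X-\int g\,dP_{X'}=-\int_{-\infty}^{\infty}H(x)\,dg(x)$, and since $g$ decreases by $M$ over $(-\infty,0]$ and increases by $M$ over $[0,\infty)$, the Stieltjes measure $dg$ has total variation $2M$, so $|\int H\,dg|\le\|H\|_{\infty}\,|dg|(\mathbb{R})=2DM$. I expect the only delicate points to be bookkeeping: justifying the Fubini exchange (immediate once $M<\infty$) and handling the endpoints of $I(t)$ together with possible atoms of $P_X,P_{X'}$ there — but these cause no trouble, because for a fixed set the probability difference is computed against the \emph{same} endpoints for both measures and is still controlled by $D$ regardless of one-sidedness. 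The substantive content is merely that sup-norm closeness of the two c.d.f.'s controls the difference of probabilities of every half-line (equivalently every interval), combined with the fact that a $\mathrm U$-shaped non-negative function is a non-negative mixture of indicators of complements of intervals centred at the origin.
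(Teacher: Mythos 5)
Your proof is correct, and it takes a genuinely different route from the paper's. The paper first proves an auxiliary integration lemma (Lemma \ref{L11-11T}) for monotone $f$ against a signed ``density'' $\frac{dG}{dx}$ via a rearrangement of the sign pattern of $g=\frac{dG}{dx}$, then splits the line at $0$ and at the extremizers $x_1=\argmax_{x\le 0}(F_X-F_{X'})$, $x_2=\argmin_{x\ge 0}(F_X-F_{X'})$, applies the four cases of that lemma on $[-R,x_1]$, $[x_1,0]$, $[0,x_2]$, $[x_2,R]$, and lets $R\to\infty$; the two surviving boundary terms each contribute at most $\sup_x(f(x)-f(0))\cdot\sup_x|F_X-F_{X'}|$, yielding the factor $2$. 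Your layer-cake argument reaches the same bound by writing $\int g\,dP=\int_0^M P(\{g>t\})\,dt$ and observing that each superlevel set is the complement of an interval containing $0$, so its probability under the two measures differs by at most $2\sup_x|F_X-F_{X'}|$. What your approach buys: it is shorter, it does not require the extremizers $x_1,x_2$ to be attained (the paper's $\argmax/\argmin$ need not exist and should really be an approximate choice), and it applies to arbitrary probability measures, whereas the paper's Lemma \ref{L11-11T} assumes $G=F_X-F_{X'}$ is differentiable off a finite set, i.e.\ implicitly restricts to (piecewise) absolutely continuous distributions; your remark that one-sided limits of the c.d.f.'s are still controlled by the sup-distance correctly disposes of atoms at interval endpoints. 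The paper's approach, on the other hand, isolates a reusable monotone-integration lemma. The only points to make explicit in a polished write-up are the Borel measurability of $g$ (immediate from piecewise monotonicity) and the interpretation of the inequality when $\sup_x(f(x)-f(0))=\infty$, both of which you already flag.
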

Lemma \ref{L11-11} will be shown in Subsection \ref{s11-17-3}.

When $l$ is sufficiently large,
Lemma \ref{L11-10} and its proof guarantee 
\eqref{e11-10} and 
$|\frac{\hat{c}_{AB}}{a_B}-1|
\le \frac{\sqrt{\hat{v}_{AB}}}{\hat{c}_{AB} \sqrt{l}}Z_{\epsilon}$  
with confidence level almost $1-2 \epsilon$.
In the following discussion, we give a statement 
with confidence level almost $1-2 \epsilon$.
Thus, Lemma \ref{L11-22} guarantees that
\begin{align}
2^{\phi[P_{E'},v_{B|E'}](t)}
\le 
2^{\phi[P_{E'},\underline{v}_{B|E'}]}(t)
\end{align}

Now, we apply Lemma \ref{L11-11} to the case when 
$f(x)=(\Phi (\frac{x}{\underline{v}_{B|E'}})^{\frac{1}{1-t}}
+(1-\Phi (\frac{x}{\underline{v}_{B|E'}}))^{\frac{1}{1-t}})^{1-t}$,
which satisfies the condition of Lemma \ref{L11-11}.
The supremum $\sup_x (f(x)-f(0))$ is calculated as
\begin{align}
\sup_x (f(x)-f(0))
= 1-\Big( \frac{2}{2^{\frac{1}{1-t}}}\Big)^{1-t} =1- 2^{-t}.\Label{11-22-1}
\end{align}
Thus, we have
\begin{align}
& |
2^{\phi[{P}_{E'},\underline{v}_{B|E'}]}(t)
-2^{\phi[\hat{P}_{E'},\underline{v}_{B|E'}]}(t)
| \nonumber \\
\le &
2\sup_x (f(x)-f(0))\sup_{x}|F_{E'}(x)-\hat{F}_{E'}(x)|
=
2(1- 2^{-t})
\sup_x 
|F_{E'}(x)-\hat{F}_{E'}(x)| \nonumber \\
\le &
2(1- 2^{-t})
\Big(\frac{\sqrt{\hat{v}_{AB}}}{\sqrt{2\pi e}\hat{c}_{AB} \sqrt{l}}Z_{\epsilon}
+\frac{1}{\sqrt{l}} L^{-1}( 1-\epsilon) \Big).\Label{11-22-2}
\end{align}
So, combining \eqref{11-22-1} and \eqref{11-22-2},
we obtain \eqref{11-20-3} when 
$\frac{\hat{c}_{AB}^2 a_E^2}{a_E^2+b_E^2}\ge b_B^2$.

Using the same discussion of the proof of the above case
and the Markovian chain $B \mc E'' \mc E'$ shown as \eqref{E16D}, 
we can show \eqref{11-20-3} 
when $\frac{\hat{c}_{AB}^2 a_E^2}{a_E^2+b_E^2}< b_B^2$.

\subsection{Proof of Lemma \ref{L11-11}}\Label{s11-17-3}
To show Lemma \ref{L11-11}, we prepare the following lemma.
\begin{lem}\Label{L11-11T}
Let $f$ be a continuous function defined on $[a,b]$,
and $G$ be a bounded function defined on $[a,b]$.
We assume that 
$G$ is differentiable except for a finite number of discontinuous points.
\begin{description}
\item[(D1)]
When the function $f $ is monotone decreasing in $[a,b]$
and the function $G$ satisfies
\begin{align}
\max_{y \in [a,b]}(G(y)-G(a)) =G(b)-G(a),
\end{align}
we have
\begin{align}
\int_a^b f(x) \frac{d G}{dx}(x) dx
\le f(a) \int_a^b \frac{d G}{dx}(x) dx.
\end{align}

\item[(D2)]
When the function $f $ is monotone increasing in $[a,b]$
and the function $G$ satisfies
\begin{align}
\min_{y \in [a,b]}(G(y)-G(a)) =G(b)-G(a),
\end{align}
we have
\begin{align}
\int_a^b f(x) \frac{d G}{dx}(x) dx
\le f(a) \int_a^b \frac{d G}{dx}(x) dx.
\end{align}

\item[(D3)]
When the function $f $ is monotone increasing in $[a,b]$
and the function $G$ satisfies
\begin{align}
\max_{y \in [a,b]}
(G(b)-G(y)) =G(b)-G(a),
\end{align}
we have
\begin{align}
\int_a^b f(x) \frac{d G}{dx}(x) dx
\le f(b) \int_a^b \frac{d G}{dx}(x) dx.
\end{align}

\item[(D4)]
When the function $f $ is monotone decreasing in $[a,b]$
and the function $G$ satisfies
\begin{align}
\min_{y \in [a,b]}(G(b)-G(y)) =G(b)-G(a),
\end{align}
we have
\begin{align}
\int_a^b f(x) \frac{d G}{dx}(x) dx
\le f(b) \int_a^b \frac{d G}{dx}(x) dx.
\end{align}
\end{description}
\end{lem}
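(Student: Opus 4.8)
\emph{Proof plan.} I would deduce all four inequalities from a single Riemann--Stieltjes integration-by-parts argument. Throughout I read $\int_a^b f(x)\frac{dG}{dx}(x)\,dx$ as the Stieltjes integral $\int_a^b f\,dG$ and $\int_a^b \frac{dG}{dx}(x)\,dx$ as $G(b)-G(a)$; these agree with the literal reading when $G$ is absolutely continuous, which is the relevant case in the application (there $G$ is a difference of Gaussian-convolved distribution functions). Since $f$ is continuous and $G$ has bounded variation (it is bounded and piecewise $C^1$ with finitely many jumps), the identity $\int_a^b f\,dG+\int_a^b G\,df=f(b)G(b)-f(a)G(a)$ is available, and since $f$ is monotone the Stieltjes measure $df$ has a constant sign.

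First I would prove part (1). Put $H(x):=G(x)-G(a)$, so $H(a)=0$ and the hypothesis $\max_{y\in[a,b]}(G(y)-G(a))=G(b)-G(a)$ is exactly $H(x)\le H(b)$ for all $x\in[a,b]$. Integration by parts gives $\int_a^b f\,dH=f(b)H(b)-\int_a^b H\,df$. As $f$ is decreasing, $-df$ is a nonnegative measure of total mass $f(a)-f(b)$, so $H(x)\le H(b)$ yields $\int_a^b H\,(-df)\le H(b)(f(a)-f(b))$, i.e. $-\int_a^b H\,df\le H(b)(f(a)-f(b))$. Substituting, $\int_a^b f\,dG=\int_a^b f\,dH\le f(b)H(b)+H(b)(f(a)-f(b))=f(a)H(b)=f(a)(G(b)-G(a))$, the claimed bound.

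The other three cases are the same computation with signs flipped. In part (2), $f$ increasing gives $df\ge0$ and the hypothesis gives $H(x)\ge H(b)$, whence $\int_a^b H\,df\ge H(b)(f(b)-f(a))$ and $\int_a^b f\,dH=f(b)H(b)-\int_a^b H\,df\le f(a)H(b)$. For parts (3) and (4) I would instead anchor at the right endpoint, using $\tilde H(x):=G(x)-G(b)$, so that $\tilde H(b)=0$ and $\int_a^b f\,dG=-f(a)\tilde H(a)-\int_a^b \tilde H\,df$; reading the extremum conditions there as the natural duals $\max_y(G(b)-G(y))=G(b)-G(a)$ and $\min_y(G(b)-G(y))=G(b)-G(a)$ (the displayed ``$G(b)-G(a)$'' inside the $\max$/$\min$ appears to be a typo), they say respectively $\tilde H(x)\ge\tilde H(a)$ and $\tilde H(x)\le\tilde H(a)$, and the same one-line estimate on $\int_a^b\tilde H\,df$ gives $\int_a^b f\,dG\le -f(b)\tilde H(a)=f(b)(G(b)-G(a))$. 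Equivalently, (2)--(4) all follow from (1) via the symmetries $(f,G)\mapsto(-f,-G)$ and $x\mapsto a+b-x$ (with the attendant reflection of $G$), which interchange increasing and decreasing and the roles of the two endpoints.

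The only place demanding care --- the ``hard part,'' such as it is --- is the integration theory when $G$ fails to be everywhere differentiable: one must work with Stieltjes integrals, justify the integration-by-parts identity for $G$ of bounded variation against a continuous $f$, and check that the step ``$\int_a^b H\,d\mu\le(\sup_{[a,b]}H)\,\mu([a,b])$'' is valid for the one-signed measure $\mu=\pm df$ including the endpoint masses. If $G$ has jumps they contribute to $\int_a^b f\,dG$ on the left-hand side and the displayed bounds still hold verbatim; once this is pinned down, the four cases are pure sign bookkeeping.
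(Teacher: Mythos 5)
Your proof is correct, and it takes a genuinely different route from the paper's. The paper proves Item (1) by working directly with the density $g=\tfrac{dG}{dx}$: it assumes $g$ has finitely many sign changes, iteratively replaces each trailing negative--positive pair of $g$ by zero on an interval chosen so the replaced mass cancels (monotonicity of $f$ guarantees each replacement can only increase $\int f g$), arrives at a nonnegative density $g_n$ for which $\int f g_n\le f(a)\int g_n$ is immediate, and then handles general $G$ by $C^1$-approximation; Items (2)--(4) follow by the substitutions $(f,g)\mapsto(-f,-g)$ and $x\mapsto a+b-x$, exactly as in your symmetry remark. Your Stieltjes integration-by-parts argument replaces the whole truncation-plus-approximation machinery with the one-line bound $\int H\,d\mu\le(\sup H)\,\mu([a,b])$ for the one-signed measure $\mu=\pm df$, which is shorter, avoids the unstated ``finitely many sign changes'' hypothesis, and copes natively with jumps of $G$ --- relevant here, since in the application of Lemma \ref{L11-11} one of the two distribution functions is empirical and the integrals really are measure-theoretic, so your reading of $\int f\,\tfrac{dG}{dx}\,dx$ as $\int f\,dG$ is the one actually needed. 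The price is the appeal to integration by parts for a continuous $f$ against a $G$ of bounded variation (unproblematic, as $f$ has no jumps), plus the observation --- which you make and which applies equally to the paper's proof --- that the displayed conditions in Items (3) and (4) must be read as $\max_y\,(G(b)-G(y))$ and $\min_y\,(G(b)-G(y))$ for the statement to be non-vacuous and to match its use in the proof of Lemma \ref{L11-11}. One small caveat on both proofs: ``bounded and differentiable off a finite set'' does not literally imply bounded variation, so some piecewise regularity (which holds in the application) is being tacitly assumed either way.
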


\begin{proofof}{Lemma \ref{L11-11T}}
We first show Item (D1).
We assume that $G$ is $C^1$-continuous.
We assume that there are points 
$a=a_1, a_2, \ldots, a_{2n}=b$ such that
$g(x)\le 0$ for $a_{2i}<x<a_{2i+1} $
and
$g(x)\ge 0$ for $a_{2i+1}<x<a_{2i+2} $.
So, the assumption of this lemma implies that
$\int_{a_{2n-2}}^{a_{2n-1}} g(x) dx
+ \int_{a_{2n-1}}^{a_{2n}} g(x) dx \ge 0$.
We choose a point $a_{2n+1}'$ such that
$\int_{a_{2n-2}}^{a_{2n-1}} g(x) dx
+ \int_{a_{2n-1}}^{a_{2n-1}'} g(x) dx = 0$.
We define the function $g_1$ as
\begin{align}
g_1(x):=
\left\{
\begin{array}{ll}
g(x) & \hbox{ for }x< a_{2n-2}\\
0 & \hbox{ for } a_{2n-2} \le x \le a_{2n-1}' \\
g(x) & \hbox{ for } a_{2n-1}' \le x  .
\end{array}
\right.
\end{align}
Then, we have
\begin{align}
\int_a^b f(x) g(x) dx
\le 
\int_a^b f(x) g_1(x) dx.
\end{align}
and
$g_1(x) \ge 0$ for $a_{2n-3} < x < b  $.
Rewriting $a_{2n}$ by $a_{2n-2}$, we repeat the above process for $g_1$
and denote the resultant function by $g_2$.
Repeating this procedure, we define $g_1,g_2,\ldots, g_n$.
So, $g_n$ satisfies $g_n(x)\ge 0$ on $(a,b)$ and
\begin{align}
\int_a^b f(x) g(x) dx
\le 
\int_a^b f(x) g_n(x) dx.
\end{align}
Since 
\begin{align}
\int_a^b f(x) g_n(x) dx
\le 
f(a)\int_a^b g_n(x) dx,
\end{align}
we obtain the desired statement of Item (D1)
when $G$ is $C^1$-continuous.
In the general case,
$G$ can be approximated by a $C^1$-continuous function
satisfying the desired conditions.
So, we obtain the desired statement of Item (D1) in the general case.

Applying $-f$ and $-g$ to Item (D1), we obtain Item (D2).
Applying $f(a+b-x)$ and $g(a+b-x)$ to Items (D1) and (D2), we obtain Items (D3) and (D4), respectively.
\end{proofof}

\begin{proofof}{Lemma \ref{L11-11}}
To show \eqref{11-22-7}, it is enough to discuss the case when $f(0)=0$
because the general case can be obtained by substituting $f(x)-f(0)$ into $f$.
Also, it is enough to show that
\begin{align}
& \int_{-\infty}^{\infty} f(x) P_X(dx) - \int_{-\infty}^{\infty} f(x) P_{X'}(dx) \nonumber \\
\le &
2 \sup_x (f(x)-f(0))\sup_{x}|F_X(x)-F_{X'}(x)|.\Label{11-22-7B}
\end{align}
We choose
\begin{align}
x_1&:= \argmax_{x \in (-\infty,0]} (F_X(x)-F_{X'}(x)) \\
x_2&:= \argmin_{x \in [0,\infty) } (F_X(x)-F_{X'}(x)) .
\end{align}
We choose $R$ sufficiently large.

Items (D1) and (D4) of Lemma \ref{L11-11T} imply
\begin{align}
\int_{-R}^{x_1} f(x) P_X(dx) - \int_{-R}^{x_1} f(x) P_{X'}(dx)
\le &
f(-R) [( F_X(x_1)-F_X(-R) )- ( F_{X'}(x_1)-F_{X'}(-R) )] \\
\int_{x_1}^{0} f(x) P_X(dx) - \int_{x_1}^0 f(x) P_{X'}(dx)
\le &
f(0) [( F_X(0)-F_X(x_1) )- ( F_{X'}(0)-F_{X'}(x_1) )] =0,
\end{align}
respectively.
Similarly, Items (D2) and (D3) of Lemma \ref{L11-11T} imply
\begin{align}
\int_{x_2}^{R} f(x) P_X(dx) - \int_{x_2}^{R} f(x) P_{X'}(dx)
\le &
f(R) [( F_X(R)-F_X(x_2) )- ( F_{X'}(R)-F_{X'}(x_2) )] \\
\int_0^{x_2} f(x) P_X(dx) - \int_0^{x_2} f(x) P_{X'}(dx)
\le &
f(0) [( F_X(x_2)-F_X(0) )- ( F_{X'}(x_2)-F_{X'}(0) )] =0,
\end{align}
respectively.
Combining them, we have
\begin{align}
&\int_{-R}^{R} f(x) P_X(dx) - \int_{-R}^{R} f(x) P_{X'}(dx) \nonumber \\
\le &
f(-R) [( F_X(x_1)-F_X(-R) )- ( F_{X'}(x_1)-F_{X'}(-R) )] \nonumber \\
&+
f(R) [( F_X(R)-F_X(x_2) )- ( F_{X'}(R)-F_{X'}(x_2) )] .
\end{align}
Taking the limit $R \to \infty$, we have
\begin{align}
&\int_{-\infty}^{\infty} f(x) P_X(dx) - \int_{-\infty}^{\infty} f(x) P_{X'}(dx)
\nonumber \\
\le &
\lim_{R\to \infty}
f(-R) [F_X(x_1)-  F_{X'}(x_1)] 
+
\lim_{R\to \infty} f(R) [F_{X'}(x_2)-F_X(x_2) ] ,
\end{align}
which implies \eqref{11-22-7B}.
\end{proofof}

%\subsection{Proof of Theorem \ref{11-17T}}\Label{s11-17-3}

\section{Discussion}\Label{s6}
We have proposed the noise injecting attack as a very strong attack to secure wireless communication, in which,
Eve can control everything except for the neighborhood of Alice and Bob's detector. 
Under a reasonable assumption (A1)-(A6) for the performance of Eve's detector,
i.e., under the model \eqref{4-24-1X} and \eqref{4-24-2X},
we have constructed a secure key generation protocol by using backward reconciliation over the noise injecting attack.
For this analysis, as Theorem \ref{T2X},
in the noise injecting attack 
we have shown that Eve's information can be reduced to the single random variable $E'$ as
Theorem \ref{T2X}.

Also, as Lemma \ref{L1},
when the additive noise generated during the transmission is subject to a Gaussian distribution,
we have derived a necessary and sufficient condition \eqref{4-24-3} of the coefficients $a_B$ $a_E$, and $b_E$ and the variance of its Gaussian random variable
for realizing greater correlation coefficient, i.e., greater mutual information between Alice and Bob than that between Bob and Eve
under a spatial condition for Eve.
Even when it is difficult to realize the condition \eqref{4-24-3},
we have proposed the post selection method, in which, we choose only the case when the condition \eqref{4-24-3} holds by utilizing the stochastic behavior of 
the LHS of \eqref{4-24-3}.

%In this paper, we have not discussed the error of the estimation and testing of normality because this evaluation requires more pages.
%Since this evaluation is more important in the real analysis,  it is remained as an important future study.
%We have also derived rigorous and exact security evaluation for finite block-length final keys quantitatively when we employ the reverse information reconciliation.
%Since the calculation complexity of the upper bounds \eqref{28-1} and \eqref{28-2} do not depend on the block length $n$, this bound can be applied to any size of final keys.
%Further, since the calculation complexity of realization of required random hash function is only $O(n\log n)$ \cite{H-T}, the proposed protocol is implementable.
%Therefore, since this model allows Eve to have sufficiently strong power to eavesdrop Alice's signal,
%this result enables us to implement, with the current technology, the secure wireless communication whose security is guaranteed in the sense of our model.
%Therefore, we can expect a future implementation of our protocol in a real secure wireless communication. 

To identify the channel between Alice and Bob,
our protocol contains the estimation of channel.
In particular, we do not assume that the additive noise generated during transmission 
is a Gaussian random variable.
So, we need a non-parametric estimation, which has been resolved by Kolmogorov-Smirnov test \cite{Kolmogorov,Smirnov}.
Combining a suitable exponential upper bound for the leaked information
and the above error evaluation,
we derived finite-length security analysis as Theorem \ref{11-17T}.
As Fig \ref{leaked1}, we give a numerical calculation for the upper bound given in 
\eqref{28-2X} in  a typical example.

%For a realization of secure wireless communication,
%we need to investigate whether the obtained necessary and sufficient condition for secure communication holds with realizable communication.
%We need to derive the threshold for the minimum distance between Alice and Eve to satisfy the condition.
%To clarify the threshold, we need numerical analysis based on real wireless communication model as the next step.

%\section{Beyond quasi static condition}\Label{s6b}
When Eve breaks the quasi static condition,
she can change the artificial noise dependently of the pulse.
In this case, if Alice and Bob pre-agree which pulses are used for samples,
Eve can insert the large artificial noise only to the non-sampling pulses
so that the condition \eqref{4-24-3} does not hold in the non-sampling pulses 
without detection by Alice and Bob.
Then, Eve can succeed in eavesdropping without detection by Alice and Bob. 
Currently, we might not have such a technology, however, we cannot deny such an eavesdropping in future.
Fortunately, in our protocol, Alice and Bob do not fix the sampling pulse priorly,
they choose the sample pulse after the transmission from Alice to Bob as the random sampling, whose security guaranteed by authentication.
Then, Eve cannot selectively insert the artificial noise.
%In this solution, the {\it post selection} by Alice and Bob works effectively.
The same effect is utilized in BB84 protocol of quantum key distribution (QKD) \cite{BB84}.
However, the errors in Bob's observations is not necessarily subject to an identical and independent distribution.

In the case of QKD, in the early stage of their analysis, 
they assume that the errors is subject to an identical and independent distribution.
The attack under this condition is called the collective attack in the QKD \cite{BBBGM}.
Later, they removed this assumption by using hypergeometric distribution \cite{S-P,H06}
because the behavior of this random sampling in the discrete variables case can be discussed by 
hypergeometric distribution.
However, since our system employs the continuous variable, this type evaluation is not so easy.
Therefore, removing this assumption in our case
is beyond the focus of this paper and is an interesting future problem.

Section \ref{S5E} made numerical calculations only when 
$Y$ is subject to the Gaussian distribution.
In a practice, there is a possibility that $Y$ does not obey the Gaussian distribution.
Hence, it is another interesting future study to make numerical calculations 
for another type of distribution for $Y$ like Section \ref{S5E}.

Further, there still exits a possibility that Eve can break the assumptions of our model.
Such a possibility might be realized in the following two cases. 
(i) Eve can concentrate her resource to break the assumptions.
If Eve prepares a very expensive measurement device or too many expensive measurement devices, the assumptions are broken.
(ii) Eve luckily gets a very large value of $ a_E$ due to the interference effect.
To resolve this problem, the forthcoming paper \cite{HOKC} proposes to combine secure network coding \cite{Cai2002,Cai06a,Cai06,HLKMEK,JLHE}
and secure wireless communication.
That is, we consider secure network coding whose communications on the edges are realized by our secure wireless communication.
In this case, for an eavesdropping, Eve has to break the assumptions of our model in multiple wireless communication channels.
For the case (i), Eve has to distribute her devices in these communication channels.
This combination increases the difficulty of the eavesdropping.
For the case (ii), Eve needs to be lucky in multiple wireless communication channels.
Usually, the event of a large value $a_E$ might be regarded to be independent of the same event with the different point.
Hence, the above possibility becomes very small.
Therefore, we can decrease the possibility that Eve makes eavesdropping
by combining secure network coding with our result.

\section*{Acknowledgments}
The author is very grateful to Professor Hideichi Sasaoka and Professor Hisato Iwai for helpful discussions and informing the references \cite{SC,Trappe,Zeng,WX}.
He is also grateful to
Professor \'{A}ngeles Vazquez-Castro,
Professor Matthieu Bloch,
Professor Shun Watanabe, Professor Himanshu Tyagi, and Dr. Toyohiro Tsurumaru for helpful discussions and comments.
The works reported here were supported in part by 
the JSPS Grant-in-Aid for Scientific Research (B) No. 16KT0017 and  (A) No.17H01280,
the Okawa Research Grant
and
Kayamori Foundation of Informational Science Advancement.

%He is partially supported by a MEXT Grant-in-Aid for Scientific Research (A) No. 23246071.
%and the National Institute of Information and Communication Technology (NICT), Japan.
%The Centre for Quantum Technologies is funded by the Singapore Ministry of Education and the National Research Foundation as part of the Research Centres of Excellence programme. 

\bibliographystyle{IEEE}

\end{document}